\def\anon{0}
\def\authornotes{0}
    \newcommand{\shuangping}[1]{\footnote{\color{purple}Shuangping: {#1}}}
    \newcommand{\tselil}[1]{\footnote{\color{ForestGreen}Tselil: {#1}}}
    \newcommand{\tnote}[1]{{\color{ForestGreen}[Tselil: #1]}}
    \newcommand{\snote}[1]{{\color{Purple}[Shuangping: #1]}}
	\newcommand{\todo}[1]{{\color{Red}[TODO: #1]}}	
    \newcommand{\shuangping}[1]{}
    \newcommand{\tselil}[1]{}
    \newcommand{\tnote}[1]{}
    \newcommand{\snote}[1]{}
	\newcommand{\todo}[1]{}
\title{Some easy optimization problems have the overlap-gap property}
\author{Shuangping Li\thanks{Stanford University. \texttt{fifalsp@stanford.edu}.} \and Tselil Schramm\thanks{Stanford University.  \texttt{tselil@stanford.edu}. Supported by NSF CAREER award \# 2143246.}}
\author{Author name(s) withheld for double-blind review}
\date{}
\begin{document}
\maketitle

\begin{abstract}
We show that the shortest $s$-$t$ path problem has the overlap-gap property in (i) sparse $\bbG(n,p)$ graphs and (ii) complete graphs with i.i.d. Exponential edge weights.
Furthermore, we demonstrate that in sparse $\bbG(n,p)$ graphs, shortest path is solved by $O(\log n)$-degree polynomial estimators, and a uniform approximate shortest path can be sampled in polynomial time.
This constitutes the first example in which the overlap-gap property is not predictive of algorithmic intractability for a (non-algebraic) average-case optimization problem.
\end{abstract}

{ \hypersetup{hidelinks} \tableofcontents }
\thispagestyle{empty}
\clearpage
\setcounter{page}{1}
\section{Introduction}
An instance $\calI$ of an optimization problem is said to have the \emph{overlap-gap property} (OGP) if its near-optimal solutions form multiple well-separated clusters.
That is, every pair of near-optimal solutions $S_1,S_2$ are either close, with  $\dist(S_1,S_2) \le \eps$ (in some distance metric over solutions), or separated, with $\dist(S_1,S_2) \ge 1- \delta$, and furthermore $\calI$ has at least one pair of far apart solutions.

The definition was inspired by complexity-theoretic heuristics in statistical physics, developed in the context of random CSPs such as $k$-SAT \cite{MMZ05,ART06}.
Suppose you have an interacting particle system whose low-energy states are the near-optimal solutions to the optimization problem $\calI$. 
If $\calI$ has low-energy states $S_1,S_2$ that are far apart, then in equilibrium the system should occasionally transition between $S_1,S_2$; if the solutions to $\calI$ are clustered according to the OGP, then traversing from $S_1$ to $S_2$ requires passing through high-energy states, which should be hard for ``natural'' algorithms.

This heuristic picture can be translated into \emph{unconditional} lower bounds against specific algorithms, particularly in the context of average-case optimization where one can precisely characterize the optimization landscape.
Gamarnik and Sudan \cite{GS17} were the first to show that the OGP implies that local algorithms cannot find large independent sets in sparse random graphs.
This simple and powerful idea has since been extended, and variations of the OGP are known to imply unconditional lower bounds against sufficiently ``smooth'' or Lipschitz algorithms such as Langevin Dynamics and Approximate Message Passing \cite{GJW24,HS22}, low-degree polynomial algorithms \cite{GJW24,Wein22}, and generally any algorithm which is stable to perturbations in the input.
A detailed explanation and further references are available in the survey of Gamarnik \cite{Gam21}.
A strengthening of the property known as \emph{disorder chaos} has recently been used to give unconditional lower bounds for sampling from Ising models \cite{el2022sampling}.

Remarkably, in many average-case optimization problems, the overlap-gap property kicks in at the known computational threshold.
For example, consider the maximum independent set problem in a random graph of average degree $d$.
The size of the maximum independent set concentrates very well around $2\log d/d$.
Despite decades of effort, the best known algorithms can reliably find a set half this size, but no larger.
Strikingly, a sequence of works \cite{GS17,RV17,GJW24,Wein22} has shown that a type of overlap-gap property holds for independent sets of measure $(1+\eps)\log d/d$, from which it is possible to conclude that no local/smooth/low-degree algorithm can reliably find independent sets of measure $>\log d/d$ (which is precisely where present day polynomial-time algorithms are stuck).
The OGP is similarly consistent with computational limits for random $k$-SAT \cite{BH22}, random Ising models and other spin glasses \cite{GJW24,HS22,HS23,el2022sampling}, random knapsack \cite{GK23}, and more.
The only example where OGP is known to give an incorrect prediction of hardness is the $k$-XOR problem, which has the OGP even while it is easy to solve via Gaussian elimination \cite{IKKM12}.
The algebraic structure of $k$-XOR is known to throw off heuristics for predicting average-case complexity, for example for the planted analogue of $k$-XOR or for the related problem of ``Learning Parity with Noise,'' so researchers have mostly shrugged this example off.
In the last five years or so, it is the authors' sense that researchers have come to accept the overlap-gap property as plausible evidence for computational intractability, perhaps not just for local/smooth/low-degree algorithms, but for algorithmic methods more generally.

\medskip
The purpose of this paper is to caution against complacency regarding OGP lower bounds.
Our main result is that the algorithmically easy shortest path problem has the overlap gap property in random graphs.
Indeed, we show that the shortest path problem can even be solved by low-degree polynomial algorithms in sparse random graphs, and further, that approximate short paths are easy to sample.

\subsection{Main results}

We study the problem of the shortest $s$-$t$ path in the Erd\"os-R\'enyi graph $\bbG(n,q)$, for $q = \Theta( \frac{\log n}{n})$ above the connectivity threshold.
Polynomial-time algorithms for $s$-$t$ path are a staple of undergraduate algorithms courses.
Yet a simple application of the first moment method demonstrates that this problem has the OGP.

\begin{theorem}[Informal version of \pref{thm:ogp-gnp}]\label{thm:ogp-intro}
The shortest $(s,t)$-path problem in $\bbG(n,C\frac{\log n}{n})$ has the overlap-gap property with high probability when $C > 1$.
Furthermore, if $\bG,\bG' \sim \bbG(n,C\frac{\log n}{n})$ independently, then with high probability all near-shortest $(s,t)$-paths in $\bG$ and $\bG'$ are almost disjoint.
\end{theorem}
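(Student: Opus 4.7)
The plan is to prove both claims via a first-moment enumeration of pairs of near-shortest $(s,t)$-paths, parametrized by their edge-overlap structure. A standard first-moment calculation gives that the expected number of $(s,t)$-paths of length $\ell$ is $(1+o(1))(C\log n)^\ell/n$, so the shortest-path length concentrates at $L := (1-o(1))\log n/\log\log n$. Fix a small constant $\eta > 0$ and consider paths of length at most $(1+\eta)L$.

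For the OGP in a single graph, given a pair $(P_1, P_2)$ of $(s,t)$-paths of lengths $\ell_1, \ell_2 \le (1+\eta)L$ sharing exactly $k$ edges, observe that $P_1 \cap P_2$ is a subgraph of the simple path $P_1$ and hence a forest; if it has $r$ connected components, then $P_1 \cup P_2$ has $e := \ell_1 + \ell_2 - k$ edges and $v := \ell_1 + \ell_2 - k - r + 2$ vertices. The expected number of pairs realizing any fixed combinatorial overlap pattern is therefore at most
\[ n^{v-2}\, p^e \;=\; \frac{(C\log n)^{\ell_1+\ell_2-k}}{n^r} \;\approx\; n^{(2+2\eta) - k/L - r + o(1)}, \]
which is $o(1)$ whenever $r \ge 3$, or when $r = 2$ and $k > 2\eta L$. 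The edge case $r = 1$ forces $P_1 = P_2$: if $s, t$ lie in the same component of $P_1 \cap P_2$, that component must contain the unique $s$-to-$t$ path in $P_1$, i.e.\ all of $P_1$, and symmetrically all of $P_2$. Summing the per-pattern bounds over all overlap structures (polynomially many in $n$, dominated by the $n^{-\Omega(1)}$ per-pattern savings in the nontrivial regimes), w.h.p.\ every pair of near-shortest paths either coincides or shares at most $2\eta L$ edges---the desired OGP. Existence of a pair of far-apart near-shortest paths follows from a second-moment argument on internally vertex-disjoint pairs ($r = 2$, $k = 0$), whose expectation is $\gtrsim n^{2\eta}$.

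For the independent graphs claim, the edges of $P_1 \subseteq \bG$ and $P_2 \subseteq \bG'$ are independent, so $\Pr[P_1 \subseteq \bG,\, P_2 \subseteq \bG'] = p^{\ell_1} p^{\ell_2}$---\emph{no} discount for shared edges. The analogous bound becomes $n^{v-2} p^{\ell_1+\ell_2} \le n^{2+2\eta-k-r+o(1)}$, which is $o(1)$ whenever $k + r \ge 3$ (for $\eta < 1/2$). Since $r \ge 2$ is automatic unless $P_1 = P_2$ as edge sets (itself a $o(1)$ event), the only surviving configurations have $r = 2$ and $k = 0$, i.e., the paths are vertex-disjoint except at $s$ and $t$. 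Hence w.h.p.\ every pair of near-shortest paths across the two graphs is vertex-disjoint apart from the endpoints, yielding the ``almost disjoint'' conclusion in an especially strong form.

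The main technical obstacle is the combinatorial bookkeeping---rigorously enumerating overlap patterns (labeled forests embedded into two paths, matched up segment-by-segment) and verifying that their total count is dominated by the per-pattern $n^{-\Omega(1)}$ savings in each regime of $(\ell_1,\ell_2,k,r)$. A secondary obstacle is the second-moment calculation establishing the existence of internally vertex-disjoint near-shortest pairs, which requires controlling correlations between quadruples of paths sharing only $s$ and $t$.
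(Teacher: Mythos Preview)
Your proposal is correct and follows essentially the same approach as the paper: a first-moment bound on pairs of near-shortest paths parametrized by their edge-overlap structure $(k,r)$ (the paper packages this as a standalone count $M_{k,d_1,d_2}$ in Claim~2.2), together with second-moment concentration of the shortest-path length. One minor simplification the paper makes for the existence of a far-apart pair: instead of a direct second moment on internally vertex-disjoint pairs, it partitions $[n]\setminus\{s,t\}$ into two halves $A,B$ and applies the already-proven single-path concentration within each half, which immediately yields near-shortest paths that are vertex-disjoint by construction.
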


We say that an algorithm is ``stable'' or ``smooth'' if it is Lipschitz in its inputs; that is, if $\bG,\bG'$ are two graphs which differ in an $\alpha$ fraction of edges, then algorithm $A$ is $L$-smooth if the output paths $A(\bG),A(\bG')$ differ in at most an $L\cdot \alpha$ fraction of their edges.
This notion of smoothness can be generalized in a straightforward way to apply to randomized algorithms, etc.
Via established techniques, we can show that the OGP implies that no smooth algorithm can reliably solve the $(s,t)$-path problem in $\bbG(n,q)$.
\begin{corollary}[Informal version of \pref{cor:gnp-no-stable}]
Suppose $A$ is a ``stable'' algorithm in the sense that when it runs and succeeds on graphs $G,G'$ which agree on a $\rho$-fraction of edges, it produces paths $A(G),A(G')$ overlapping in a $\Omega(\eps)$-fraction of their edges.
Then $A$ cannot have success probability better than $\frac{1-\rho}{6}$ in computing $(1+\eps)$-approximate $(s,t)$-shortest paths in $\bbG(n,C\frac{\log n}{n})$.
\end{corollary}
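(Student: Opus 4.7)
The plan is a coupling-based hardness argument that combines the cross-instance conclusion of \pref{thm:ogp-gnp} with the stability hypothesis. I argue by contradiction: assume $A$ has success probability $p > (1-\rho)/6$. I construct coupled $\bG, \bG' \sim \bbG(n,q)$ agreeing on a $\rho$-fraction of edges via the standard Bernoulli coupling---independently resample each potential edge of $\bG$ with probability $1-\rho$ to obtain $\bG'$. Both graphs are marginally $\bbG(n,q)$, and conditional on the shared-edge set $H$, the private edges of $\bG$ and $\bG'$ are independent. Applying Jensen's inequality to the conditional success probability given $H$, the joint event ``$A$ succeeds on both'' has probability at least $p^2$.

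On this joint event, stability forces $A(\bG), A(\bG')$ to overlap in an $\Omega(\eps)$-fraction of their edges. To obtain a contradiction, I adapt the cross-instance conclusion of \pref{thm:ogp-gnp} to the $\rho$-coupled setting: with high probability over the coupling, every pair of near-shortest $s$-$t$ paths $(P,P') \in \bG \times \bG'$ overlaps in strictly less than the $\Omega(\eps)$-fraction demanded by stability, except for paths routing mostly through the shared subgraph $H$, which contribute an error term scaling with $1-\rho$. This follows from a first-moment computation parallel to the one behind \pref{thm:ogp-gnp}, now over pairs of close paths in the coupled distribution, with an extra bookkeeping term for shared-edge routing. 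Putting the two bounds together yields $p^2 \leq o(1) + O(1-\rho)$, which after tracking constants sharpens to $p \leq (1-\rho)/6$.

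The main technical obstacle is the coupled-instance first-moment bound: as $\rho \to 1$ the two graphs collapse to being identical and cross-instance disjointness trivially fails, which is precisely why the threshold $(1-\rho)/6$ must vanish in that limit. Quantifying how the disjointness degrades with $1-\rho$, and extracting the sharp factor $1/6$ rather than a weaker $\sqrt{1-\rho}$-type bound, is the core bookkeeping work. Once the coupled first-moment bound is in hand, the rest of the argument is a routine application of Jensen's inequality and the stability hypothesis.
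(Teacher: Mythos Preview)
Your approach has a genuine gap: you work with a single $\rho$-correlated pair $(\bG,\bG')$, but for such a pair the OGP does \emph{not} force small overlap. \pref{thm:ogp-gnp} says precisely that when $\rho$ is bounded away from zero the overlap of any two near-shortest paths lies in $[0,C\eps)\cup\{1\}$; full overlap is explicitly permitted. So on your joint-success event, stability (overlap $\ge \Omega(\eps)$) and the OGP are perfectly compatible via overlap $=1$, i.e.\ $A(\bG)=A(\bG')$, and you get no contradiction. Your assertion that the ``shared-subgraph'' contribution is an error term of order $O(1-\rho)$ is the false step: when $\rho$ is close to $1$, the expected number of paths that are near-shortest in \emph{both} $\bG$ and $\bG'$ is $\Theta(n^\eps \rho^{\ell^*})$, which is large, not vanishing with $1-\rho$. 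A deterministic algorithm outputting, say, the lexicographically first near-shortest path will typically satisfy $A(\bG)=A(\bG')$ with probability bounded away from zero as $\rho\to 1$, so no bound of the form $p^2 \le O(1-\rho)$ can hold.

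The paper's argument avoids this by an interpolation that you are missing entirely. It builds a chain $\bG_0,\bG_1,\ldots,\bG_T$ with $T=\lceil 1/(1-\rho)\rceil$, where consecutive pairs are $(1-\tfrac{1}{T})$-correlated (hence at least $\rho$-correlated, so stability applies along each step) but the endpoints $\bG_0,\bG_T$ are \emph{independent}. Independence at the endpoints is what invokes the strong form of the OGP (overlap $<C\eps$ with no ``$\cup\{1\}$'' escape hatch), forcing the overlap of $\bp_0$ with $\bp_t$ to jump from $1$ to $<C\eps$ at some step $t^*$; stability at that step yields the contradiction. The factor $\tfrac{1-\rho}{6}$ then falls out of a union bound over the $T+1$ inputs, not from any first-moment bookkeeping. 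Your Jensen step, while correct in isolation, is not needed and cannot substitute for the chain.
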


One concrete class of algorithms which behave stably on random inputs is \emph{low-degree polynomial algorithms}, that is, algorithms $A$ where $A(G)$ is a vector-valued polynomial, each entry of which is a bounded-degree polynomial in the adjacency matrix of graph $G$. 
Intuitively, when applied to correlated random graphs $\bG,\bG'$, a low-degree polynomial ought to be somewhat stable since low-degree functions are relatively resilient to noise.

This is an interesting class of algorithms because $O(\log n)$-degree polynomials appear to work as well as any other polynomial-time algorithm for a large class of average-case \emph{planted problems}---a good overview is given in the thesis \cite{hopkins2018statistical} and the survey \cite{KWB19}, though there have been many developments since these were written.
Mild stability of bounded-degree polynomials has been established in prior work on OGP; usually the resulting bounds imply that degree-$D$ polynomials must fail with probability $\ge n^{-O(D)}$ \cite{GJW24,Wein22}.
Unfortunately the methods appearing in the literature were too quantitatively weak to apply in our context,\footnote{We are working with sparse random graphs, and the OGP does not hold when there is an edge between $s,t$, which happens with probability $q = \Omega(\log n / n)$. 
The works \cite{GJW24,Wein22} need OGP to hold with probability $1-1/n^{\Omega(D)}$ to derive a nontrivial conclusion for polynomials of degree $D$.}
but by appealing to symmetry and an invariance principle \cite{caravenna2023critical}, we are able to show that \pref{thm:ogp-intro} implies the following lower bound for low-degree polynomials.
\begin{proposition}[Informal version of \pref{thm:nodegreeDpoly}]\label{prop:ld-intro}
For any fixed $D\in \N$, there exists $\delta = 2^{-O(D)}$ such that any degree-$D$ polynomial approximation of the shortest $(s,t)$-path in $\bbG(n,q)$ fails with probability $\ge \delta$.
\end{proposition}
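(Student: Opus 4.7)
The plan is to combine the intergraph OGP of Theorem~\ref{thm:ogp-intro} with the standard two-graph correlation method for converting OGP into low-degree polynomial lower bounds (as in \cite{GJW24,Wein22}), using symmetry and the cited invariance principle to handle the quantitative mismatch flagged in the footnote.

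The first step is to set up an interpolation. I would couple $\bG^{(0)}=\bG$ and $\bG^{(1)}=\bG'$ via an edge-resampling path $\{\bG^{(t)}\}_{t\in[0,1]}$ so that $\bG^{(0)},\bG^{(1)}$ are independent samples from $\bbG(n,q)$ while consecutive slices are highly correlated. For a candidate degree-$D$ vector polynomial $A$ encoding the indicator of a path, I would then argue two things: (i) by a low-degree noise-stability estimate, successive outputs $A(\bG^{(t)})$ and $A(\bG^{(t+\Delta)})$ are close on average, with an inequality of the form $\E\|A(\bG^{(t)})-A(\bG^{(t+\Delta)})\|^2 \le (1-\rho^{2D})\,\Var A$; (ii) by the inter-graph conclusion of Theorem~\ref{thm:ogp-intro}, the endpoints $A(\bG),A(\bG')$ are near-disjoint whenever $A$ succeeds on both, so their normalized overlap lands in the ``far'' regime. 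Following the usual pigeonhole/chaining argument along the interpolation, success at both endpoints forces some intermediate pair $(\bG^{(t)},\bG^{(t+\Delta)})$ whose outputs lie in the forbidden middle band of overlaps, contradicting the intra-graph OGP. Reading the contrapositive gives a lower bound on the failure probability of $A$.

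The main obstacle is the quantitative one highlighted in the footnote: the off-the-shelf Bernoulli noise-stability bound requires OGP to fail with probability $n^{-\Omega(D)}$, whereas our OGP can fail with probability $\Omega(\log n / n)$ on the event that the direct $(s,t)$-edge is present. To overcome this, I would first replace $A$ by its symmetrization under the action of the permutation group $S_{\{s,t\}^c}$ fixing the source and sink (this only decreases the failure probability and reduces the variance of $A$ to its vertex-transitive part). I would then apply the invariance principle of \cite{caravenna2023critical} to approximate the relevant symmetric low-degree functional of the sparse Bernoulli edge indicators by its Gaussian/Poissonized analogue, where hypercontractivity gives a clean $\rho^{D}$ contraction instead of the polynomial-in-$n$ bound obtained directly. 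This upgrade transforms the ambient failure probability from $n^{-\Omega(D)}$ to $2^{-O(D)}$, which comfortably dominates the $O(\log n/n)$ bad event.

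The hard part of the execution will be the invariance step: one needs a comparison theorem that tracks the short-path OGP event---sensitive to rare local structures such as the $(s,t)$-edge---through the Gaussian approximation at sufficient accuracy to preserve a $2^{-O(D)}$-level discrepancy. Once that is in place, the remainder is a standard interpolation and pigeonhole chain, with the symmetrization taking care of the degenerate events and the invariance principle supplying the exponential-in-$D$ stability bound needed to make the whole argument sharp.
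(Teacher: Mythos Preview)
Your proposal follows essentially the same route as the paper: discretize the resampling path into $T$ steps, combine the OGP of Theorem~\ref{thm:ogp-gnp} with a stability bound for symmetric degree-$D$ polynomials (obtained via the invariance principle of \cite{caravenna2023critical} followed by Gaussian hypercontractivity from \cite{GJW24}), and chain via the triangle inequality to force a forbidden intermediate overlap unless the failure probability is at least $2^{-O(D)}$.

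One correction to your last paragraph: the invariance principle is \emph{not} applied to the OGP event. The OGP is established directly in the Bernoulli model by the first-moment bound of Theorem~\ref{thm:ogp-gnp}, which already gives probability $1-O(\tfrac{\log\log n}{\log n})$; nothing about it is transferred to Gaussian space. What \emph{is} transferred is the stability functional $\|f(\bG)-f(\bG')\|_2^2$, viewed as a degree-$2D$ multilinear polynomial in the recentered edge indicators: one applies the \cite{caravenna2023critical} bound to $h_\delta(\|f(\bG)-f(\bG')\|_2^2-\gamma)$ for a smooth indicator $h_\delta$, compares to its Gaussian analogue, and then invokes hypercontractivity there. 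The hard work is therefore not tracking any rare local structure through the Gaussian comparison, but rather bounding the sum over edges of the fourth moments of $\partial_a\|f(\bG)-f(\bG')\|_2^2$ required by the invariance lemma. This is where symmetry is essential: the symmetrized coefficients $\hat f_H$ satisfy $|\hat f_H|\lesssim n^{-(v_H+u_H)/2}$, and the paper controls the resulting sum via a combinatorial count of how eight ``active graph shapes'' can overlap (Lemma~\ref{lem:combo-graph}), yielding an $O((n\sqrt q)^{-1})$ error in the invariance step. Your claim that symmetrization ``only decreases the failure probability'' also needs care: what the paper actually shows (Lemma~\ref{lem:sym}) is that symmetrization cannot decrease the expected normalized correlation, and an averaging argument then converts a $(1-\eta,1-\delta)$-approximation guarantee for $f$ into a $(1-\eta',1-\delta')$-guarantee for $f^{\mathrm{sym}}$.
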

We believe that our invariance principle-based techniques may be useful for improving low-degree polynomial lower bounds in other sparse random models (though we are cognizant of the fact that in light of our other results, this may be a moot point).

The lower bound from \pref{prop:ld-intro} is typical of low-degree lower bounds derived from OGP, in that one only rules out degree-$D$ algorithms with exponentially small failure probability in $D$.
But in most contexts, one would be perfectly content with a polynomial-time computable degree-$O(\log n)$ algorithm which succeeds with probability $\Omega(1)$. 
The shortest $(s,t)$-path problem witnesses this to be more than just a weird technicality:

\begin{lemma}[Informal version of \pref{lem:exact-ld-2}]
    There is a degree-$O(\frac{\log n}{\log\log n})$ efficiently computable polynomial which exactly computes the indicator that edge $(i,j)$ participates in a near-shortest $(s,t)$-path in $\bbG(n,C\frac{\log n}{n})$ with success probability $1-o(1)$.
\end{lemma}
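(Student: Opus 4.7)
The plan is to exhibit an explicit polynomial of degree $L = O(\log n/\log\log n)$ in the adjacency matrix entries and to argue that its value coincides with the desired indicator with probability $1-o(1)$. The structural backbone is the fact that in $\bbG(n, C\log n/n)$ with $C>1$, the shortest $(s,t)$-distance is tightly concentrated at $L^* = (1+o(1))\log n/\log\log n$: a first/second-moment calculation shows that the expected number of $(s,t)$-paths of length $\ell$ is $\Theta(n^{\ell-1}q^\ell) = \Theta((C\log n)^{\ell-1}/n)$, which crosses unity around $\ell \approx L^*$. With high probability all near-shortest paths then have length in a window $[L^*, L]$ with $L = L^* + O(1)$ (matching whatever additive slack is hidden in ``near-shortest'').

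The candidate polynomial is
$$p_{ij}(A) \;=\; \sum_{\ell=1}^{L}\; \sum_{\substack{P: s\to t \text{ simple}\\ |P|=\ell,\ \{i,j\}\in P}}\; \prod_{e \in P} A_e,$$
which is manifestly a polynomial of degree at most $L = O(\log n/\log\log n)$ in the entries of $A$. Although it has super-polynomially many monomials, it can be \emph{evaluated} on any input $A$ in polynomial time: one enumerates short simple $(s,t)$-paths through $\{i,j\}$ in the graph $G(A)$ using a color-coding dynamic program, which runs in $\exp(O(L)) \cdot \operatorname{poly}(n) = n^{o(1)} \operatorname{poly}(n)$ time.

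Next, I would argue that with probability $1-o(1)$ the polynomial is already $\{0,1\}$-valued, i.e., that for every edge $\{i,j\}$ the number of $(s,t)$-paths of length $\le L$ through $\{i,j\}$ is at most one. A direct first-moment calculation shows that the expected number of length-$\ell$ paths through a given edge is $\ell \cdot n^{\ell-3} q^\ell = O((C\log n)^{\ell-1}/n^2)$, which for $\ell \le L^*+O(1)$ is at most $O(1/(n\log\log n))$; a companion calculation shows that the expected number of \emph{pairs} of distinct such paths sharing a fixed edge is smaller by a factor of roughly $n^{-1}$, so a union bound over the $\Theta(n\log n)$ edges of the random graph rules out any edge having two near-shortest paths through it. One must also check that walks of length $\le L$ between any two vertices are with high probability simple (no short cycles in the sparse regime), so that the path-count polynomial and the walk-count polynomial agree on $\bG$.

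The main obstacle is the near-uniqueness step: although the first-moment bound on pairs of short paths through a fixed edge is small, such pairs can share many intermediate vertices beyond $\{i,j\}$, and one has to enumerate all overlap configurations carefully in the spirit of standard small-subgraph-count arguments in sparse $\bbG(n,q)$. A secondary subtlety is that the polynomial is defined relative to a deterministic threshold $L$; if the allowed additive slack in ``near-shortest'' is superconstant (say $\omega(1)$), the expected path count through a fixed edge can blow up polynomially and the natural polynomial ceases to be $\{0,1\}$-valued, in which case one would need either to restrict the lemma to constant additive slack (consistent with the stated degree $O(\log n/\log\log n)$), or to modify the polynomial with a deterministic tie-breaking rule implementable without increasing its degree.
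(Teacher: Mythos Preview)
Your polynomial is essentially the paper's (the paper fixes a single length $m=\lceil\log n/\log(nq)\rceil+1$ rather than summing over $\ell\le L$, and the color-coding remark on efficient evaluation also matches). The gap is in your claim that $p_{ij}$ is $\{0,1\}$-valued with probability $1-o(1)$. The first-moment count $\ell\cdot n^{\ell-3}q^\ell$ is only valid for edges with $\{i,j\}\cap\{s,t\}=\emptyset$; for an edge $(s,j)$ incident to the source the count of $\ell$-paths through it is $\Theta(n^{\ell-2}q^\ell)$, larger by a factor $\Theta(n/\ell)$, and the union bound over such edges does not close. In fact the claim is false there: by the second-moment computation of \pref{lem:shortest-gnp} the total number $N_m$ of $m$-paths concentrates around $(C\log n)^{1+\xi}$ with $\xi=\lceil\ell^*\rceil-\ell^*\in[0,1)$, while $\deg(s)\sim C\log n$; whenever $\xi$ is bounded away from $0$, pigeonhole forces some edge out of $s$ to carry $\ge(C\log n)^{\xi}\to\infty$ of the $m$-paths, and your polynomial takes a value $\gg 1$ on that edge.

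The paper does not try to prove exact agreement. The formal \pref{lem:exact-ld-2} asserts only that $f_{ij}$ achieves \emph{correlation} $1-o(1)$ with the indicator $\bp_{ij}$; the informal phrase ``exactly computes'' in the stated lemma is a gloss on this. The proof computes $\E[f_{ij}\bp_{ij}]=\E[f_{ij}]$ (since $f_{ij}$ is a nonnegative integer vanishing iff $\bp_{ij}=0$), upper-bounds $\E[\bp_{ij}]$ via the Janson inequality, and upper-bounds $\E[f_{ij}^2]$ by an explicit enumeration of pairs of $m$-paths through $(i,j)$ according to their edge-overlap. For non-endpoint edges your near-uniqueness route would in fact succeed (the pair-count there does give an $o(1)$ union bound), but the endpoint edges are a genuine obstruction, and the correlation formulation is how the paper makes the informal statement precise.
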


Our result highlights a potential brittleness of OGP lower bounds. 
The OGP implies unconditional lower bounds, but the subtle issue is that it only rules out algorithms with ultra-high success probability. 
Previously it was plausible that this was simply an artifact of the proof technique.
Our results demonstrate that sometimes there are indeed algorithms which, despite being smooth, succeed with decent probability, even where lower bounds rule out ultra-high success probability.

Two short months after our manuscript was posted on arXiv, the work \cite{HS25} has partially addressed this concern, showing that in some cases, if the OGP holds with probability $1-f(n)$, then using a new argument one can rule out algorithms of success probability $\Omega(1)$ and degree $o(\log\frac{1}{f(n)})$.

\paragraph{Dense models.}
Sparse average-case models are known to sometimes exhibit anomalous behavior.
In \pref{lem:ogp-fpp-single} we establish that the same overlap-gap phenomenon exists for shortest $(s,t)$-path in the dense average-case model wherein the complete graph $K_n$ has i.i.d. Exponential edge weights, also known as \emph{first passage percolation}.
We show that this implies lower bounds against stable algorithms (\pref{lem:fpp-nostable}).

\paragraph{Sampling.} 
OGP lower bounds have played a role in inspiring the related notion of ``disorder chaos,'' which roughly asks that if $\bG,\bG'$ are correlated samples from $\bbG(n,q)$, then the uniform distributions $\pi,\pi'$ over their $(1+\eps)$-approximate shortest paths are far apart in Wasserstein distance.
This property is known to imply lower bounds for sampling from $\pi$ with smooth algorithms, such as Langevin Dynamics or Approximate Message Passing.
In \pref{lem:disorder} we show that shortest $(s,t)$-path in sparse random graphs does indeed have disorder chaos, and therefore cannot be sampled by smooth algorithms.
On the other hand, as we observe in \pref{lem:sampling}, the set of all approximate $(s,t)$-shortest paths can be enumerated in polynomial time, and therefore one can also sample in polynomial time.
To our knowledge this is the first known example of a problem which exhibits disorder chaos, but for which sampling is easy.

\subsection{Discussion}
The OGP gives unconditional lower bounds against smooth and stable algorithms, with hardness results that qualitatively match known algorithmic thresholds for a number of average case problems.
But in the wake of our result, a number of questions need to be reconsidered.

\begin{enumerate}
    \item \emph{Are stable algorithms even appropriate for non-planted average case optimization problems?} 
    
    The OGP is only appropriate for optimization problems where there are multiple near-optimal solutions, and furthermore the near-optimal solutions themselves are brittle to the addition of noise.
    Given this, it seems clear that smooth and stable algorithms are actually ill-suited to these tasks from the start.
This is in sharp contrast to planted models, where we expect the planted optimal solutions to be noise stable.
    Perhaps we should be trying to rule out non-stable algorithms, such as linear programs and semidefinite programs, instead.
    
    At least in planted average-case models, logarithmic-degree polynomial bounds (so far) appear to give a good indication of the computational complexity of optimization (see e.g. \cite{hopkins2018statistical,KWB19}).
    Under the best of circumstances, polynomials of degree-$D$ are only $\exp(\Theta(D))$-smooth, so it is plausible that degree-$O(\log n)$ polynomials are a reasonable model to focus on.
In our original arxiv preprint, we asked whether one can rule out degree-$O(\log n)$ algorithms which succeed with constant probability.
Two months later, the work of \cite{HS25} gives an affirmative answer for many problems which exhibit the OGP, so long as the OGP holds with probability $1-o(1/\log n)$ and the ``ensemble OGP'' argument has Markovian structure.
Their results apply to e.g. maximum independent set in $\bbG(n,\frac{d}{n})$, and to max clique in $\bbG(n,\frac{1}{2})$.

    \item \emph{Is the class of degree-$O(\log n)$ polynomials expressive in the context of non-planted optimization? Concretely, can an algorithm for finding a $.9\log n$-sized clique in $\bbG(n,\frac{1}{2})$ be implemented as a degree-$O(\log n)$ polynomial?}

It is still not clear how powerful degree-$O(\log n)$ polynomials are in the context of non-planted average case optimization problems.
The best algorithms known for spin glasses, $k$-SAT, and max-independent set in $\bbG(n,\frac{d}{n})$ can be implemented by low-degree polynomials \cite{BLM15, Mon21,IS24,BH22,Wein22}.
However, we do not know how to implement the state-of-the art algorithm for max clique in $\bbG(n,\frac{1}{2})$ in the $O(\log n)$-degree model.
To check if OGP lower bounds are, in general, meaningful for non-planted optimization problems, we should make sure that the algorithms ruled out by OGP in the ``hard regime'' are actually useful in the easy regime.
    
    \item \emph{Are there additional conditions under which we expect that OGP is a good heuristic for hardness?}

    As was mentioned above, problems with algebraic structure, like $k$-XOR, are widely considered to be ``black sheep'' in average-case complexity, and so heuristics like OGP are not considered to be indicative of the underlying computational complexity.
    
Though $(s,t)$-path in unweighted graph can be solved by Gaussian Elimination, \emph{shortest} path cannot \cite{BMT78}.
    Still, the average-case shortest $(s,t)$-path problem has some structural features that set is apart from e.g. Ising models or max independent set.
    Salient differences include: (i) the size of the set of approximate solutions is $\poly(n)$ rather than exponential or subexponential; (ii) the probability of the overlap-gap property is only $1-O(1/\polylog n)$ rather than $1-O(1/\poly(n))$; (iii) an $(s,t)$ path is subject to the ``hard constraint'' that it actually be an unbroken path, whereas in e.g. Ising models there are no such constraints.
    Perhaps one can attribute the failure of OGP to accurately predict hardness to one of these structural features, or to some other property of $(s,t)$-paths?
The new work of \cite{HS25} suggests that (ii) might play a role, at least in some models.

    It seems that in order to understand if the OGP is a good heuristic prediction of hardness (beyond the concrete lower bounds one is able to prove from it), one must understand which features of shortest path distinguish it from other problems. 
    A possible way to go about this is to revisit the ``easy'' problems from Algorithms 101 one at a time, and check if each of them has the OGP.
\end{enumerate}

\subsection{Technical overview}
The proofs of most of our results are quite short.
For the convenience of the reader, we give a high-level sense of the arguments here.

\paragraph{Overlap-gap property.} We establish the overlap-gap property for $(s,t)$-paths in $\bbG(n,q)$ with $q = \frac{C\log n}{n}$ using a simple first moment argument.
We want to show that it is unlikely that two near-shortest paths from $s$ to $t$ overlap in a constant fraction of their edges.

\begin{wrapfigure}{r}{0.2\textwidth} 
    \centering
\includegraphics[width=\linewidth]{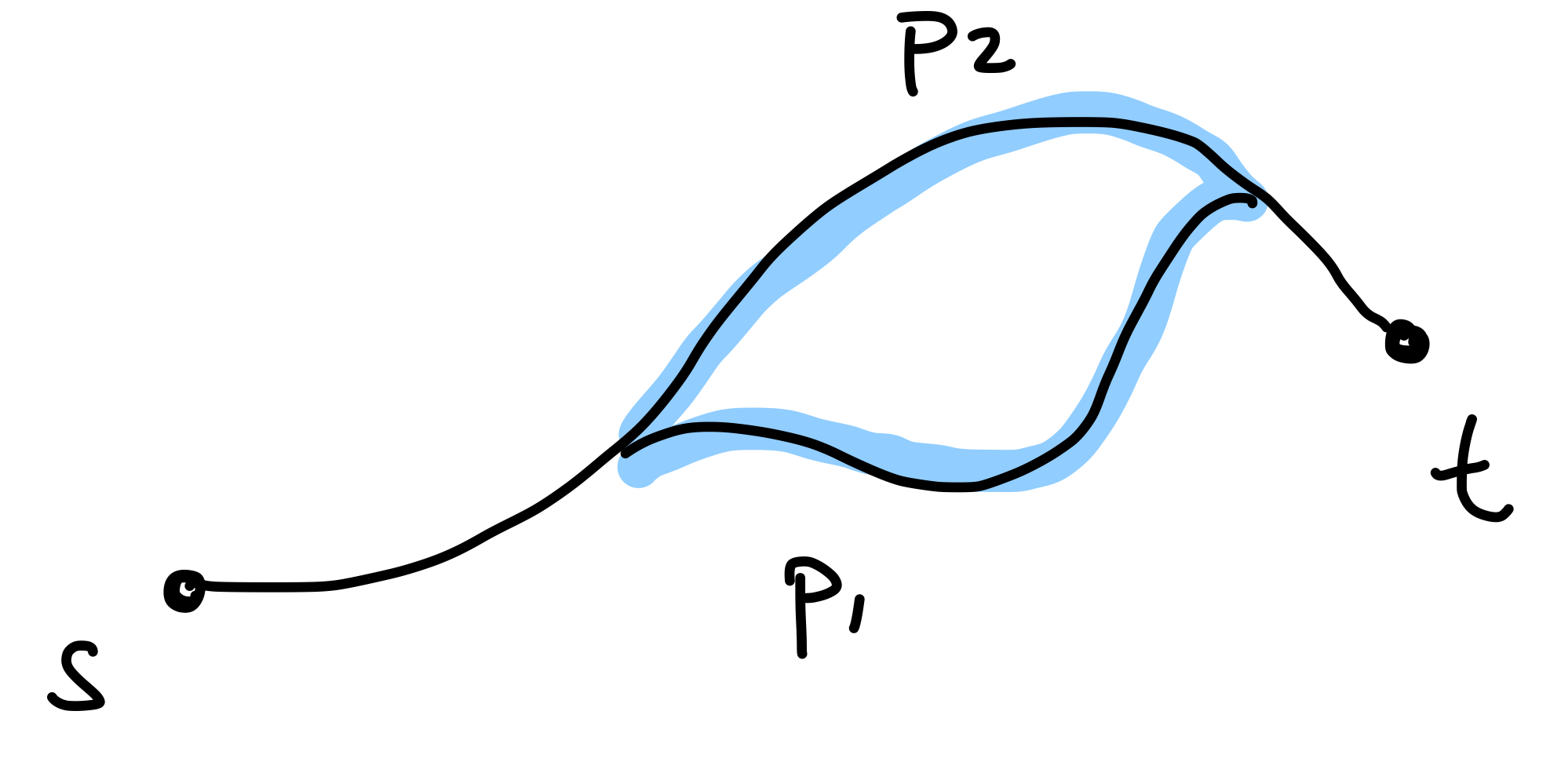}
\caption{Paths overlapping on a constant fraction of edges create a short cycle.}
    \label{fig:overlap}
\end{wrapfigure}

The reason we expect this to be true is that graphs from $\bbG(n,q)$ have only a few short cycles, and if there were two $s,t$ paths with near-minimum length which overlap in only, say, $50\%$ of their edges, then this means there must be a short cycle in the neighborhood of $s$ or $t$, which is not very 
likely.
Formalizing this amounts to some relatively simple counting arguments.
The existence of near-disjoint near-optimal paths comes from a second moment method lower bound on the number of near-optimal $s,t$ paths involving vertices only in $[n/2]$, and those involving vertices only in $\{n/2+1,\ldots,n\}$.
The arguments in First Passage Percolation are similar.

\paragraph{Stability of low-degree polynomials.}
In previous works, it was shown that low-degree polynomials in sparse random graphs are stable with probability $\Omega(1/\poly(n))$ \cite{Wein22,GJW24}.
This is too weak for us, as our OGP can never hold with probability larger than $1-O(\log n /n)$; if there happens to be an edge between $s$ and $t$, all bets are off.

To show that degree-$O(1)$ polynomials in the entries of the adjacency matrix of $\bG \sim \bbG(n,q)$ are stable, we prove an invariance principle for arbitrary \emph{symmetric} polynomials in correlated sparse Bernoulli random variables.
Prior work \cite{Wein22,GJW24} did not exploit symmetry; however the symmetry of the model $\bbG(n,q)$ under vertex relabeling implies that any asymmetric polynomial can be symmetrized without hurting its performance.
Furthermore, symmetric polynomials are more stable.
To show stability, we are able to appeal to the invariance principle/interpolation argument of \cite{caravenna2023critical}, which applies to functions of sufficiently low influence, or rather, functions whose gradient has sufficiently low $\ell_4$ norm.
Bounding the gradient of an arbitrary symmetric function of $\bG$ requires some nontrivial  combinatorial arguments.

\paragraph{Low-degree algorithms and sampling.}
The length of the shortest path in $\bG \sim \bbG(n,q)$ is $L = o(\log n)$ with high probability.
The indicator for the existence of a specific path $v_1,\ldots,v_L$ of length $L$ in $\bG$ can be written as a degree $L-1$ polynomial: $\prod_{i=1}^{L-1} \Ind[\bG_{v_i,v_{i+1}} = 1]$.
A good estimator for the indicator that edge $(u,v)$ participates in the shortest path from $s$ to $t$ is the sum over the indicators of all paths of length $L$ from $s$ to $t$ which include $(u,v)$.
Formalizing this requires some easy concentration arguments.

Sampling from the set of near-shortest paths can be done in polynomial time because the number of paths of length $(1+\eps)L$ starting at vertex $s$ is polynomial in $n$ with high probability.
One simply enumerates all of these paths, deletes the ones that do not end in $t$, and then samples one uniformly at random.

\subsection{Notation} 
    Throughout, we will use boldface font, such as $\bX$, to denote that $\bX$ is a random variable.
    We will use $n^{\ul{k}}$ to denote the falling factorial, $n^{\ul{k}} = n(n-1)\cdots(n-k+1)$.
    We use $\bbG(n,q)$ to denote the Erd\H{o}s-R\'{e}nyi distribution; almost everywhere, we will be in the setting where $q = \frac{C\log n}{n}$ for $C > 1$ a constant independent of $n$.
\section{Shortest path in a random graph}

In this section, let $\bG \sim \bbG(n, q)$, with $q =\frac{C\log n}{n}$ and $C > 1$.
We will study the shortest $(s,t)$-path problem in $\bG$.
By symmetry of $\bbG(n,q)$, we can assume $s = 1$ and $t = 2$ without loss of generality.

With high probability, the shortest path between vertices $1$ and $2$ has length $\OPT = (1+o(1))\frac{\log n}{\log nq}$ (we'll show as much, using the second moment method).
Furthermore, we will show in \pref{sec:ogp-gnp} that if we let $\calP_\eps(\bG)$ be the set of all paths of length $(1+\eps)\OPT$, then with high probability $\calP_\eps(\bG)$ has the overlap-gap property: each pair of paths $p_1,p_2 \in \calP_\eps(\bG)$ overlaps on either an $O(\eps)$ fraction of edges or on all of the edges; that is, there exists a constant $C$ such that for all $\eps$ sufficiently small,
\[
\frac{|p_1\cap p_2|}{\sqrt{|p_1|\cdot|p_2|}} \in [0,C\eps) \cup \{1\},
\]
and further there exist paths $p_1,p_2$ which are almost disjoint.
By-now standard arguments then imply that any sufficiently ``smooth'' algorithm cannot reliably find shortest paths in $\bbG(n,q)$.
This demonstrates that efficient optimization methods can be successful for mean-field optimization problems, even in the presence of an overlap-gap structure.

\medskip
Furthermore, we will show that the same is true for low-degree polynomials.
For average-case optimization problems, especially planted problems, the best degree-$O(\log n)$ polynomial estimators (such as spectral algorithms), anecdotally, achieve the same computational thresholds as any polynomial-time algorithm. 
For this reason, lower bounds against degree $\omega(\log n)$ polynomials have become a heuristic for predicting information-computation gaps.

The OGP also gives lower bounds against degree-$O(1)$ polynomial estimators, because they behave smoothly on average-case inputs.
We will show this is the case in \pref{sec:ld-gnp}; unfortunately, our OGP only holds with probability $1-1/\polylog n$, so we will not be able to apply black-box arguments and we will need to do some work in order to apply an invariance principle (we do this in \pref{sec:concentration}).
Though better than nothing, this is weaker evidence than a lower bound against a polynomial of degree-$\omega(\log n)$.
In \pref{sec:ld-gnp} we will also show that the shortest path problem constitutes a cautionary example, by demonstrating that polynomials of degree $\Theta(\frac{\log n}{\log\log n})$ can indeed approximate the shortest path in $\bbG(n,q)$, despite the presence of an overlap gap.

Finally, we will argue in \pref{sec:disorder} that the uniform distribution over $\calP_\eps(\bG)$ exhibits disorder chaos, but that sampling from this distribution is easy because one can enumerate $\calP_\eps(\bG)$ in polynomial time.

\subsection{Characterizing approximate shortest paths in Erd\"{o}s-R\'{e}nyi graphs}

We begin by characterizing the length, $\OPT$, of the shortest path in $\bbG(n,q)$, and the number of $(1+\eps)\OPT$-length paths in $\bbG(n,q)$.

\begin{lemma}[Shortest path in $\bbG(n,q)$]\label{lem:shortest-gnp}
If $\bG \sim \bbG(n,\frac{C\log n}{n})$, then with probability $\ge 1-O(\frac{1}{\log^d n})$ the length of the shortest path between $1,2$ in $\bG$ is $\frac{\log n}{\log\log n + \log C} \pm d$, and with probability $1-O(\frac{\log\log n}{\log n})$, $|\calP_\eps(\bG)| = (1\pm o(1)) n^\eps$.
\end{lemma}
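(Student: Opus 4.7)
The plan is to apply the first and second moment methods to $N_L$, the number of simple $(1,2)$-paths of length exactly $L$ in $\bG$. A direct count gives
\[
\E[N_L] \;=\; (n-2)^{\ul{L-1}} q^L \;=\; (1+o(1))\frac{(C\log n)^L}{n},
\]
so writing $L^* := \frac{\log n}{\log\log n + \log C}$, we have $\E[N_{L^* \pm d}] = (1+o(1))(C\log n)^{\pm d}$.

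First I would establish the lower bound on $\OPT$ by Markov's inequality: since $\E[N_L]$ is geometric in $L$, the sum $\sum_{L \le L^* - d} \E[N_L]$ is dominated by its last term, giving $\Pr[\OPT \le L^* - d] \le O(1/\log^d n)$. For the upper bound, I would apply the second moment method to $N_{L^*+d}$. Expanding
\[
\E[N_L^2] \;=\; \sum_{(p_1, p_2)} q^{2L - |p_1 \cap p_2|}
\]
and grouping by $k = |p_1 \cap p_2|$, the $k = 0$ contribution is $(1 + o(1))(\E[N_L])^2$. The key step is to show that the $k \ge 1$ contributions are small enough to give $\E[N_L^2] \le (\E[N_L])^2 + O(\E[N_L])$; Chebyshev then yields $\Pr[N_{L^*+d} = 0] \le O(1/\E[N_{L^*+d}]) = O(1/\log^d n)$.

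For the second statement, I would analyze $|\calP_\eps(\bG)|$. Since $\E[N_L]$ is geometric in $L$, the count is dominated by paths of length near $(1+\eps)L^*$, where $\E[N_L] = (1+o(1)) n^\eps$. A similar second moment computation, combined with the concentration of $\OPT$ from the first part (applied at $d=1$, contributing the $O(\log\log n/\log n)$ slack once one accounts for integer rounding in the cutoff $(1+\eps)\OPT$), yields $|\calP_\eps(\bG)| = (1 \pm o(1)) n^\eps$ with the stated probability.

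The main obstacle is the second moment computation, specifically the near-Poisson bound on $\E[N_L^2]$. The shared edges between two length-$L$ paths can form several disjoint subpaths, each placeable at many locations along $p_1$ and $p_2$, so the counting requires enumeration by the ``overlap pattern''---a multiset of shared subpath lengths together with their positions in $p_1$ and $p_2$. One must then verify that the contribution of pairs with $k \ge 1$ shared edges decays geometrically in $k$ once $q = C\log n/n$ is plugged in, so that the total correction to $(\E[N_L])^2$ is at most $O(\E[N_L])$ rather than comparable to $(\E[N_L])^2$. This requires a careful accounting of how the combinatorial factor $L^{O(k)}$ from placing shared subpaths is dominated by the $(n q)^{-k} = 1/(C \log n)^k$ discount relative to $q^{2L}$.
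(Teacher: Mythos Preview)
Your outline matches the paper's argument: Markov for the lower bound on $\OPT$, second moment plus Chebyshev for the upper bound and for $|\calP_\eps|$, with the second-moment sum organized by $k=|p_1\cap p_2|$. The paper packages the overlap enumeration into a separate combinatorial claim, parametrizing additionally by the number of connected components of $p_1\cap p_2$; otherwise the structure is the same, and your handling of the $|\calP_\eps|$ part (including the rounding slack) is in line with the paper's.

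There is one concrete error: the near-Poisson bound $\Var(N_L)=O(\E[N_L])$ that you aim for is too strong and fails for $d\ge 2$. The dominant variance contribution comes from pairs sharing exactly one edge, of which there are $\Theta(n^{2L-3})$ (pick the shared edge's location in each path, then the remaining labels), contributing $\Theta(n^{2L-3}q^{2L-1})=\Theta((nq)^{-1})(\E[N_L])^2$. Thus $\Var(N_L)\asymp (C\log n)^{-1}(\E[N_L])^2$, not $O(\E[N_L])$. For $L=\ell^*+d$ one has $\E[N_L]\asymp (C\log n)^d$, so $\Var(N_L)\asymp (C\log n)^{2d-1}$, which exceeds $\E[N_L]$ once $d\ge 2$; the ``$L^{O(k)}$ versus $(nq)^{-k}$'' competition you describe is not the bottleneck---already the $k=1$ term is too large. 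Chebyshev therefore only gives $\Pr[N_{\ell^*+d}=0]=O(1/\log n)$, not $O(1/\log^d n)$. The paper's own second-moment step hits the same ceiling: it proves $\Var(N_m)=O((\log n)^{-1})(\E[N_m])^2$ and hence gets failure probability $O(1/\log n)$ for the upper bound on $\OPT$; the $O(1/\log^d n)$ in the statement is delivered only by the Markov lower bound. So your plan goes through once you replace the near-Poisson target with the correct bound $\Var(N_L)=O((\log n)^{-1})(\E[N_L])^2$ and accept the weaker probability on the upper side, which is all that is needed downstream.
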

\begin{proof}
    Let $N_m(\bG)$ be the number of paths of length $m$ between $1,2$ in $\bG$.
    In expectation,
    \[
        \E[N_m(\bG)] 
        = n^{\ul{m-1}} q^m 
        \le \frac{1}{n}(nq)^m.
    \]
    Hence by Markov's inequality, if $m < \frac{\log n - \log \frac{1}{\delta}}{\log nq}$, $N_m(\bG) = 0$ with probability $\ge 1-\delta$.
    This gives the probable upper bound on the length of the shortest path.
    Henceforth, define $\ell^* = \frac{\log n}{\log nq}$.

    Now, we argue that for $m = (1+\eps)\ell^*$, $N_m$ has expectation $\sim n^\eps$, and concentrates around its expectation.
    Since $n^{\ul{m}} \ge n^{m} (1-\frac{m}{n})^m \ge n^{m}(1-\frac{{m}^2}{n})$, we have that for $m \le \log n$,
    \[
    \E[N_{m}(\bG)] 
    \ge \frac{(nq)^{m}}{n}\left(1-\tfrac{\log^2 n}{n}\right)
    \ge n^{m/\ell^* -1}\left(1-\tfrac{\log^2 n}{n}\right)
    \ge n^\eps \left(1-\tfrac{\log^2 n}{n}\right).
    \]
    
    We now bound the variance of $N_m$.
    Let $\calP_{(m)}$ be the set of all paths of length $m$ between $1,2$ in $K_n$.
    Then
    \begin{align}
    \E[N_{m}^2] 
&= \sum_{p_1,p_2 \in \calP_{(m)}} \Pr[p_1,p_2 \in \bG]
= \sum_{p_1,p_2 \in \calP_{(m)}} q^{2m - |p_1 \cap p_2|},\label{eq:var}
    \end{align}
    where the notation $|p_1 \cap p_2|$ refers to the number of common edges.
    We will parameterize the above sum according to the number of edges in $p_1\cap p_2$, and according to the number of connected components in $p_1 \cap p_2$.
    We give a more general characterization than we need here, as it will be useful in later proofs.
    \begin{claim}\torestate{\label{claim:path-pairs}
        Let $M_{k,d_1,d_2}$ be the number of pairs of paths $p_1,p_2$ with $|p_1| = d_1 + k$, $|p_2| = d_2 + k$, and $|p_1\cap p_2| = k$ in $K_n$, with $d_1,d_2 > 0$.
        Then if $k,d_1,d_2 \ll n^{1/3}$, for all $n$ sufficiently large, 
        \[
        M_{k,d_1,d_2} \le \left(\frac{k+1}{n^2} + \left(\frac{100 k^3 (d_1+k)(d_2+k)}{n}\right)^3\right)n^{d_1+d_2+k}.
        \]}
    \end{claim}
    \begin{proof}
        Suppose that the intersection graph $p_1\cap p_2$ has $c \ge 2$ components (at least two because $d_1,d_2 > 0$, and the endpoints $1,2$ are present in both paths), and $k$ edges.

        First, if $c = 2$, then one can only choose how many edges to include in the component containing the source vertex, $1$. 
        There are at most $k+1$ choices for this.
        Then, there are $d_1 + d_2 +2k - (k+c) = d_1+d_2+k-c$ choices for the vertex labels,
        giving a total of at most $(k+1) n^{d_1+d_2+k-2}$ such graphs.

        Now, suppose that $c > 2$, and suppose that $\delta$ of the components contain no edges.
        By a ``stars and bars'' argument, the number of choices for the number of edges in each component is $\binom{k-1}{c - \delta -1}$.
        Since $p_1,p_2$ must intersect on sub-paths, the edges participating in each common component can be chosen by (i) choosing if the component appears in lexicographic or reverse-lexicographic order in $p_2$ (with respect to $p_1$), (ii) choosing a first edge in $p_1$ and either a first or last edge in $p_2$, depending on the lexicographic choice.
        The number of such choices is $2^{c-\delta} \cdot \binom{d_1+k}{c-\delta} (d_2+k)^{\ul{c-\delta}}$.
        Now, label the vertices: since there are $c$ components and $k$ edges shared in $p_1\cap p_2$, and the intersection forms a tree, there are $k+c$ vertices in $p_1 \cap p_2$.
        So the total number of vertices to label is $d_1 + d_2+ 2k - (k+c) = d_1 + d+2 + k - c$, to avoid double-counting overlapping vertices (and accounting for the fact that the endpoints $1,2$ are already labeled).
        
        This counts all valid overlapping paths, and may overcount since sometimes the components end up overlapping.

        The total number is thus at most
        \[
        \binom{k-1}{c-\delta-1}\cdot 2^{c-\delta}(c-\delta)!\binom{d_1+k}{c-\delta}\binom{d_2+k}{c-\delta} \cdot n^{d_1 + d_2 + k - c}
        \le \left(\frac{2e^3k(d_1+k)(d_2+k)}{n}\right)^c n^{d_1+d_2 + k},
        \]
        where we have used Stirling's approximation.

        Summing over all $\le k$ choices of $c$ and all $\le k$ choices of $\delta$, for $n$ sufficiently large,
        \[
        M_{k,d_1,d_2} \le \left(\frac{k+1}{n^2} + \left(\frac{100 k^3 (d_1+k)(d_2+k)}{n}\right)^3\right)n^{d_1+d_2+k}. \qedhere
        \]
    \end{proof}
    Returning to \pref{eq:var}, we have
    \begin{align*}
    \E[N_m^2]
    &\le n^{2m - 2}q^{2m} + n^{m-1}q^m + \sum_{k=1}^{m-1} M_{k,m-k,m-k}\cdot q^{2m - k} \\
    &\le \left(1+O(\tfrac{1}{n^\eps})\right) n^{2m - 2}q^{2m} + \sum_{k=1}^{m-1} \left(\frac{k+1}{n^2} + \frac{\log^{15} n}{n^3}\right)(nq)^{2m - k}\\
    &\le \left(1+O(\tfrac{1}{n^{\eps}})\right)n^{2m - 2}q^{2m} + n^{2m-2}q^{2m}\sum_{k=1}^{m-1} \left(\frac{k+1}{(nq)^k} + \frac{\log^{15} n}{n(nq)^k}\right)\\
    &\le  \left(1+O(\tfrac{1}{n^\eps})\right)n^{2m - 2}q^{2m} + O\left(\tfrac{1}{\log n}\right) \cdot n^{2m-2}q^{2m}
    = \left(1+O(\tfrac{1}{\log n})\right) \E[N_m]^2.
    \end{align*}
    Hence applying Chebyshev's inequality we have that $N_{(1+\eps)\ell^*} \ge \frac{1}{2} n^\eps$ with probability $\ge 1 - O(\frac{1}{\log n})$.

    From the concentration of the $N_m$ quantity and $m = (1+O(\frac{1}{\log n}))\ell^*$ and $(1+\eps)\ell^*$, we conclude that $\OPT \le \ell^* + 1 = (1+O(\frac{1}{\log\log n}))\ell^*$ with probability $\ge 1-O(\frac{\log\log n}{\log n})$, and that $|\calP_{\eps}(\bG)|= (1\pm o(1))n^\eps$ with probability $\ge 1-O(\frac{\log\log n}{\log n})$.
\end{proof}

\subsection{The overlap-gap property in random graphs}\label{sec:ogp-gnp}
Now we will use the second moment method to show that shortest paths in $\bbG(n,q)$ have the overlap-gap property. 
We will also derive a lower bound against any sufficiently stable algorithm.

\begin{theorem}\label{thm:ogp-gnp}
Let $\bG,\bG'\sim \bbG(n,q)$ for $q = \frac{B\log n}{n}$ and $B > 1$, where $\bG'$ is sampled from $\bG$ by re-sampling each edge or non-edge with probability $1-\rho$. 
Then there exists a constant $C$ such that for all $\eps > 0$ sufficiently small, with probability $1-O(\frac{\log\log n}{\log n})$ all pairs $(p,p') \in \calP_\eps(\bG)\times \calP_\eps(\bG')$ have 
\[
\frac{|p\cap p'|}{\sqrt{|p|\cdot|p'|}} \in \begin{cases}
    [0,C\eps) & \rho < (\frac{1}{\log n})^{2\eps}\\
    [0, C \eps) \cup \{1\} & \text{otherwise}.
\end{cases}
\]
Furthermore, with probability $\ge 1-O(\frac{\log\log n}{\log n})$ there exist $p,p'$ of overlap $< C\eps$.
\end{theorem}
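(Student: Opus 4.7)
My plan is a first-moment argument on the number of ``bad'' path pairs, combined with the second-moment lower bound on $|\calP_\eps|$ provided by \pref{lem:shortest-gnp}. Applying that lemma to each of $\bG, \bG'$ gives, with probability $1 - O(\log\log n/\log n)$, that each $\calP_\eps$ has size $(1 \pm o(1))n^\eps$ and consists of paths of length between $\ell^* := \log n/\log(nq)$ and $(1+o(1))(1+\eps)\ell^*$. I would condition on this event throughout.

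For the intermediate-overlap case, I would set $q^* := q\rho + (1-\rho)q^2$ (the probability that an edge lies in $\bG \cap \bG'$) and note $q^*/q = \rho + (1-\rho)q \le 1$. For a pair $(p,p')$ with $|p|=m$, $|p'|=m'$, and $|p \cap p'| = k$, independence of disjoint edges yields $\Pr[p\subset\bG,\ p'\subset\bG'] = q^{m+m'-2k}(q^*)^k$. Multiplying by the combinatorial count from \pref{claim:path-pairs} and using $q^* \le q$ bounds the expected number of such pairs by $\left(\frac{k+1}{n^2} + \frac{\log^{15}n}{n^3}\right)(nq)^{m+m'-k}$. Since $m, m' \le (1+\eps+o(1))\ell^*$, I get $(nq)^{m+m'-k} \le n^{2+2\eps+o(1)}/(nq)^k$; and the intermediate regime $k/\sqrt{mm'} \ge C\eps$ forces $k \ge C\eps\ell^*$, so $(nq)^k \ge n^{C\eps}$. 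The expected count per pair is thus $O((k+1) n^{\eps(2-C) + o(1)})$; summing over $O(\log^3 n)$ triples $(m,m',k)$ yields $o(1)$ total for $C > 2$, and Markov excludes any intermediate-overlap pair w.h.p.

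For the full-overlap case in the regime $\rho < (1/\log n)^{2\eps}$, the expected number of length-$m$ paths in $\bG \cap \bG'$ is at most $n^{m-1}(q^*)^m \le (nq)^m (2\max(\rho,q))^m/n$. When $\rho \ge q$, I would use $(2\rho)^m = n^{o(1)}(\log n)^{-2\eps m} \le n^{-2\eps(1+\eps)(1-o(1))}$ (applying $m\log\log n = (1+o(1))\log n$); combined with $(nq)^m/n \le n^{\eps+o(1)}$, the total is $n^{-\eps - 2\eps^2 + o(1)} = o(1)$. When $\rho < q$, the total is $(2nq^2)^m/n \to 0$ since $nq^2 = O(\log^2 n/n) \to 0$. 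Summing over $O(\log n)$ choices of $m$ and applying Markov excludes any path in $\bG \cap \bG'$ w.h.p.

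For the existence of a nearly disjoint pair, the conditioned event provides $|\calP_\eps(\bG)|\cdot|\calP_\eps(\bG')| \ge n^{2\eps}/4$. The intermediate-overlap bound (which only uses $q^* \le q$ and so holds for all $\rho$) shows that $o(1)$ such pairs exist w.h.p. The number of identical pairs $(p, p)$ is at most the number of paths in $\bG \cap \bG'$; a first moment gives $\sum_m (nq)^m (q^*/q)^m/n \le O(n^\eps)$ in expectation, hence at most $n^\eps \log n = o(n^{2\eps})$ w.h.p. Subtracting both counts from $n^{2\eps}/4$ leaves $\Omega(n^{2\eps})$ pairs with overlap ratio $< C\eps$, proving existence. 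The main challenge will be calibrating the constant $C$ so the geometric decay $(nq)^{-k}$ beats the enumeration $n^{2\eps}$ uniformly over $k$ (the choice $C > 2$ suffices); the threshold $(1/\log n)^{2\eps}$ on $\rho$ is calibrated exactly so that $\rho^m$ beats the enumeration factor $n^\eps$ by a margin $n^{-\eps - 2\eps^2}$, comfortably absorbed.
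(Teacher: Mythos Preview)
Your proposal is correct and matches the paper's proof in its first-moment analysis of both the intermediate-overlap and full-overlap regimes via \pref{claim:path-pairs}. The only deviation is in establishing the existence of a low-overlap pair: the paper partitions $[n]\setminus\{1,2\}$ into halves $A,B$ and applies \pref{lem:shortest-gnp} to each half to produce automatically disjoint near-optimal paths, whereas you count pairs in $\calP_\eps(\bG)\times\calP_\eps(\bG')$ and subtract the (at most $O(n^\eps\log n)$, by a separate first-moment bound) identical ones from the $\Omega(n^{2\eps})$ total. Both arguments are valid; yours yields a quantitative count of low-overlap pairs at the price of one extra moment bound, while the partition trick is slightly more direct. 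One minor arithmetic slip: your inequality $(\log n)^{-2\eps m}\le n^{-2\eps(1+\eps)(1-o(1))}$ implicitly uses the \emph{upper} bound on $m$ where the lower bound is needed; the corrected bound $n^{-2\eps(1-o(1))}$ still combines with $(nq)^m/n\le n^{\eps+o(1)}$ to give $o(1)$, so the conclusion stands.
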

\begin{proof}
Condition first on the outcome of \pref{lem:shortest-gnp}. 
Now, let $N_{m,m',k}(\bG,\bG')$ be the number of pairs $(p,p')$ where $p$ is a $1$-$2$ path of length $m$ in $\bG$, $p'$ is a $1$-$2$ path of length $m'$ in $\bG'$, $p \cap p'$ contains $k$ edges, and suppose first $k < m,m'$ (ruling out overlap 1).
Applying \pref{claim:path-pairs}, we have
\begin{align*}
\E[N_{m,m',k}(\bG,\bG')] 
&= M_{k,m-k,m'-k} \cdot q^{m + m'-k} \cdot \left(\rho + (1-\rho)q\right)^{k}\\
&\le \left(\frac{k+1}{n^2} + \frac{\log^{15} n}{n^3}\right)(nq)^{m + m'-k} \left(\rho + (1-\rho)q\right)^{k}\\
&\le 2\frac{k+1}{n^2}(nq)^{m + m'-k} \left(\rho + (1-\rho)q\right)^k\\
&\le 2 (k+1) n^{2\eps} \left(\frac{\rho + (1-\rho)q}{nq}\right)^k,
\end{align*}
where we have used that $m,m' \le (1+\eps)\frac{\log n}{\log nq}$ by \pref{lem:shortest-gnp}.
This is $O(1/\log^5 n)$ when $k > \frac{2\eps \log n + 5\log\log n}{\log nq}$.
Taking a union bound over all $O(\log^3 n)$ values of $\frac{\log n - 100}{\log nq} \le m,m' \le (1+\eps)\frac{\log n}{\log nq}$ and $m > k > 2\eps \frac{\log n}{\log nq}$ gives the first part of the result.

Now, suppose $k = m =m'$ (notice that if $k =m$, it must be the case that $m = m'$, since the endpoints of the paths are equal).
In this case, 
\[
\E[N_{m,m,m}] \le n^{m-1}q^m\left(\rho + (1-\rho)q\right)^m = \frac{(nq(\rho + (1-\rho)q))^m}{n}
\]
which is $O(\frac{1}{\log n})$ when $\rho \le \frac{1}{(\log n)^{2\eps}}$.

The final remark, regarding the existence of low-overlap pairs, follows from the lower bound on $|\calP_\eps(\bG)|$ in \pref{lem:shortest-gnp}, along with the following observation: if one partitions $[n]\setminus\{1,2\}$ into two disjoint and equally-sized sets of vertices, $A$ and $B$, then \pref{lem:shortest-gnp} assures us that with high probability there will be $(n/2)^{\Omega(\eps)}$ near-optimal paths with all vertices in $A$, and similarly in $B$. 
These paths will be disjoint.
\end{proof}

The overlap-gap property allows us to rule out stable algorithms for shortest path:

\begin{corollary}\label{cor:gnp-no-stable}
    Let $\rho \in [0,1)$ be bounded away from $1$, $\eps > 0$ be sufficiently small, and $n$ be sufficiently large.
    Then there can be no algorithm for $\eps$-approximate shortest path which simultaneously (i) has failure probability $\le \frac{1-\rho}{6}$, and (ii) is stable, in the sense that if $\bG,\bG' \sim \bbG(n,\frac{C\log n}{n})$ and the edges of $\bG'$ are at least $\rho$-correlated with the edges of $\bG$, then conditioned on $\calA$ succeeding on its inputs, $\frac{|\calA(\bG) \cap \calA(\bG')|}{\sqrt{|\calA(\bG)|\cdot|\calA(\bG')|}} > C\eps$.
\end{corollary}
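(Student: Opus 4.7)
The plan is to assume for contradiction that a stable algorithm $\calA$ achieves failure probability at most $(1-\rho)/6$, couple a $\rho$-correlated pair $\bG,\bG' \sim \bbG(n,\tfrac{C\log n}{n})$ as in \pref{thm:ogp-gnp}, and combine the stability hypothesis with the OGP to force $\calA(\bG) = \calA(\bG')$ to hold with constant probability; I would then derive a contradiction by showing that such a coincidence is in fact very unlikely.

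By a union bound, the event $S$ that $\calA$ succeeds on both $\bG$ and $\bG'$ has probability at least $1 - 2\cdot(1-\rho)/6 = 1 - (1-\rho)/3 \ge 2/3$. On $S$, stability gives normalized overlap $|\calA(\bG) \cap \calA(\bG')|/\sqrt{|\calA(\bG)|\cdot|\calA(\bG')|} > C\eps$. Applying \pref{thm:ogp-gnp} at the same $\rho,\eps$, with probability $1-O(\log\log n/\log n)$ every pair in $\calP_\eps(\bG)\times\calP_\eps(\bG')$ has normalized overlap in $[0,C\eps)\cup\{1\}$, and only in $[0,C\eps)$ when $\rho < (\log n)^{-2\eps}$. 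In the small-$\rho$ regime the stability and OGP requirements are immediately incompatible, yielding the contradiction at once. In the remaining regime they instead force the overlap to equal $1$ on $S \cap \{\text{OGP holds}\}$, so $\Pr[\calA(\bG) = \calA(\bG')] \ge 2/3 - o(1)$.

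The main obstacle is then to show that $\Pr[\calA(\bG) = \calA(\bG')] = o(1)$, which will contradict the preceding lower bound. The key observation is that on $S$ the output $p = \calA(\bG)$ is a path in $\bG$ of length $L \le (1+\eps)\OPT = O(\log n/\log\log n)$ by \pref{lem:shortest-gnp}, and if $\calA(\bG') = p$ as well then $p$ must also be a subgraph of $\bG'$. Conditioning on $\bG$ and on $p$, each edge of $p$ lies in $\bG'$ independently with probability $\rho + (1-\rho)q$. Since $\rho$ is bounded away from $1$ and $q = o(1)$, this per-edge probability is at most $1-\Omega(1)$, so
\[
\Pr[p \subseteq \bG' \mid \bG] \le (1-\Omega(1))^L = \exp(-\Omega(L)) = o(1).
\]
Integrating over $\bG$ and absorbing the $O(\log\log n/\log n)$ slack from \pref{lem:shortest-gnp} and \pref{thm:ogp-gnp} yields $\Pr[\calA(\bG) = \calA(\bG')] = o(1)$, giving the desired contradiction for $n$ sufficiently large. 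The slack in the $(1-\rho)/6$ threshold is essentially an artifact of the union bound in the first step, and the argument is loose by a constant factor.
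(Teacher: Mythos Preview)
Your argument is correct but takes a genuinely different route from the paper's proof.

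The paper runs the standard \emph{ensemble OGP} interpolation: it builds a chain $\bG_0,\bG_1,\ldots,\bG_T$ with $T=\lceil 1/(1-\rho)\rceil$, each consecutive pair $(1-\tfrac{1}{T})$-correlated and the endpoints independent. Applying \pref{thm:ogp-gnp} to every pair $(\bG_0,\bG_t)$ forces the overlap of $\calA(\bG_0)$ with $\calA(\bG_t)$ to lie in $[0,C\eps)\cup\{1\}$; since it starts at $1$ and ends below $C\eps$ (by independence of $\bG_0,\bG_T$), there is a first step where it drops, and at that step stability is violated. A union bound over the $T+1$ instances then gives $\delta>(1-\rho)/6$.

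You instead work with just the single $\rho$-correlated pair and dispose of the ``overlap $=1$'' branch directly, by observing that $\calA(\bG)=\calA(\bG')$ forces the output path $p=\calA(\bG)$ to survive in $\bG'$, an event of conditional probability $(\rho+(1-\rho)q)^{|p|}=o(1)$ since $\rho$ is bounded away from $1$ and $|p|\ge\OPT(\bG)\to\infty$. This is more elementary and, as you note, actually loose: pushing your inequalities through gives $\delta\ge \tfrac{1}{2}-o(1)$, strictly stronger than the stated $(1-\rho)/6$. The trade-off is that your argument uses a problem-specific feature (a near-optimal solution has $\omega(1)$ edges, so surviving a constant-rate resampling is unlikely), whereas the paper's interpolation is the generic OGP machinery that applies verbatim to other models, and is what sets up the later low-degree argument in \pref{thm:nodegreeDpoly}.
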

\begin{proof}
Choose $T = \lceil \frac{1}{1-\rho} \rceil = O(1)$, and note $\rho \ge 1-\frac{1}{T}$.
We sample a sequence of graphs $\bG_0,\bG_1,\ldots,\bG_T$, with each $\bG_t \sim \bbG(n,q)$, $\bG_0,\bG_T$ are independent, and each pair $\bG_t,\bG_{t+1}$ is marginally $(1-\frac{1}{T})$-correlated (that is, $\bG_t,\bG_{t+1}$ can be coupled with a pair $\bG,\bG'$ with $\bG \sim \bbG(n,q)$ and $\bG'$ obtained by resampling each edge with probability $\frac{1}{T}$).
First, sample $\bG_0,\bG_T$ independently.
Additionally, for each $(i,j) \in \binom{[n]}{2}$, sample an independent random variable $\bU_{ij} \sim \Unif([0,1])$.
Now, for each $t$, let $\bG_t(i,j) = \bG_0(i,j) \cdot \Ind[ \bU_{ij} > \frac{t}{T}] + \bG_T(i,j) \cdot \Ind[\bU_{ij} \le \frac{t}{T}]$.
Clearly, $\bG_t \sim \bbG(n,q)$.
Also, edge $(i,j)$ is resampled going from $\bG_t$ to $\bG_{t+1}$ if and only if $\bU_{ij} \in (\frac{t}{T},\frac{t+1}{T}]$, which is true for each edge independently with probability $\frac{1}{T}$.
Thus, $\bG_t,\bG_{t+1}$ are marginally a $1-\frac{1}{T}$-correlated pair.  

Let $\bp_t = \calA(\bG_t)$, and suppose $\calA$ fails with probability $\delta$.
From \pref{thm:ogp-gnp}, we have that with probability at least $1-O(\frac{T\log\log n}{\log n}) \ge \frac{1}{2}$, each $\bp_t$ must have overlap with $\bp_0$ which is either equal to $1$, or at most $C\eps$, and further the overlap of $\bp_0$ and $\bp_T$  must be at most $C\eps$ (as they are independent).
Hence, with high probability there must exist some $t \in [T]$ with 
$\frac{|\bp_0 \cap \bp_{t}|}{\sqrt{|\bp_0||\bp_{t}|}} = 1$ but
$\frac{|\bp_0 \cap \bp_{t+1}|}{\sqrt{|\bp_0||\bp_{t+1}|}}\le C\eps$, 
implying that $\frac{|\bp_t \cap \bp_{t+1}|}{\sqrt{|\bp_t||\bp_{t+1}|}} \le C \eps $.
This is a contradiction, unless $\calA$ did not succeed on some input, so the success probability on all inputs cannot exceed $\frac{1}{2}$.
By a union bound $\calA$ must have been successful on all $T+1$ inputs with probability at least $1-\delta(T+1)$, so it must be the case that $\delta > \frac{1}{2(T+1)} \ge \frac{1-\rho}{6}$.
\end{proof}

\subsection{Low-degree polynomial estimators}\label{sec:ld-gnp}
As discussed above, the shortest $s,t$-path problem can be solved, even exactly, in polynomial time.
The overlap-gap property means that it cannot be solved by any sufficiently smooth algorithm.
In particular, in \cite{GJW24,Wein22} it is shown that polynomials of degree $O(1)$ are smooth with probability $\Omega(1/\poly(n))$ when the input is an adjacency matrix of a graph sampled from $\bbG(n,q)$.\footnote{They also observe that this is the case for e.g. polynomials in Normal-valued variables, where the fact is just a consequence of hypercontractivity. 
In sparse random graphs, matters are a bit more complex.
}
Here, we will apply an invariance principle to conclude that symmetric degree-$O(1)$ polynomials are smooth with probability $\Omega(1)$, which will allow us to prove a lower bound against degree-$O(1)$ polynomials.

But while degree $O(1)$ polynomials are smooth, degree $\omega(1)$ polynomials are not necessarily smooth.
In \pref{sec:lb-ld} we will show that the OGP indeed implies that degree-$O(1)$ polynomials cannot estimate $s,t$-paths, and in \pref{sec:ub-ld} we'll show that polynomials of degree $\Omega(\log n/\log\log n)$ can estimate the shortest $s,t$-path problem very well.

We identify each path $p$ from $1$ to $2$ by its indicator vector in $\{0,1\}^{\binom{[n]}{2}}$.
\begin{definition}\label{def:polyest}
We say that $f:\{0,1\}^{\binom{[n]}{2}}\to\R^{\binom{n}{2}}$ is an \emph{$(\alpha,\beta)$-approximation of the $(1+\eps)$-approximate shortest path between $1,2$} in $\bG \sim \bbG(n,q)$ if $\E[\|f\|^2] = \OPT$,\footnote{This is just a normalization convention; the parameters are easily adjusted for rescaling by $(1+\eps)$.} and
\[
\Pr\left[ \exists p \in \calP_\eps(\bG) \,\, \text{s.t.}\,\, \frac{\langle f,p\rangle}{\sqrt{\E[\|f\|^2]}\cdot\|p\|} \ge \alpha, \quad \text{and}\quad \alpha \le \frac{\|f\|^2}{\E[\|f\|^2]} \le \frac{1}{\alpha}\right] \ge \beta.
\]
\end{definition}

\subsubsection{Lower bound for constant degree polynomials}\label{sec:lb-ld}
Degree-$O(1)$ polynomial estimators are smooth algorithms.
This has already been established in the literature, but extant results such as \cite{GJW24,Wein22} establish that stability holds with probability at least $1/\poly(n)$; here, because our overlap-gap property holds only with probability $1-\frac{1}{\polylog n}$, we require a new statement which takes advantage of the fact that an estimator $f$ is, without loss of generality, symmetric under vertex-relabelings.

\begin{lemma}\label{lem:sym}
    Suppose $f:\{0,1\}^{\binom{[n]}{2}}\to \R^{\binom{[n]}{2}}$ is an approximation to the shortest $1,2$-path.
    For each $\pi \in S_n$, define $f_\pi(G) = \pi^{-1} f(\pi(G))$, and let $f^{\sym}(G) = \E_{\pi \sim \Unif(S_{[n]\setminus \{1,2\}})} \pi^{-1} f (\pi (G))$.
    Then for any coupling of graphs $\bG \sim \bbG(n,q)$ and $s-t$ paths $\bp$ in $\bG$,
    \[
    \frac{\E\langle f(\bG), \bp \rangle}{\sqrt{\E\|f(\bG)\|^2}} \le \frac{\E\langle f^\sym(\bG), \bp \rangle}{\sqrt{\E\|f^\sym(\bG)\|^2}}.
    \]
\end{lemma}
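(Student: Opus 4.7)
The plan is to exploit two ingredients: (i) the joint law of $(\bG,\bp)$ is invariant under the action of $S_{[n]\setminus\{1,2\}}$ (this is where the restriction to permutations fixing $\{1,2\}$ is used: $\bbG(n,q)$ is vertex-exchangeable, and for the natural couplings of interest the law of $\bp$ given $\bG$ is a canonical symmetric object, e.g.\ uniform over $\calP_\eps(\bG)$ or the uniform-random shortest path); and (ii) the squared norm $\|\cdot\|^2$ is convex, so a Jensen-type inequality will drive the denominator down. The overall strategy is therefore to show that symmetrizing $f$ leaves the numerator $\E\langle f(\bG),\bp\rangle$ unchanged while weakly decreasing the denominator $\E\|f(\bG)\|^2$.

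For the numerator, I would unfold the definition and shift the permutation onto the path. Writing the action of $\pi\in S_{[n]\setminus\{1,2\}}$ on $\{0,1\}^{\binom{[n]}{2}}$ as the orthogonal transformation $(\pi v)_{ij}=v_{\pi^{-1}(i)\pi^{-1}(j)}$, we have $\langle \pi^{-1} u,v\rangle=\langle u,\pi v\rangle$, so
\[
\E\langle f^{\sym}(\bG),\bp\rangle \;=\; \E_{\pi}\,\E_{(\bG,\bp)}\langle \pi^{-1}f(\pi\bG),\bp\rangle \;=\; \E_{\pi}\,\E_{(\bG,\bp)}\langle f(\pi\bG),\pi\bp\rangle.
\]
By the joint invariance $(\pi\bG,\pi\bp)\stackrel{d}{=}(\bG,\bp)$, the inner expectation equals $\E\langle f(\bG),\bp\rangle$ for every $\pi$, and averaging over $\pi$ preserves this. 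Hence the numerator is exactly preserved by symmetrization.

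For the denominator, I would apply Jensen's inequality to $\|\cdot\|^2$ pointwise in $G$:
\[
\|f^{\sym}(G)\|^2 \;=\; \Bigl\|\E_{\pi}\pi^{-1}f(\pi G)\Bigr\|^2 \;\le\; \E_{\pi}\|\pi^{-1}f(\pi G)\|^2 \;=\; \E_{\pi}\|f(\pi G)\|^2,
\]
using that $\pi^{-1}$ is an isometry. Taking expectation over $\bG$ and using that the marginal law of $\bG$ is $\pi$-invariant (which holds without any assumption on the coupling), we get $\E\|f^{\sym}(\bG)\|^2\le \E_{\pi}\E_{\bG}\|f(\pi\bG)\|^2=\E\|f(\bG)\|^2$. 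Combining this with the equality of the numerators immediately gives the claimed inequality.

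The only subtle point is the joint invariance used in the numerator step; the marginal invariance of $\bG$ is automatic, but we need $(\pi\bG,\pi\bp)\stackrel{d}{=}(\bG,\bp)$. For the couplings that arise in our lower-bound applications (where $\bp$ is either the uniform-random shortest path or any other rule that treats all vertices of $[n]\setminus\{1,2\}$ symmetrically) this is immediate. If one wished to state the lemma for a truly arbitrary coupling, one could first symmetrize the coupling itself by pre-composing with a uniformly random $\pi$; this can only be favorable for the lower-bound purpose of maximizing the correlation ratio over couplings, so the generality is not lost. I expect this invariance bookkeeping to be the only delicate part; the rest is a one-line Jensen computation.
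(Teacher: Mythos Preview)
Your proposal is correct and follows essentially the same argument as the paper: equality of numerators via the $S_{[n]\setminus\{1,2\}}$-invariance of the joint law $(\bG,\bp)$, and Jensen's inequality for the denominator. Your remark about needing joint invariance (and symmetrizing the coupling if necessary) is a valid clarification of a point the paper glosses over in stating the lemma for ``any coupling.''
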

\begin{proof}
The numerators are the same, because $\bG$ and its $1-2$ paths have a distribution which is symmetric under permutations of $[n]$ which fix $1,2$, so 
\[
\E_{\bG,\bp}\langle f(\bG),\bp\rangle = \E_{\bG,\bp}\E_\pi \langle f(\pi(\bG)),\pi(\bp)\rangle = \E_{\bG,\bp} \langle f^\sym(\bG),\bp\rangle.
\]
The denominator on the right-hand side is only smaller by Jensen's inequality.
\end{proof}

If $f$ is a $(1-\eta,1-\delta)$-approximation of the $(1+\eps)$ shortest $s,t$-path for $\eta,\delta,\eps$ sufficiently small, then by an averaging argument $\langle f(\bG),\bp \rangle$ will come close to saturating the Cauchy-Schwarz inequality with good probability. 
In this case, \pref{lem:sym} implies that $f^\sym(G) = \E_\pi \pi^{-1}(f(\pi(G)))$ is at least as close to saturating Cauchy-Schwarz in expectation, from which another averaging argument implies that $f^\sym$ is also a $(1-\eta',1-\delta')$-approximation to the $s,t$-shortest path, for $\eta',\delta'$ going to zero with $\eta,\delta$.
Since we will only be concerned with the small $\eta,\delta$ regime, from now on without loss of generality we consider only symmetric $f$, as ruling out $f^\sym$ suffices to rule out any $f$.
 
Symmetric functions are more stable than asymmetric functions, which allows us to give better quantitative guarantees than \cite{GJW24,Wein22}.
We will prove the following smoothness result in \pref{sec:concentration}.
\begin{theorem}\label{thm:smooth}
    For $\rho \geq 1-1/T$, $\gamma \geq 3D (6e)^D /T$, and $f$ a degree-$D$ polynomial which is fixed by the action of $S_{[n]\setminus\{1,2\}}$, for  $\rho$-correlated graphs $\bG,\bG' \sim \bbG(n,q)$, we have that
    \begin{align*}
        \Pr\left[ \|f(\bG)-f(\bG')\|_2^2\geq \gamma \E[\|f(\bG)\|^2] \right] \leq \exp\left(-\frac{D}{3 e} \left( \frac{\gamma T}{3D} \right)^{1 / D}\right)+ o_n(1). 
    \end{align*}
\end{theorem}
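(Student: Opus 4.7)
The plan is to combine the invariance principle of \cite{caravenna2023critical} with Gaussian hypercontractivity to control the tail of $g(\bG, \bG') := \|f(\bG) - f(\bG')\|_2^2$.  I will (i) compute $\E[g]$ directly and find it is small, then (ii) bound the higher $L^p$ norms of $g$ via a Gaussian surrogate, and finally combine the two through Markov at an optimized moment.

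For the mean, expand $f$ coordinatewise in the orthonormalized basis $\chi_e = (\bG_e - q)/\sqrt{q(1-q)}$ so that $f = \sum_{|S|\leq D} \hat{f}(S)\, \chi^S$ with $\hat{f}(S) \in \R^{\binom{[n]}{2}}$.  Since $\bG,\bG'$ are $\rho$-correlated, $\E[\chi^S(\bG)\chi^S(\bG')] = \rho^{|S|}$, giving
\[
\E[g] \;=\; 2\sum_{|S|\leq D}(1-\rho^{|S|})\|\hat{f}(S)\|^2 \;\leq\; 2D(1-\rho)\,\E\|f(\bG)\|^2 \;\leq\; \tfrac{2D}{T}\,\E\|f(\bG)\|^2.
\]
For the concentration piece, I invoke the invariance principle of \cite{caravenna2023critical} to transfer the distribution of $g$ (a degree-$2D$ polynomial in the joint edge variables) to that of its analogue $\widetilde g$ in correlated centered Gaussian variables, incurring the $o_n(1)$ error term.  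In the Gaussian setting, hypercontractivity applied coordinatewise to $h_e(\bG,\bG') := f_e(\bG) - f_e(\bG')$ (a degree-$D$ polynomial) gives $\|h_e\|_p \le (p-1)^{D/2}\|h_e\|_2$, hence $\|h_e^2\|_{p/2} \leq (p-1)^D \E[h_e^2]$.  Summing via the triangle inequality in $L^{p/2}$ produces
\[
\|\widetilde g\|_{p/2} \;\leq\; (p-1)^D\,\E[\widetilde g].
\]
A Markov bound at the $(p/2)$-th moment, optimized with $p \asymp t^{1/D}$ for $t = \gamma\E\|f(\bG)\|^2/\E[g] \geq \gamma T/(2D)$, then yields a tail of the form $\exp(-\Theta(D)\cdot t^{1/D})$, matching the claimed bound up to universal constants in the exponent.

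The main obstacle is verifying the hypothesis of the invariance principle, which requires a bound on the $\ell_4$ norm of the gradient of $f$.  For a generic degree-$D$ polynomial on sparse Bernoulli variables this simply fails, but here the $S_{[n]\setminus\{1,2\}}$-equivariance of $f$ forces the coefficients $\hat f(S)$ to depend only on the isomorphism type of $S$ with vertices $1,2$ distinguished.  Consequently the influence of any single ``bulk'' edge (one incident to neither $1$ nor $2$) is spread evenly across $\Theta(n^2)$ edges of the same type, and a careful combinatorial argument---iterating through isomorphism classes of subgraphs of size $\leq D$ incident to a given edge and using the sparsity $q = \Theta(\log n/n)$---converts this uniform spreading into the required $\ell_4$ bound.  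The $o_n(1)$ error in the final statement comes from the residual terms of this invariance step.
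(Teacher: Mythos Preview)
Your proposal is correct and follows the same architecture as the paper: apply the invariance principle of \cite{caravenna2023critical} to transfer $\|f(\bG)-f(\bG')\|_2^2$ to its Gaussian analogue (via a smooth test function $h_\delta$), then invoke Gaussian hypercontractivity---which the paper imports as \cite[Theorem~3.1]{GJW24} rather than rederiving your Markov-at-optimized-moment argument---with the substantive work being the combinatorial fourth-moment bound on the partial derivatives that exploits the $S_{[n]\setminus\{1,2\}}$-symmetry. One point to tighten: the invariance principle is applied to $\Psi=\|f(\bG)-f(\bG')\|_2^2$ rather than to $f$, so it is the partial derivatives of $\Psi$ (a degree-$2D$ polynomial in the joint variables) whose third/fourth moments must be controlled; the paper also first prunes $f$ to monomials supported on ``active'' shapes (uniform edge density $\le 1$, tree components at $s,t$), which is what makes the vertex-counting lemma underlying the influence bound go through.
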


\paragraph{Lower bound from overlap-gap property}
\begin{theorem}\label{thm:nodegreeDpoly}
    For any integer $D\ge 0$ and $0<\eta<0.1$, there is a small enough constant $\delta>0$ such that there is no degree-$D$ polynomial which is a $(1-\eta,1-\delta)$-approximation of the shortest path between $1,2$ in $\bbG(n,q)$.
\end{theorem}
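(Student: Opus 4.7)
The plan is to derive a contradiction, following the OGP-to-lower-bound template used in \pref{cor:gnp-no-stable}, by combining the overlap-gap property (\pref{thm:ogp-gnp}) with the smoothness of symmetric degree-$D$ polynomials (\pref{thm:smooth}). Since the approximation guarantee is preserved under symmetrization by \pref{lem:sym} (up to slightly worse $\eta,\delta$, as noted in the paragraph following that lemma), I assume without loss of generality that $f$ is fixed by $S_{[n]\setminus\{1,2\}}$. Fix $\eps>0$ sufficiently small and a large constant $T$ to be chosen as a function of $D$, and sample an interpolating chain $\bG_0,\ldots,\bG_T$ from $\bbG(n,q)$ exactly as in \pref{cor:gnp-no-stable}, so consecutive pairs are $(1-1/T)$-correlated while $\bG_0$ and $\bG_T$ are independent.

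I would then condition on three simultaneous good events: (i) for every $t$, there is a witness path $\bp_t\in\calP_\eps(\bG_t)$ as in \pref{def:polyest} (union bound failure $\le (T+1)\delta$); (ii) the overlap-gap property holds for every pair $(\bG_s,\bG_t)$ (union bound failure $O(T^2\log\log n/\log n)=o_n(1)$ by \pref{thm:ogp-gnp}); and (iii) $\|f(\bG_t)-f(\bG_{t+1})\|^2\le \gamma\, \E\|f(\bG)\|^2$ for every $t$, for some $\gamma<1$ (union bound failure $\le T\exp(-\tfrac{D}{3e}(\tfrac{\gamma T}{3D})^{1/D})+o_n(1)$ by \pref{thm:smooth}). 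Fixing $\gamma=1/2$ and taking $T$ exponentially large in $D$ drives the term from \pref{thm:smooth} below $1/4$; then choosing $\delta<1/(4(T+1))$ makes the probability of the intersection strictly positive, so some realization in the intersection must exist and will yield a contradiction.

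On any such realization, consider the sequence of normalized overlaps $|\bp_0\cap\bp_t|/\sqrt{|\bp_0||\bp_t|}$: it equals $1$ at $t=0$ and is strictly below $C\eps$ at $t=T$, since $\bG_0,\bG_T$ are independent and hence trigger the sharper ``disjoint'' branch of \pref{thm:ogp-gnp}. By the OGP dichotomy applied pairwise, every intermediate overlap is either $<C\eps$ or exactly $1$, so there must be a first jump index $t^*$ at which $\bp_{t^*}=\bp_0$ but the overlap of $\bp_0$ with $\bp_{t^*+1}$ drops below $C\eps$; in particular, $\bp_{t^*}$ and $\bp_{t^*+1}$ have normalized overlap $<C\eps$. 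Let $\tilde f_t := \frac{\langle f(\bG_t),\bp_t\rangle}{\|\bp_t\|^2}\bp_t$ be the orthogonal projection of $f(\bG_t)$ onto the line through $\bp_t$. The defining inequalities of \pref{def:polyest} yield $\|\tilde f_t\|^2\ge (1-\eta)^2\E\|f\|^2$ and $\|f(\bG_t)-\tilde f_t\|^2=\|f(\bG_t)\|^2-\|\tilde f_t\|^2\le O(\eta)\E\|f\|^2$, while the near-orthogonality of $\bp_{t^*},\bp_{t^*+1}$ combined with $\|f\|^2\le \E\|f\|^2/(1-\eta)$ forces $\langle \tilde f_{t^*},\tilde f_{t^*+1}\rangle\le O(\eps)\E\|f\|^2$. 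Combining these via the triangle inequality gives $\|f(\bG_{t^*})-f(\bG_{t^*+1})\|^2\ge (2-O(\sqrt{\eta}+\sqrt{\eps}))\E\|f\|^2$, which for $\eta<0.1$ and $\eps$ small exceeds $\gamma\, \E\|f\|^2=\E\|f\|^2/2$, contradicting event (iii).

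The main obstacle, and the quantitatively delicate point, is calibrating $T$, $\gamma$, and $\delta$ so that all three failure probabilities can be simultaneously beaten. The smoothness bound of \pref{thm:smooth} is only useful when $\gamma T$ is large compared to $3D(6e)^D$, so $T$ must be taken exponential in $D$, which in turn forces $\delta=2^{-\Theta(D)}$---exactly the dependence promised in \pref{prop:ld-intro}. Everything else is routine: the Cauchy--Schwarz saturation algebra and the union bounds are standard, and the nontrivial analytic content has already been paid for by the invariance-principle argument underlying \pref{thm:smooth} (which is why the work of \pref{sec:concentration} is required rather than black-box appeals to \cite{GJW24,Wein22}).
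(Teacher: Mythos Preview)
Your proposal is correct and follows essentially the same approach as the paper: build the $T$-step interpolating chain, union-bound the algorithm/OGP/smoothness events, locate a first jump index, and derive a contradiction between the smoothness bound from \pref{thm:smooth} and the large separation forced by \pref{thm:ogp-gnp}. The only cosmetic differences are that (a) the paper phrases the contradiction as ``$\|p_{t^*-1}-p_{t^*}\|$ is both small and large'' by sandwiching through $f$ via $\|f(\bG_t)-p_t\|^2\le O(\eta)L$, whereas you phrase it dually as ``$\|f(\bG_{t^*})-f(\bG_{t^*+1})\|$ is both small and large'' by projecting $f$ onto the path directions, and (b) the paper only invokes OGP on consecutive pairs $(\bG_t,\bG_{t+1})$ and the endpoint pair $(\bG_0,\bG_T)$, then reaches $p_0=p_{t^*-1}$ by a short induction, whereas you invoke OGP directly on every pair $(\bG_0,\bG_t)$; both are valid and cost the same $o_n(1)$ after a union bound over $O(T^2)$ pairs.
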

\begin{proof}
We give a proof by contradiction.
Suppose there exists a degree-$D$ polynomial which is a $(1-\eta,1-\delta)$-approximation of the shortest path between $1,2$ in $\bbG(n,q)$.
Choose $T = \lceil \frac{1}{1-\rho} \rceil = O(1)$, and note $\rho \ge 1-\frac{1}{T}$. 
We sample a sequence of graphs $\bG_0,\bG_1,\ldots,\bG_T$ as in the proof of \pref{cor:gnp-no-stable}, where each $\bG_t \sim \bbG(n,q)$, $\bG_0,\bG_T$ are independent, and each pair $\bG_t,\bG_{t+1}$ is marginally $(1-\frac{1}{T})$-correlated.
For shorthand, call $L = \frac{\log n}{\log\log n}$ to be the typical approximate length of the shortest path in $\bG$.
We define four events:
\newcommand{\Estable}{\calE_{\mathrm{stable}}}
\newcommand{\Egap}{\calE_{\mathrm{gap}}}
\newcommand{\Ealg}{\calE_{\mathrm{alg}}}
\newcommand{\Epaths}{\calE_{\mathrm{paths}}}
\begin{enumerate}
    \item Let $\Estable$ be the event that $\|f(\bG_t)-f(\bG_{t+1})\| \le s \sqrt{\E[\|f(\bG)\|^2]}$ for all $0\leq t\leq T-1$.
    \item Let $\Ealg$ be the event that $f$ is a $(1-\eta,1-\delta)$-approximation of a path in $\calP_\eps(\bG_t)$ for all $0\leq t\leq T$.
    \item Let $\Egap$ be the event that 
    for all $t\leq T-1$ and $(p_t,p_{t+1}) \in \calP_{\eps}(\bG_t) \times \calP_\eps(\bG_{t+1})$, $\frac{1}{L}\|p_t-p_{t+1}\|^2 \not\in (0,2-2C\eps)$, and that
    for all $(p_0,p_T) \in \calP_{\eps}(\bG_0) \times \calP_{\eps}(\bG_T)$, $\frac{1}{L}\|p_0-p_T\|^2 \ge 2- 2C\eps$.
    \item Let $\Epaths$ be the event that for all $t\leq T$, $\calP_\eps(\bG_t)\neq \emptyset$ and contains only paths of size in $[1,1+\eps]L$.
\end{enumerate}
We argue that these events cannot happen simultaneously for small enough constant $s$ and $\eps$.
Indeed, assume $\Epaths$ occurs, so that each set of paths is non-empty and the paths have the expected size; for shorthand, 
assume as well that $\Ealg$ occurs, so that with each output $f(\bG_t)$ we may associate a path $p_t$ with $\|f(\bG_t) - p_t\|^2 \le \|f(\bG_t)\|^2 + \|p_t\|^2 -2(1-\eta) \sqrt{\E \|f(\bG_t)\|^2 } \|p_t\| \le 4\eta L$. Assume as well that $\Egap$ occurs, so that $\frac{1}{L}\|p_0-p_T\|^2 \ge 2-2C\eps$, and for all $t$, $\frac{1}{L} \|p_t-p_{t+1}\|^2\not\in (0,2-2C\eps)$.
This implies there must be a first time $t^* \in [T]$ where $\frac{1}{L}\|p_0-p_{t^*}\|^2 \ge 2-2C\eps$.
But now if $\Estable$ occurs, by triangle inequality
\begin{align*}
\|p_0 - p_{t^*}\| 
&\le \|p_0 - p_{t^*-1}\| + \|p_{t^*-1} - p_{t^*}\|\\
&= \|p_{t^*-1} - p_{t^*}\|\\
&\le \|p_{t^*-1} - f(\bG_{t^*-1})\| + \|f(\bG_{t^*-1}) - f(\bG_{t^*})\| + \|f(\bG_{t^*}) - p_{t^*}\|\\
&\le 4\sqrt{\eta} \sqrt{L} + O(s \sqrt{L}) < \sqrt{(2-2C\varepsilon)L},
\end{align*}
for small constant $s$ and $\eps$, which is a contradiction.

Now, we lower-bound the probability of the events one-by-one.
By construction, each $\bG_t \sim \bbG(n,q)$ and each pair $\bG_t,\bG_{t+1}$ is marginally $(1-\frac{1}{T})$-correlated. 
By \pref{thm:smooth} and a union bound over pairs $\bG_t,\bG_{t+1}$, we have $\Pr[\ol{\Estable}] \le T \exp(-\frac{D}{3 e} ( \frac{s^2 T}{3D})^{1 / D})+ o_n(1)$.
The failure probability of the algorithm is at most $\delta$ by assumption, so by a union bound $\Pr[\ol{\Ealg}] \le (T+1)\delta$. 
Finally, since $\bG_0$ and $\bG_T$ are independent, from \pref{thm:ogp-gnp} and \pref{lem:shortest-gnp} we have that $\Pr[\ol{\Egap}] + \Pr[\ol{\Epaths}] \le (T+1) \cdot O\left(\frac{\log\log n}{\log n}\right)$. 
Thus, it must be the case that
\begin{align*}
0 
\ge \Pr[\Estable\cap \Ealg\cap \Egap \cap \Epaths]
&\ge 1 - \Pr[\ol{\Estable}] - \Pr[\ol{\Ealg}] - \Pr[\ol{\Egap}] -\Pr[\ol{\Epaths}] \\
&\ge 1 - T \exp\Big(-\frac{D}{3 e} \Big( \frac{s^2 T}{3D} \Big)^{1 / D}\Big)+ o_n(1) - (T+1)\cdot \delta - T \cdot O\left(\tfrac{\log\log n}{\log n}\right),
\end{align*}
the positivity of the right-hand-side for $\delta = \frac{1}{100(T+1)}$ and large enough $T$ yields a contradiction.
\end{proof}

\medskip

\subsubsection{A polynomial estimator of logarithmic degree}\label{sec:ub-ld}
Here we will argue that the shortest $s,t$-path may be computed with reasonable success probability by degree $o(\log n)$ polynomials.
This estimator does not, strictly speaking, fulfill \pref{def:polyest}, because of an issue of breaking symmetry.
However,it gives almost full information about whether an edge participates in a short $s,t$-path.\footnote{This raises another question about the usual OGP-based low-degree polynomial lower bounds: they may be brittle to the specific definition of a low-degree polynomial estimator.}
\begin{lemma}\label{lem:exact-ld-2}
There exists a degree-$O(\frac{\log n}{\log\log n})$ polynomial in the entries of the adjacency matrix of $\bG$ which achieves correlation $1-o(1)$ with the indicator that $(i,j)$ participates a path of length $m= \lceil \frac{\log n}{\log nq}\rceil + 1$ in $\bG$.
Furthermore, with high probability the polynomial is computable in polynomial time.
\end{lemma}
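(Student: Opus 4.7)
The plan is to use the degree-$m$ multilinear polynomial
\[
f_{ij}(G) \;:=\; \sum_{\substack{p \in \calP_{(m)}(K_n)\\ (i,j) \in p}} \prod_{e \in p} G_e,
\]
which counts the length-$m$ $(1,2)$-paths of $G$ passing through edge $(i,j)$. Since $m = \lceil \log n/\log nq\rceil + 1 = \Theta(\log n/\log\log n)$, this is a polynomial of the desired degree. Factor $f_{ij}(G) = G_{ij}\cdot h_{ij}(G)$ where $h_{ij}$ depends only on the edges other than $(i,j)$; then the target indicator equals $X_{ij}(\bG) = \bG_{ij}\cdot \Ind[h_{ij}(\bG)\ge 1]$, and by the independence of $\bG_{ij}$ from $h_{ij}$, the cosine correlation reduces to
\[
\mathrm{corr}(f_{ij}(\bG), X_{ij}(\bG)) \;=\; \frac{\E[h_{ij}(\bG)]}{\sqrt{\E[h_{ij}(\bG)^2]\cdot \Pr[h_{ij}(\bG)\ge 1]}}.
\]

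A direct first-moment count gives $\E[h_{ij}] = \Theta(m(nq)^{m-1}/n^2)$, and since $(nq)^{m-1} \in [n, n\cdot nq]$ this is $\Theta(m/n)$ up to polylogarithmic factors, and in particular $o(1)$. The key technical step is to show $\E[h_{ij}(h_{ij}-1)] = o(\E[h_{ij}])$. I would parameterize ordered pairs $(p_1, p_2)$ of distinct length-$m$ paths through $(i,j)$ by the number $k \ge 1$ of shared edges (including the pinned edge $(i,j)$) and apply \pref{claim:path-pairs} with one edge pinned; each increment of $k$ decays the contribution by roughly $(nq)^{-1} = O(1/\log n)$, so the sum is dominated by $k = 1$ and evaluates to $o(\E[h_{ij}])$. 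Granted this, $\E[h_{ij}^2] = \E[h_{ij}](1+o(1))$, and by Cauchy--Schwarz together with a Paley--Zygmund argument, $\Pr[h_{ij}\ge 1] = \E[h_{ij}](1 \pm o(1))$; hence the correlation above is $1 - o(1)$.

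For polynomial-time computability, although $f_{ij}$ formally has super-polynomially many monomials, evaluating $f_{ij}(\bG)$ only requires enumerating the length-$m$ $(1,2)$-paths through $(i,j)$ that actually exist in $\bG$. By standard concentration, the maximum degree in $\bG \sim \bbG(n,q)$ is $O(\log n)$ with high probability, so a depth-$m$ DFS from vertex $1$ traverses at most $O(\log n)^m = n^{O(1)}$ simple walks; filtering to those reaching vertex $2$ and containing $(i,j)$ produces $f_{ij}(\bG)$ in polynomial time.

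I expect the main obstacle to be the second-moment bookkeeping: one must verify that pairs whose intersection has multiple connected components or many shared edges contribute only lower-order terms. This reduces to a routine adaptation of \pref{claim:path-pairs} with a pinned edge, and the factor $m = O(\log n/\log\log n)$ is easily absorbed into the combinatorial estimates, so no new ideas beyond those already in the excerpt should be required.
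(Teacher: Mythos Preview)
Your proposal is correct and follows essentially the same route as the paper: same polynomial $f_{ij}$, same second-moment bookkeeping via a pinned-edge analogue of \pref{claim:path-pairs}, and the same DFS/BFS evaluation using the $O(\log n)$ max-degree bound. Your factorization $f_{ij}=G_{ij}\,h_{ij}$ is a genuine simplification: because $\Pr[h_{ij}\ge 1]\le \E[h_{ij}]$ trivially and $\Pr[h_{ij}\ge 1]\ge \E[h_{ij}]^2/\E[h_{ij}^2]$ by Cauchy--Schwarz, you bypass the paper's appeal to Janson's inequality for the upper bound on $\E[\bp_{ij}]$.

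One inaccuracy to fix: the claim that ``the sum is dominated by $k=1$'' is wrong. In the pinned-edge count $W_{k,m}$, the case $k=1$ (intersection is exactly the pinned edge) forces the intersection graph to have three components when $\{i,j\}\cap\{1,2\}=\emptyset$, giving $W_{1,m}\asymp m^2 n^{2m-6}$, whereas for $k\ge 2$ the two-component configurations give $W_{k,m}\asymp k^2 n^{2m-k-4}$. So the ratio of the $k=2$ term to the $k=1$ term is of order $1/(m^2 q)\gg 1$; your ``decay by $(nq)^{-1}$ per increment'' only kicks in for $k\ge 2$. The dominant contribution to $\E[h_{ij}(h_{ij}-1)]$ is from $k=2$, and it is of order $\E[h_{ij}]/m=o(\E[h_{ij}])$, so the conclusion survives --- but the reasoning you sketched does not.
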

\begin{proof}
    Define $m = \lceil\frac{\log n}{\log nq}\rceil +1 = \frac{\log n}{\log nq} + 1+\xi$. 
    By \pref{lem:shortest-gnp}, the length of the shortest path in $\bG$ will be close to $m$, and with $1-o(1)$ probability there will be a $1$-$2$ path of length $m$. 
    Define $\bp_{ij}$ to be the indicator that $(i,j)$ participates in an $1$-$2$ path of length $m$. 
    Define the polynomial
    \[
        f_{ij}(\bG) = \sum_{\substack{p \in \calP_{(m)}\\ (i,j) \in p}} \prod_{(a,b) \in p} \Ind[(a,b) \in E(\bG)],
    \]
    where we recall that $\calP_{(m)}$ is the set of all $m$-edge $1$-$2$ paths in $K_n$.

    We now verify that $f_{ij}(\bG)$ correlates well with $\bp_{ij}$, in the sense that
    \begin{align}
    \label{eq:cor} \frac{\E[f_{ij}(\bG) \bp_{ij}]}{\sqrt{\E[f_{ij}(\bG)^2]\E[\bp_{ij}^2]}} = 1-o(1).
    \end{align}
    We must consider separately the case when $\{i,j\}\cap \{1,2\} = \emptyset$ from the case when $|\{i,j\}\cap\{1,2\}| = 1$; for brevity's sake we consider only the latter case, as the former case is similar (and actually simpler).
    
    To compute the correlation, we use that $f_{ij}(\bG)$ takes on non-negative integer values, and is zero only if $(i,j)$ participates in no $1$-$2$ paths of length $m$.
    Therefore
    \[
    \E[f_{ij}(\bG)\bp_{ij}] = \E[f_{ij}] = 2(m-2) n^{\ul{m-3}}\cdot q^m,
    \]
    as there are $2(m-2)n^{\ul{m-3}}$ paths from $1$ to $2$ that involve the edge $i,j$ (choose the location and orientation of $(i,j)$, then the remaining vertex labels). 
    By our choice of $m$, this quantity is $O(\frac{\log^{2+\xi} n}{n^2})$.

    To compute the $\E[\bp_{ij}^2] = \E[\bp_{ij}] = \Pr[(i,j) \text{ in an }m-\text{path}]$, we apply the Janson inequality (see e.g. Chapter 8 of \cite{alon2016probabilistic}):
    \begin{fact}[Corollary of Janson Inequalities]\label{fact:janson}
       If $\{H_\alpha\}_{\alpha}$ is a set of edge-induced subgraphs of $K_n$, the probability that none of the subgraphs appear in $\bbG(n,q)$ is at least $\prod_\alpha(1-q^{|E(H_\alpha)|})$.
    \end{fact}
    This implies that 
    \[
    \E[\bp_{ij}] 
    = 1 - \Pr[(i,j) \text{ not in }m-\text{path}] 
    \le 1 - (1-q^m)^{2(m-2)n^{\ul{m-3}}}
    \le (1+o(1))\cdot 2(m-2)n^{\ul{m-3}}q^m,
    \]
    where we have used that $q^{2m} \ll 1$.
    
    Finally, we compute the second moment of $f_{ij}$.
    For this, we will need to count the number of pairs of paths from $1,2$ that intersect on the edge $(i,j)$.
    \begin{claim}
        Let $W_{k,m}$ be the number of pairs of paths from $1$ to $2$ that overlap on a total of $k$ edges, one of which is the edge $(i,j)$.
        Then
        \[
        W_{k,m} \le \begin{cases}
           4(m-2)^2n^{2m-6} & k=1,\\
           k(k+1)n^{2m-k-4} + O\left(k^4 m^6\right) n^{2m-k-5} &k>1.
        \end{cases}
        \]
    \end{claim}
    \begin{proof}
        Let $c$ be the number of components in $p\cap p'$.
        As $k < m$, there must be at least two components.
        
        In the case $c = 2$, only $k > 1$ is possible, since we have assumed $\{i,j\}$ does not intersect with $\{1,2\}$. 
        So one chooses $k+1$ possibilities for the size of the component containing $1$, $k$ possibilities for the edge $(i,j)$, and $2m - k - 4$ vertex labels, for a total of $k(k+1)n^{2m - k - 4}$ choices.

        In the case $c = 3$, $k=1$, there are $4(m-2)^2n^{2m-6}$ pairs: choose the location of $i,j$ in each path, the orientation of $i,j$ in each path, and then all the remaining vertex labels.

        Finally, if $c \ge 3$ we argue just as in \pref{claim:path-pairs}, except that we choose one of the $k$ edges to be $(i,j)$, choose its $2$ orientations, and then lose $n(n-1)$ choices of vertex labels since $i,j$ are fixed.
        Hence summing over all $c \ge 3$, in this case there are at most $O(k^4m^6n^{2m-k-5})$ paths.
    \end{proof}
    Thus by \pref{claim:path-pairs},
    \begin{align*}
        &\E[f_{ij}(\bG)^2] 
        =\sum_{(i,j) \in p,p' \in \calP_{(m)}} q^{2m - |p\cap p'|}\\
        &\le 2(m-2)n^{m-3} q^{m} 
        + 4(m-2)^2 n^{2m-6} q^{2m-1}
        +\sum_{k=2}^{m-1} W_{k,m} q^{2m-k}\\
        &\le 2(m-2)n^{m-3} q^{m} 
        + 4(m-2)^2 n^{2m-6} q^{2m-1}
        + \sum_{k=2}^{m-1} \left(k(k+1)n^{2m-k-4} + O(k^4m^6)n^{2m-k-5}\right) q^{2m-k}\\
        &\le 2(m-2)n^{m-3} q^{m} 
        + 4(m-2)^2 n^{2m-6} q^{2m-1}
        + n^{m-3}q^{m}\sum_{k=2}^{m-1} \frac{(nq)^m}{n\cdot (nq)^k}\left(k(k+1) + O\left(\frac{k^4m^6}{n}\right)\right) \\
        &\le 2(m-2)n^{m-3} q^{m} 
        + 4(m-2)^2 n^{2m-6} q^{2m-1}
        + n^{m-3}q^{m}\sum_{k=2}^{m-1} \frac{(C\log n)^\xi}{n\cdot (C\log n)^k}\left(k(k+1) + O\left(\frac{k^4m^6}{n}\right)\right) \\
        &\le \left(2(m-2) + o(1)\right)\cdot n^{m-3} q^{m}.
    \end{align*}
    Putting these together, the correlation \pref{eq:cor} is $\ge 1-o(1)$, as desired.
    
    Clearly, $f_{ij}$ is a polynomial of degree $m= o(\log n)$ in the entries of the adjacency matrix of $\bG$.
    The polynomial can be computed approximately in polynomial time, either approximately computed using color-coding \cite{AYZ95,AR02}, or, alternatively, by breadth-first search in $G$, on the event that $G$ is sparse (more discussion in the proof of \pref{lem:sampling}). 
\end{proof}

\subsection{Sampling despite disorder chaos}\label{sec:disorder}
Disorder chaos is a stronger version of the overlap-gap property: it is a more quantitative overlap-gap statement about samples from the distribution over good solutions.
In our context, we would say that the shortest-paths problem exhibits disorder chaos if with high probability over $\rho$-correlated $\bG,\bG'\sim \bG(n,q)$,
\[
\lim_{\rho \to 1} \E_{\bp,\bp' \sim \mathrm{Unif}(\calP_\eps(\bG))\otimes \mathrm{Unif}(\calP_\eps(\bG'))}\left[\left(\frac{\langle\bp,\bp'\rangle}{\ell^*}\right)^2\right] = o_n(1),
\]
Where $\ell^* = \frac{\log n}{\log nq}$ is the length of a typical path.
One could instead consider a different measure over $\calP_\eps(\bG)$, but we take the uniform measure here for simplicity.

Disorder chaos sometimes implies lower bounds against stable algorithms for sampling from $\Unif(\calP_\eps(\bG))$ (see \cite{el2022sampling}).
The argument goes by showing that, under certain conditions, disorder chaos implies a lower bound on the Wasserstein distance between $\Unif(\calP_\eps(\bG))$ and $\Unif(\calP_\eps(\bG'))$.
Let $\Pi(\bG,\bG')$ be the space of couplings of $\Unif(\calP_\eps(\bG))$ and $\Unif(\calP_\eps(\bG'))$.
If one can show that with high probability,
\begin{equation}
\lim_{\rho \to 1} \inf_{\pi \in \Pi(\bG,\bG')} \E_{(\bp,\bp')\sim \pi}\left[\frac{1}{\ell^*}\|\bp - \bp'\|^2\right] > c - o_n(1)\label{eq:wass}
\end{equation}
for some universal constant $c > 0$, then a concise argument\footnote{The triangle inequality.} of \cite{el2022sampling} yields lower bounds against smooth algorithms for sampling from $\mathrm{Unif}(\calP_\eps(\bG))$ in the Wasserstein distance.

Here, we will establish that shortest paths in $\bbG(n,q)$ have the Wasserstein distance bound as in \pref{eq:wass}, implying that stable algorithms cannot sample.
Furthermore, we will show that there is an efficient, non-stable algorithm that, with high probability over $\bG \sim \bbG(n,q)$ succeeds in producing exact samples from $\Unif(\calP_\eps(\bG))$.

\paragraph{A lower bound on the Wasserstein distance.}
Here we establish that \pref{eq:wass} holds for shortest path in $\bbG(n,q)$, implying that stable algorithms cannot sample from $\Unif(\calP_\eps(\bG))$.
\begin{lemma}\label{lem:disorder}
    Suppose $\rho < 1$, and $\bG,\bG' \sim \bbG(n,q)$ are $\rho$-correlated, with $q = \frac{C\log n}{n}$ and $C \ge 1$.
    Let $\Pi(\bG,\bG')$ be the set of all couplings of $\Unif(\calP_\eps(\bG)),\Unif(\calP_\eps(\bG'))$.
    Then for all $\delta > 0$, there exists $n$ sufficiently large so that with probability $1-o_n(1)$,
    \[
    \inf_{\pi \in \Pi(\bG,\bG')} \E_{(\bp,\bp')\sim \pi}\left[\frac{1}{\ell^*}\|\bp - \bp'\|^2\right] \ge 1 - \delta.
    \]
\end{lemma}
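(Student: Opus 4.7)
The approach combines the overlap-gap property of \pref{thm:ogp-gnp} with a first-moment estimate on $M := |\calP_\eps(\bG) \cap \calP_\eps(\bG')|$, the number of near-shortest paths surviving in both graphs. Fix $\eps$ small in terms of $\delta$. From \pref{thm:ogp-gnp} (applied with $\rho < 1$ a fixed constant, so the second case of the dichotomy is the relevant one), with probability $1 - o_n(1)$ every pair $(p,p') \in \calP_\eps(\bG) \times \calP_\eps(\bG')$ has $|p \cap p'|/\sqrt{|p||p'|} \in [0, C\eps) \cup \{1\}$; from \pref{lem:shortest-gnp} both $N := |\calP_\eps(\bG)|$ and $N' := |\calP_\eps(\bG')|$ equal $(1 \pm o(1)) n^\eps$, and all the paths involved have length in $[\ell^*, (1+\eps)\ell^*]$. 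Call the joint event $\calE$.

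\textbf{The core first-moment estimate.} A specific labelled $1$-$2$ path of length $m$ lies in both $\rho$-correlated graphs with probability $(q(\rho + (1-\rho)q))^m$; since $q \to 0$, this is at most $(\rho q)^m (1 + o(1))^m$. Summing over $m \in [\ell^*, (1+\eps)\ell^*]$ and over $n^{\ul{m-1}}$ labellings,
\[
\E[M] \le \sum_{m = \ell^*}^{(1+\eps)\ell^*} n^{\ul{m-1}} q^m \rho^m (1 + o(1))^m \le O(\ell^*) \cdot n^\eps \cdot \rho^{\ell^*}.
\]
Because $\ell^* = \Theta(\log n / \log\log n)$ and $\rho < 1$ is a fixed constant, $\rho^{\ell^*} = n^{-\Theta(1/\log\log n)} = o(1)$, so $\E[M] = o(n^\eps)$. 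Markov's inequality then yields $M \le \E[M]\cdot\log n = o(n^\eps)$ with probability $1 - o_n(1)$. Set $\alpha := M / \max(N, N')$, so $\alpha = o(1)$ on $\calE$ intersected with this Markov event.

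\textbf{Finishing via the coupling structure.} For any coupling $\pi \in \Pi(\bG,\bG')$, the marginal constraints force $\pi(p,p) \le \min(1/N, 1/N')$, so the total mass $\pi$ places on the diagonal $\{(p,p) : p \in \calP_\eps(\bG) \cap \calP_\eps(\bG')\}$ is at most $M/\max(N,N') = \alpha$. On every off-diagonal pair OGP gives $|p \cap p'| < C\eps(1+\eps)\ell^*$, so using $|p|, |p'| \ge \ell^*$ we get $\|p - p'\|^2/\ell^* \ge 2 - 2C\eps(1+\eps)$. Combining,
\[
\E_\pi\!\left[\tfrac{1}{\ell^*}\|\bp - \bp'\|^2\right] \ge (1 - \alpha)\bigl(2 - 2C\eps(1+\eps)\bigr) \ge 1 - \delta
\]
once $\eps$ is chosen small and $n$ is large, which is the desired bound.

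\textbf{Main obstacle.} The delicate part is the first-moment bound on $M$: because $\rho^{\ell^*} = n^{-\Theta(1/\log\log n)}$ is only \emph{subpolynomially} small, the ``shared'' paths are barely suppressed, so Markov's inequality is just strong enough to give $M = o(N)$ w.h.p. Any slackness in the $(\rho q + (1-\rho)q^2)^m$ computation (for instance, failing to use that $q \to 0$) would ruin this. The remaining pieces---OGP, concentration of path counts, and the diagonal-mass bound for couplings---then slot in essentially as black boxes.
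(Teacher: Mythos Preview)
Your proof is correct, but it takes a genuinely different route from the paper's.

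The paper does not invoke \pref{thm:ogp-gnp} at all. Instead, it fixes a path $p$ of length $L\in[1,1+\eps]\ell^*$, conditions on $p\subseteq E(\bG)$, and directly bounds by first moment the expected number of paths $p'\in\calP_\eps(\bG')$ with $\langle p,p'\rangle\ge\tfrac12\ell^*$. Using \pref{claim:path-pairs} for the count of $p'$ with $k$ shared edges, and $\Pr[p'\subseteq\bG'\mid p\subseteq\bG]=(\rho+(1-\rho)q)^k q^{L'-k}$, the sum over $k\ge\tfrac12\ell^*$ is $o_n(1)$; the $k=L=L'$ term contributing $(\rho+(1-\rho)q)^L\approx\rho^{\ell^*}=o(1)$ is exactly your key estimate. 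The conclusion is that a typical $p\in\calP_\eps(\bG)$ has \emph{no} $p'\in\calP_\eps(\bG')$ within overlap $\tfrac12$, so any coupling pays cost at least $1$ on such pairs.

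Your approach is more modular: you take the overlap dichotomy $[0,C\eps)\cup\{1\}$ from \pref{thm:ogp-gnp} as a black box, which reduces the problem to controlling only the diagonal mass $M=|\calP_\eps(\bG)\cap\calP_\eps(\bG')|$. The first-moment bound $\E[M]=O(\ell^*\, n^\eps\,\rho^{\ell^*})=o(n^\eps)$, combined with $N,N'=(1\pm o(1))n^\eps$ from \pref{lem:shortest-gnp} and the elementary observation that any coupling of two uniform measures puts mass at most $M/\max(N,N')$ on the diagonal, finishes the argument. This buys you a shorter proof given the earlier machinery, and in fact a stronger conclusion: your off-diagonal bound is $\|p-p'\|^2/\ell^*\ge 2-O(\eps)$ rather than $\ge 1$, so you actually establish the Wasserstein lower bound with constant $2-\delta$ in place of $1-\delta$. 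The paper's approach, by contrast, is self-contained (it reproves the relevant piece of OGP inline with threshold $\tfrac12$ rather than $C\eps$) and makes the transport obstruction more explicit at the level of individual paths.
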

\begin{proof}
    Suppose $p$ is a $1$-$2$ path of length $L \in [1,1+\eps]\ell^*$, where $\ell^* = \frac{\log n}{\log nq}$.
    We will argue that conditioned on $p \in E(\bG)$, with high probability, there is no $\bp' \in \calP_\eps(\bG')$ for which $\langle p,\bp'\rangle \ge \frac{1}{2}\ell^*$.
    This is enough to establish \pref{eq:wass} with $c = 1$, as for all but a vanishing measure of $\bp \sim \Unif(\calP_\eps(\bG))$, there is no $\bp' \in \calP_\eps(\bG')$ to which they could be transported with cost $< 1$.

    The argument is a first moment argument similar to the one used to establish the OGP.
    Conditioned on $p \in E(\bG)$, for any given $1$-$2$ path $p'$ of length $L' \in [1,1+\eps]\ell^*$ which overlaps with $p$ in $k$ edges,
    \[
    \Pr[p' \in \bG' \mid p \in \bG] = (\rho + (1-\rho)q)^k q^{L'-k}.
    \]
    By \pref{claim:path-pairs}, the number of such paths $p'$ is at most $\frac{1}{n^{\ul{L}-1}}M_{k,L-k,L'-k} \le (1+o_n(1)) n^{L'-k}\left(\frac{k+1}{n^2} + \frac{\log^{15} n}{n^3} \right)$ when $k < \min(L,L')$, and at most $1$ when $k = L = L'$.
    So, summing from $k= \frac{1}{2}\ell^*,\ldots, L'$,
    \begin{align*}
    &\E[\# p' \in \calP_\eps(\bG),\, \langle p',p \rangle \ge \frac{1}{2}\ell^*]\\
    &\le (\rho + (1-\rho)q)^L + \sum_{k=\frac{1}{2}\ell^*}^{L'-1} (\rho + (1-\rho)q)^k q^{L'-k}(1+o_n(1)) n^{L'-k}\left(\frac{k+1}{n^2} + \frac{\log^{15} n}{n^3} \right)\\
    &\le o_n(1) + O\left(\frac{(nq)^{L'}\log n}{n^2}\right) \sum_{k=\frac{1}{2}\ell^*}^{L'} \left(\frac{(1+o_n(1))\rho}{nq}\right)^k \\
    &= o_n(1) + O\left(\frac{\log n}{n^{1-\eps}}\right).
    \end{align*}
    Thus, by Markov's inequality, with high probability $p$ will have no $\bp' \in \calP_\eps(\bG')$ which overlaps by $\frac{1}{2}\ell^*$ or more, as desired.
\end{proof}

\paragraph{Polynomial-time algorithm for sampling.}
\begin{lemma}\label{lem:sampling}
There is a polynomial-time algorithm which with high probability succeeds in sampling from $\mathrm{Unif}(\calP_\eps(\bG))$ when $\bG \sim \bbG(n,q)$ for $q \ge \frac{\log n}{n}$.
\end{lemma}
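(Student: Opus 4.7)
The plan is to exhibit an explicit algorithm that first enumerates the set $\calP_\eps(\bG)$ and then samples from it uniformly, together with a high-probability bound showing that the enumeration step runs in polynomial time. By \pref{lem:shortest-gnp} we know that $\OPT = (1\pm o(1))\ell^*$ and $|\calP_\eps(\bG)|=(1\pm o(1))n^\eps$ with high probability, so in particular $\calP_\eps(\bG)$ is non-empty and $\Unif(\calP_\eps(\bG))$ is a well-defined target distribution.

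The algorithm I propose is simply: (i) compute $\OPT$ (for instance via breadth-first search from vertex $1$); (ii) enumerate every $(1,2)$-path in $\bG$ of length at most $(1+\eps)\OPT$ by a depth-first search from $1$ that prunes any branch exceeding the allowed length; (iii) collect these into a list and return a uniformly random element. Correctness is immediate once enumeration finishes. The only content is to argue that step (ii) terminates in polynomial time with high probability.

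For this, the key observation is the standard maximum-degree bound for $\bbG(n,q)$ with $q = \Theta(\log n /n)$: a Chernoff plus union bound gives $\max_v \deg_{\bG}(v) \le K\log n$ with probability $1-o(1)$ for a suitable constant $K$. Conditioned on this event, a depth-first search from $1$ that only explores simple paths of length at most $(1+\eps)\OPT \le (1+\eps)\ell^* + O(1) = O(\log n / \log\log n)$ visits at most
\[
(K\log n)^{(1+\eps)\ell^* + O(1)} = \exp\!\Big(O\Big(\tfrac{\log n}{\log\log n}\Big)\cdot \log\log n\Big) = n^{O(1)}
\]
partial paths; each step costs $\poly(n)$, so the total runtime is polynomial. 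Filtering to those paths that actually terminate at $2$ then yields $\calP_\eps(\bG)$.

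The main obstacle — though quite mild — is this maximum-degree/enumeration bound, since naively the DFS could explode. Once we use the Erd\H{o}s--R\'enyi degree concentration together with the fact that $\OPT$ is only $O(\log n/\log\log n)$ (rather than $\Theta(\log n)$), the product $(\log n)^{O(\log n/\log\log n)}$ is polynomial and the algorithm runs in time $\poly(n)$ with probability $1-o(1)$, completing the proof.
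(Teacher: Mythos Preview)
Your proposal is correct and follows essentially the same approach as the paper: enumerate all $(1,2)$-paths up to length $(1+\eps)\OPT$ via a bounded-depth search, bound the enumeration cost by $(\text{max degree})^{(1+\eps)\ell^*}$, and use the Erd\H{o}s--R\'enyi maximum-degree concentration (the paper uses an MGF/Markov bound to get $\bd_{\max}\le 3nq$, you cite Chernoff for $\bd_{\max}\le K\log n$) to conclude this is $n^{1+\eps+o(1)}$. The only cosmetic differences are that the paper enumerates walks and then discards non-simple ones while you restrict to simple paths during the DFS, and the paper works with $L=(1+\eps)\ell^*$ directly rather than first computing $\OPT$.
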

\begin{remark}\label{rem:weights}
    We remark that the proof of \pref{lem:sampling} can easily be extended to efficiently sample from any distribution on $\calP_\eps(\bG)$ where the weight of each path is polynomial-time computable from its description.
\end{remark}
\begin{proof}[Proof of \pref{lem:sampling}]
    From \pref{lem:shortest-gnp}, the length of the shortest path concentrates around $\ell^* = \frac{\log n}{\log nq}$ with high probability.
    So the algorithm is as follows: one simply enumerates all walks of length $L = (1+\eps)\ell^*$ starting at vertex $1$, and then deletes from the list any walk that either is not simple, or did not encounter vertex $2$.
    Finally, one of these paths is chosen uniformly at random (or, if one wishes, one can add weights which depend on the path length, as discussed in \pref{rem:weights}).
    
    This can be implemented with a stack or a queue, in a manner similar to depth-first search or breadth-first search, except that one must keep track of the length of the path explored so far at every vertex, and the stopping criterion is based on the length of the path rather than on whether vertices have already been explored.
    If $\bd_{\max}$ is the maximum degree of a vertex in $\bG$, then this algorithm comprises of at most  $\bd_{\max}^L$ operations, each requiring the recording of $O(\log n)$ bits of information.
    And in the end, there are at most $\bd_{\max}^L$ paths to work with.
    So provided that $\bd_{\max}^L = \poly(n)$, the algorithm is polynomial time.

    We now argue that this holds with high probability.
    Indeed, the degree $\bd_i$ of vertex $i$ is distributed as $\Bin(n-1,q)$, and so applying Markov's inequality with the moment generating function,
    \[
    \Pr[\bd_i > 3q(n-1)] 
    \le \frac{\E e^{\bd_i}}{e^{3q(n-1)}}
    = \frac{(1-q+qe)^{n-1}}{e^{3q(n-1)}}
    \le \frac{e^{(e-1)q(n-1)}}{e^{3q(n-1)}}
    \le e^{(e-4) q(n-1)} \ll \frac{1}{n^{1.1}},
    \]
    as $q \ge \frac{\log n}{n}$. 
    Hence by a union bound, with high probability the maximum degree is no larger than $3nq$.
    Then with high probability, 
    \[
    \bd_{\max}^L \le (3nq)^{L} = (3nq)^{(1+\eps)\frac{\log n}{\log nq}} \le n^{1+\eps + o(1)},
    \]
    and we have our conclusion.
\end{proof}

\section{First passage percolation on the complete graph}

In this section, we establish an overlap-gap property for shortest paths in the complete graph with random non-negative edge weights.
This problem, too, is solved in polynomial time by the shortest path algorithm.

Our setting will be as follows:
Sample a vector $\bell \in \R_{+}^{\binom{[n]}{2}}$ with i.i.d. $\Exp(1)$ entries.
We will use $\bell$ as an assignment of edge lengths to the edges of $K_n$, and denote the resulting randomly weighted graph by $K_n^{\bell}$.
For each path $p$ in $K_n$, we identify $p$ with the vector in $\{0,1\}^{\binom{[n]}{2}}$ whose entries indicate membership in $p$.
We define the length of $p$ in $K_n^{\bell}$ to be $\len_{\bell}(p) = \langle p,\bell\rangle$, and we define the \emph{hopcount} of $p$ to be $\hop(p) = \|p\|^2$.

In this case our object of study will be minimum-length paths between vertices $1$ and $2$ in $K_n^{\bell}$ (note that in $\bbG(n,q)$ the hopcount and length coincide).
We will begin by listing (and deriving) some properties of the approximate min-length paths, describe a natural distribution over correlated instances, derive the overlap-gap property and lower bounds against smooth algorithms, and finally comment on the existence of efficient algorithms for sampling from $\calP_\eps(\bell)$ and the possibility of a low-degree polynomial estimator.

\subsection{Characterizing approximate shortest paths in weighted complete graphs}
The properties of \emph{the} minimum-length path are studied under the name ``first passage percolation'' on the complete graph.
The behavior of the length and hopcount of the shortest path is well-understood \cite{BGRS04}, as summarized in the following theorem (Theorem 1.1 and 1.2 in \cite{BvdH}):
\begin{theorem*}
   If $\bp$ is the shortest path between $1,2$ in $K_n^{\bell}$, then
   \[
        \left(n \cdot\len_{\bell}(\bp) - \log n, \frac{\hop(\bp)-\log n}{\sqrt{\log n}} \right) \xrightarrow[]{d} (\bX,\bZ),
   \]
   for independent random variables $\bZ,\bX$.\footnote{Further, $\bZ$ is a standard Normal random variable. The distribution of $\bX$ is more involved.}
\end{theorem*}

Here, we are interested in the set of $\eps$-approximate shortest paths, 
\[
\calP_\eps(\bell) \defeq \left\{ p \mid \len_{\bell}(p) \le (1+\eps)\tfrac{\log n}{n}\right\}.
\]
For paths $p,p'$, we define their overlap in terms of the hopcount overlap $\frac{\langle p,p'\rangle}{\|p\|\cdot\|p'\|}$.
To demonstrate the overlap-gap property for $\calP_\eps(\bell)$, it will be helpful to have a bound on the hopcount of an approximate shortest path.
\begin{lemma}\label{lem:path-range}
Let $\eps \in (0,1)$ be a sufficiently small constant.
If $p \in \calP_\eps(\bell)$, then with probability $1-o_n(1)$,
 $\left|\frac{\hop(p)}{\log n} - 1\right|\le 2\sqrt{\eps}$.
\end{lemma}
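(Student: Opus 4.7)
The plan is to run a single-path first moment argument and then recognize the resulting sum over hopcounts as a Poisson tail bound.

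First, I would observe that for a fixed $1$-$2$ path $p$ of hopcount $h$ in $K_n$, the length $\len_\bell(p)=\langle p,\bell\rangle$ is a sum of $h$ i.i.d.\ $\Exp(1)$ variables and hence $\Gamma(h,1)$-distributed. The elementary tail bound $\Pr[\Gamma(h,1)\le t]\le t^h/h!$, together with the count $(n-2)(n-3)\cdots(n-h)\le n^{h-1}$ of $1$-$2$ paths of hopcount $h$ in $K_n$, immediately yields
\[
\E\bigl[\#\{p\in\calP_\eps(\bell):\hop(p)=h\}\bigr]\;\le\;\frac{\bigl((1+\eps)\log n\bigr)^h}{n\cdot h!}.
\]

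Second, I would recognize the right-hand side as a rescaled Poisson probability: with $\lambda=(1+\eps)\log n$ and $X\sim\mathrm{Poisson}(\lambda)$, the bound reads $n^\eps\cdot\Pr[X=h]$. Summing over the bad hopcounts $h$ with $|h/\log n-1|>2\sqrt{\eps}$ therefore reduces the whole problem to a Poisson tail estimate for $X$ being far from its mean $\lambda$.

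Finally, I would apply the standard Chernoff bound for Poisson: writing the threshold as $h=\lambda(1\pm u)$, the values $h=(1\pm 2\sqrt{\eps})\log n$ correspond to $u=(1+o(1))\cdot 2\sqrt{\eps}$, and the rate function $(1\pm u)\log(1\pm u)\mp u\sim u^2/2$ gives a tail probability at most $\exp(-\lambda\cdot u^2/2\cdot(1+o(1)))=n^{-2\eps(1+o(1))}$. Combining with the $n^\eps$ prefactor, the expected number of bad-hopcount paths in $\calP_\eps(\bell)$ is at most $n^{-\eps(1+o(1))}=o(1)$, and Markov's inequality concludes.

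The only subtle point, and the one setting the threshold, is this last arithmetic step: the Chernoff exponent $\lambda u^2/2$ must strictly exceed $\eps\log n$ (the exponent of the $n^\eps$ prefactor), which forces $u>\sqrt{2\eps/(1+\eps)}$. The choice $u\approx 2\sqrt{\eps}$ clears this hurdle with constant-factor slack, which is why the threshold $2\sqrt{\eps}$ (rather than the tight $\sqrt{2\eps}$) appears in the statement.
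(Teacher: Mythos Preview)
Your argument is correct and follows essentially the same first-moment route as the paper: bound $\Pr[\Gamma(h,1)\le L]\le L^h/h!$, multiply by the path count $\le n^{h-1}$, and show the resulting expression is $o(1)$ when summed over hopcounts $h$ outside $(1\pm 2\sqrt{\eps})\log n$. The only difference is in how that last sum is estimated. The paper applies Stirling directly, writes the exponent as a function of $\beta=h/\log n$, and checks by a Taylor expansion that it is negative at $\beta=1\pm 2\sqrt{\eps}$ (after separately disposing of very large $h$). Your repackaging of $\lambda^h/(n\,h!)$ as $n^\eps\cdot\Pr[\mathrm{Poisson}(\lambda)=h]$ lets you invoke an off-the-shelf Chernoff bound instead, which handles all bad $h$ at once and makes the role of the threshold $2\sqrt{\eps}$ versus the tight $\sqrt{2\eps}$ transparent. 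Both routes yield the same $n^{-c\eps}$ decay; yours is a bit slicker, the paper's is more self-contained.
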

\begin{proof}
Let $\bX^{(k)}$ be the number of paths of hopcount $k$ and length at most $L = (1+\eps)\tfrac{\log n}{n}$ between $1$ and $2$.
The number of paths $p$ of hopcount $k$ between $1,2$ is precisely $(n-2)^{\ul{k-1}}$, and the length $\bX_p$ of each such path is distributed as the sum of $k$ independent $\Exp(1)$ random variables, that is, $\bX_p$ is $\mathrm{Gamma}(k,1)$-distributed.
Hence, integrating the density function directly,
\[
\Pr[\bX_p \le L] = \int_0^L \frac{e^{-a}a^{k-1}}{(k-1)!} da \le \frac{1}{k!} L^k.
\]
Thus, combining the above and applying Stirling's formula,
\[
\E[\bX^{(k)}] \le (n-2)^{\ul{k-1}} \cdot \Pr[\bX_p \le L] \le n^{\ul{k-1}} \frac{1}{k!} \left(\frac{(1+\eps)\log n}{n}\right)^k \le \left(\frac{e(1+\eps)\log n}{k}\right)^{k-1} \cdot \frac{\log n }{n}.
\]

In the case that $k > e^2 (1+\eps) \log n$, $\E[X^{(k)}] \le \frac{1}{n^2}$, so union bounding over the at most $n$ choices of such $k$, we have by Markov's inequality that the probability that any $k$-hop path with $k > e^2 (1+\eps)\log n$ is $(1+\eps)$-approximately minimum:
length is at most $1/n$.

Henceforth, we may focus on $k$ of order $\log n$.
Writing $k =\beta \log n$, we have
\[
\log(\E[\bX^{(k)}]) \le \left(\beta + \log(1+\eps)\beta  - \beta\log \beta - 1\right)\cdot \log n  + \log\log n.
\]
This function is concave in $\beta$, maximized at $\beta = (1+\eps)$.
Hence it suffices to verify that the value is negative at $\beta = (1-2\sqrt{\eps})$ and $\beta = (1+2\sqrt{\eps})$, when $\eps$ is sufficiently small.
Indeed, if $\beta =1+\delta$ with $|\delta| < 1$, using the series expansion for $\log(1+\delta)$,
\begin{align*}
1+\delta + (1+\delta)\log(1+\eps) - (1+\delta)\log(1+\delta) - 1
&\le \delta + \eps + \eps\delta - (1+\delta)(\delta - \frac{\delta^2}{2}+\frac{\delta^3}{3} - \cdots)\\
&= \eps + \eps\delta - \frac{1}{2}\delta^2 + O(\delta^3).
\end{align*}
For $|\delta| = 2\sqrt{\eps}$ and $\eps$ sufficiently small, this is negative.
Thus, for $k \le (1-2\sqrt{\eps}) \log n$ and $(1+2\sqrt{\eps})\log n \le k \le e^2(1+\eps)\log n$, $\E[\bX^{(k)}]\le n^{-c}$ for $c>0$ a constant depending only on $\eps$.
Applying Markov's inequality and taking a union bound over all $O(\log n)$ such $k$ completes the proof.
\end{proof}

\subsection{Correlated instances}
In order to state our overlap-gap property for correlated instances, we introduce a canonical joint distribution over sets of correlated edge weights $\bell$.

Given an exponential random variable $\bX_0 \sim \Exp(1)$ and a time $t \in \R_+$, running the following Langevin diffusion process for time $t$ produces an exponential random variable $\bX_t \sim \Exp(1)$, by taking:
\[
d\bX_s = -\Ind[\bX_s \ge 0] ds + \sqrt{2}d\bB_s
\]
for $(\bB_s)_{s\ge 0}$ a standard Brownian motion.
For a function $f:\R_+\to\R$, we use the notation
\[
P_t f(x) = \E[f(\bX_t)\mid \bX_0 = x].
\]

We consider correlated weights $\bell^0$, $\bell^t$ given by sampling $\bell^0 \sim (\Exp(1))^{\otimes \binom{[n]}{2}}$, and then applying the time-$t$ Langevin process to each coordinate $e \in \binom{[n]}{2}$ independently to produce $\bell^t$.
One can compute that
\[
\mathrm{Corr}(\bell^0_e,\bell^t_e) = \frac{\E[(\bell^0_e-1)(\bell^t_e-1)]}{\Var{\bell_e}} = e^{-t}.
\]

\begin{lemma}\label{lem:correlation}
    Let $\calF,\calG\subset \R^{\binom{[n]}{2}}$ be two events. 
    Let $(\bell^s)_{s \ge 0}$ be vectors sampled as described above.
Then for any $t \ge 0$,
    \[
    \Pr[\calF(\bell^0),\calG(\bell^t)] \le (1-e^t)\Pr[\calF(\bell^0)]\Pr[\calG(\bell^0)] + e^{-t}\Pr[\calF(\bell^0),\calG(\bell^0)].
    \]
\end{lemma}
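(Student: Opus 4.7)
The plan is to reduce the statement to a semigroup comparison and then establish this comparison via a coupling/spectral argument. Let $\pi = \Exp(1)^{\otimes \binom{[n]}{2}}$ denote the stationary law and let $P_t = \bigotimes_e P_t^{(1)}$ denote the tensor-product Langevin semigroup (valid because the coordinates evolve independently). Conditioning on $\bell^0$ and using the Markov property,
\[
\Pr[\calF(\bell^0),\calG(\bell^t)] = \E_{\bell^0 \sim \pi}\!\left[\Ind_\calF(\bell^0)\, P_t \Ind_\calG(\bell^0)\right],
\]
so the desired inequality is equivalent to
\[
\E_\pi\!\left[\Ind_\calF\,(P_t-e^{-t}I)\Ind_\calG\right] \le (1-e^{-t})\,\E_\pi[\Ind_\calF]\,\E_\pi[\Ind_\calG].
\]

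The natural strategy is to realize the joint law of $(\bell^0,\bell^t)$ as a mixture of an ``identity'' coupling (weight $e^{-t}$, contributing $e^{-t}\Pr[\calF\cap\calG]$) and a ``refresh'' coupling (weight $1-e^{-t}$, contributing at most $(1-e^{-t})\Pr[\calF]\Pr[\calG]$). Concretely, I would use the spectral structure of the exponential Langevin: the single-coordinate generator is diagonalized by the Laguerre polynomials $\{L_n\}_{n \ge 0}$ with eigenvalues $\{-n\}$, so the semigroup $P^{(1)}_t$ has spectral gap exactly $1$ and contracts the mean-zero subspace by $e^{-t}$. Tensorizing, $P_t$ acts on the tensor-Laguerre basis $\{\phi_\alpha\}$ as multiplication by $e^{-|\alpha|t}$, and the operator $P_t - e^{-t} I$ annihilates the $|\alpha|\le 1$ subspaces while acting with non-positive eigenvalues on the $|\alpha|\ge 2$ subspaces.

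The main obstacle is that the pointwise/$L^2$-spectral bound does not directly produce the claimed inequality against indicators, since negative semi-definiteness of an operator is weaker than the signed bilinear bound we need. I would resolve this using Duhamel's formula,
\[
P_t = e^{-t} I + \int_0^t e^{-(t-s)}(I+\calL) P_s\,ds,
\]
where the task becomes bounding $\E_\pi[\Ind_\calF \cdot (I+\calL) P_s \Ind_\calG] \le \E_\pi[\Ind_\calF]\E_\pi[\Ind_\calG]$ uniformly in $s\in[0,t]$; integrating against $e^{-(t-s)}$ over $s$ then contributes the $(1-e^{-t})$ factor. This integrated bound can be verified by writing $(I+\calL)$ in the tensor Laguerre expansion (where it kills the $|\alpha|=1$ mode and acts with the favorable sign on $|\alpha|\ge 2$) and pairing against the expansion of $\Ind_\calG$, with the main combinatorial work being to control the signs of the Laguerre coefficients of $\Ind_\calF$ and $\Ind_\calG$ after projection. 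Alternatively, one can construct the mixture coupling concretely via a Poisson-thinning representation of the Langevin trajectory, which makes the decomposition transparent without invoking Laguerre coefficients.
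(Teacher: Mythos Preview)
Your plan contains the same Laguerre-basis computation that the paper actually uses, but wrapped in unnecessary machinery. The paper's argument is three lines: expand $\Ind_\calF,\Ind_\calG$ in the tensor Laguerre basis $\{\phi_\alpha\}$, use $\E[\phi_\alpha(\bell^0)\phi_\beta(\bell^t)]=\Ind[\alpha=\beta]\,e^{-t|\alpha|}$ to get $\Pr[\calF(\bell^0),\calG(\bell^t)]=\sum_\alpha \hat f_\alpha\hat g_\alpha e^{-t|\alpha|}$, and then assert
\[
\sum_\alpha \hat f_\alpha\hat g_\alpha e^{-t|\alpha|} \;\le\; \hat f_0\hat g_0 + e^{-t}\sum_{|\alpha|\ge 1}\hat f_\alpha\hat g_\alpha.
\]
No Duhamel formula, no coupling construction, no semigroup comparison beyond the eigenvalue identity.

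You are right, though, that this final inequality is the crux, and you correctly flag that it does not follow from the spectral gap alone: it amounts to $\sum_{|\alpha|\ge 2}\hat f_\alpha\hat g_\alpha(e^{-t|\alpha|}-e^{-t})\le 0$, and since $e^{-t|\alpha|}-e^{-t}<0$ while the cross-coefficients $\hat f_\alpha\hat g_\alpha$ can carry either sign for generic indicators (e.g.\ $\calG=\calF^c$ gives $\hat f_\alpha\hat g_\alpha=-\hat f_\alpha^2\le 0$ and the inequality reverses), the step is not justified for arbitrary events; the paper simply asserts it. However, your proposed remedies do not close the gap either. The Duhamel rewriting reshuffles the same signed sum and then defers to ``controlling the signs of the Laguerre coefficients after projection,'' which you leave entirely open and which cannot be done for general $\calF,\calG$. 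And the mixture coupling you sketch---weight $e^{-t}$ on the identity plus weight $1-e^{-t}$ on an independent refresh---does not realize the exponential Langevin semigroup (nor Ornstein--Uhlenbeck): the time-$t$ transition of a continuous diffusion is not a convex combination of $\delta_{\bell^0}$ and the stationary law, so no such Poisson-thinning representation is available here.
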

\begin{proof}
Let $f(\ell) = \Ind[\calF(\ell)]$ and let $g(\ell) = \Ind[\calG(\ell)]$.
As is explained in e.g. \cite{BGL} (see section 2.7), there is a basis of polynomials, the \emph{Laguerre polynomials}, $\{L_k\}_{k \in \N}$, which form an orthonormal basis for functions in $L^2$ with respect to the exponential measure, and such that for each $e \in \binom{[n]}{2}$, 
\[
\E[L_k(\bell^0_e)P_t L_m(\bell^0_e)] = \E[L_k(\bell^0_e)L_m(\bell^t_e)] = \Ind[k=m]e^{-tk}.
\]
Moreover the variables corresponding to each edge are independent. 
Therefore we may expand
$f(\ell) = \sum_{\alpha \in \N^{\binom{[n]}{2}}} \hat f_\alpha \cdot \prod_{e \in \binom{[n]}{2}} L_{\alpha(e)}(\ell_e)$, noting that $\hat f_0 = \E[f(\bell)] = \Pr[\calF(\bell^0)]$ and $\Var(f) = \sum_{|\alpha|\ge 1} \hat f_\alpha^2$, and similarly for $g(\ell)$.
Then we may compute 
\begin{align*}
\Pr[\calF(\bell^0),\calG(\bell^t)]
&= \E[f(\bell^0)g(\bell^t)]\\
&= \sum_{\alpha,\beta \in \N^{\binom{[n]}{[2]}}} \hat f_\alpha \hat g_\beta \prod_{e \in \binom{[n]}{2}} \E[L_{\alpha(e)}(\bell^0_e)L_{\beta(e)}(\bell^t_e)]\\
&= \sum_\alpha \hat f_\alpha \hat g_\alpha e^{-t|\alpha|}
\le \E[f]\E[g] + e^{-t}\left(\E[f(\bell^0)g(\bell^0)] - \E[f]\E[g]\right).
\end{align*}
The conclusion now follows because $f$ and $g$ are indicators.
\end{proof}

We note that the Langevin dynamics evolves the weights smoothly:
\begin{claim}\label{claim:smooth-langevin}
For any fixed $t > 0$, if $\bell^0$ and $\bell^t$ are sampled as above, then for all $n$ large enough,
\[
\|\bell^0 - \bell^t\|_2 \le \sqrt{2(1-e^{-t}) + o_n(1)}\|\bell^0 - \E[\bell^0]\|_2 \le \sqrt{1-e^{-t} + o_n(1)}\|\bell^0\|_2.
\] 
with probability $\ge 1-\frac{1}{\poly n}$.
\end{claim}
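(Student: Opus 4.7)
The plan is to prove the claim by reducing it to coordinate-wise concentration of the three squared norms $\|\bell^0 - \bell^t\|_2^2$, $\|\bell^0 - \E[\bell^0]\|_2^2$, and $\|\bell^0\|_2^2$, each of which is a sum of $N = \binom{n}{2}$ i.i.d.\ random variables indexed by edges.

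First, I would compute the expectations of the individual summands. Using the correlation $\mathrm{Corr}(\bell^0_e,\bell^t_e) = e^{-t}$ derived just before the claim, together with $\Var(\bell^0_e) = 1$, I get
\[
\E[(\bell^0_e - \bell^t_e)^2] = 2\Var(\bell^0_e) - 2\mathrm{Cov}(\bell^0_e,\bell^t_e) = 2(1-e^{-t}), \quad \E[(\bell^0_e - 1)^2] = 1, \quad \E[(\bell^0_e)^2] = 2.
\]
Each of these summands is a polynomial in at most two $\Exp(1)$ random variables, so each has finite (constant, $t$-dependent) fourth moment, and hence the variance of each of the three sums is $O(N)$.

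Next, I would invoke Chebyshev's inequality: for any fixed $\eta \in (0,1/2)$,
\[
\Pr\bigl[\bigl|\|\bell^0 - \bell^t\|_2^2 - 2N(1-e^{-t})\bigr| > n^{-\eta} N\bigr] = O(n^{2\eta-2}),
\]
and identical bounds hold for the other two sums. A union bound then yields, with probability $1 - 1/\poly(n)$,
\[
\|\bell^0 - \bell^t\|_2^2 = (2(1-e^{-t}) + o_n(1))N, \quad \|\bell^0 - \E[\bell^0]\|_2^2 = (1+o_n(1))N, \quad \|\bell^0\|_2^2 = (2+o_n(1))N.
\]

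Finally, I would take square roots and chain the estimates. The first inequality follows since the ratio $\|\bell^0 - \bell^t\|_2 / \|\bell^0 - \E[\bell^0]\|_2$ equals $\sqrt{2(1-e^{-t}) + o_n(1)}$, and the second follows because $\|\bell^0 - \E[\bell^0]\|_2/\|\bell^0\|_2 = \sqrt{1/2 + o_n(1)}$ together with $\sqrt{2(1-e^{-t}) \cdot 1/2} = \sqrt{1-e^{-t}}$. The only point requiring care is ensuring that the $o_n(1)$ error in each squared-norm estimate translates to $o_n(1)$ error after dividing; this is immediate because each denominator is $\Theta(N)$ and bounded away from zero. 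There is essentially no obstacle: the whole argument is a routine first- and second-moment calculation for three sums of $\Theta(n^2)$ i.i.d.\ random variables with constant mean and variance.
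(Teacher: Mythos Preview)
Your proposal is correct and follows essentially the same approach as the paper: compute the per-edge expectations via the identity $\E[(\bX-\bX')^2] = 2\Var[\bX] - 2\Cov[\bX,\bX']$ together with $\mathrm{Corr}(\bell^0_e,\bell^t_e)=e^{-t}$ and $\Var(\Exp(1))=1$, then appeal to concentration of a sum of $\binom{n}{2}$ i.i.d.\ terms. The paper's proof is terser (it simply says ``concentration of sums of independent random variables'' without naming Chebyshev), but the content is the same.
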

\begin{proof}
As mentioned above, the correlation between $\bX_0$ and $\bX_t$ can be computed as $e^{-t}$.
The claim now follows from the classic equality $\E[(\bX - \bX')^2] = 2\Var[\bX] - 2\Cov[\bX,\bX']$ for $\bX,\bX'$ with the same marginal distribution, combined with concentration of sums of independent random variables, and finally the fact that $\Var(\bX) = 1$ when $\bX \sim \Exp(1)$.
\end{proof}

\subsection{The overlap-gap property for first passage percolation}

We now show that any two paths $p_1 \neq p_2$ in $\calP_\eps(\bell^0)\times \calP_{\eps}(\bell^t)$ have the overlap-gap property, and that when $t$ is large enough large overlap is improbable.

\begin{lemma}\label{lem:ogp-fpp-single}
For any $t \ge 0$, let $\bell^0 \sim \Exp(1)^{\otimes\binom{[n]}{2}}$ and let $\bell^t$ be the outcome of a time-$t$ Langevin diffusion starting from $\bell^0$.
There exist universal constants $C_1,C_2>0$ and $\eps_0>0$ such that for all $\eps < \eps_0$, it holds with probability $1-1/\poly(n)$ that all pairs $(p,p') \in \calP_\eps(\bell^0)\times \calP_\eps(\bell^t)$ have
\[
\frac{|p\cap p'|}{|p|\cdot|p'|} \in \begin{cases}
 [0, C_1\sqrt{\eps})& t > (1+C_2\sqrt{\eps})\log n\\
 [0, C_1\sqrt{\eps}) \cup \{1\} & \text{otherwise}.
\end{cases}
\] 
Furthermore, with probability $1-O(\frac{1}{\log n})$ there exist $p,p'\in \calP_\eps(\bell^0) \times \calP_\eps(\bell^0)$ with overlap $\le C_1\sqrt{\eps}$.
\end{lemma}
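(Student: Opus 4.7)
The plan is to mirror the proof of \pref{thm:ogp-gnp} with three new ingredients: (i) the hopcount concentration from \pref{lem:path-range}, which plays the role that the length concentration in \pref{lem:shortest-gnp} played for $\bbG(n,q)$; (ii) Gamma-distribution tail bounds replacing the Bernoulli edge-probabilities that appeared in the $\bbG(n,q)$ calculation; and (iii) \pref{lem:correlation}, which reduces the coupled two-level event to single-level quantities.

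First, I would condition on the high-probability event from \pref{lem:path-range}, so that every path in $\calP_\eps(\bell^0) \cup \calP_\eps(\bell^t)$ has hopcount in the window $W := [(1-2\sqrt{\eps})\log n,\,(1+2\sqrt{\eps})\log n]$. It then suffices to carry out a first-moment computation over pairs $(p,p')$ with hopcounts $m,m' \in W$ and common-edge count $k \in \{0,1,\dots,\min(m,m')\}$ satisfying the ``bad overlap'' condition $k/(mm') \in (C_1\sqrt{\eps},1)$, plus the separate case $k=m=m'$ corresponding to overlap $1$.

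For a fixed pair $(p,p')$ with $|p|=m$, $|p'|=m'$, $|p\cap p'|=k < \min(m,m')$, apply \pref{lem:correlation} with $\calF=\{\len_{\bell}(p)\le L\}$ and $\calG=\{\len_{\bell}(p')\le L\}$ to bound the joint probability by
\[
(1-e^{-t})\,\Pr[\len_{\bell^0}(p)\le L]\,\Pr[\len_{\bell^0}(p')\le L] \;+\; e^{-t}\,\Pr[\len_{\bell^0}(p)\le L,\,\len_{\bell^0}(p')\le L].
\]
The marginal probabilities are standard Gamma tails bounded by $L^m/m!$ and $L^{m'}/m'!$. For the single-level joint probability, write $p = (p\cap p') \sqcup (p \setminus p')$ and analogously for $p'$, so that the three Gamma-distributed sums $X \sim \mathrm{Gamma}(k,1)$, $Y \sim \mathrm{Gamma}(m-k,1)$, $Y' \sim \mathrm{Gamma}(m'-k,1)$ are independent; direct integration of the joint density against the convex region $\{X+Y\le L,\,X+Y'\le L\}$ yields a bound of the form $L^{m+m'-k}(m+m'-2k)!/\left((m-k)!(m'-k)!(m+m'-k)!\right)$. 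Multiplying these tail estimates by the path-pair count $M_{k,m-k,m'-k}$ from \pref{claim:path-pairs} and summing over $m,m'\in W$ and $k$ in the bad range gives a geometric series in $k$ whose mass shrinks exponentially once $k\gtrsim \sqrt{\eps}\log n$, so the total expected count over all bad $(m,m',k)$ is $o(1)$. Markov's inequality then rules out bad pairs.

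The case $k=m=m'$ is handled separately. Here the number of ``pairs'' is just the number of paths of hopcount $m$, at most $n^{m-1}$, and the coupled probability is bounded by $(1-e^{-t})\Pr[\len\le L]^2 + e^{-t}\Pr[\len\le L]$. Summed over $m \in W$, the first summand contributes $o(1)$ uniformly in $t$ (since $\E|\calP_\eps(\bell^0)|=n^{\eps+o(1)}$ makes its square divided by the number of paths negligible), while the second contributes $e^{-t}\cdot n^{\eps+o(1)}$, which is $o(1)$ precisely when $t > (1+C_2\sqrt{\eps})\log n$ for an appropriate constant $C_2$. This produces the promised phase transition between the two cases in the lemma. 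Finally, for the existence of a low-overlap pair I would replicate the argument at the end of \pref{thm:ogp-gnp}: partition $[n]\setminus\{1,2\}$ into halves $A,B$ and apply the second-moment lower bound underlying \pref{lem:path-range} separately to the weighted subgraphs on $\{1,2\}\cup A$ and $\{1,2\}\cup B$, yielding paths in $\calP_\eps(\bell^0)$ whose interior vertices are confined to $A$ (resp.\ $B$) and hence sharing no internal edges.

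The main technical obstacle will be Step~2: carefully tracking the joint Gamma tail estimate together with \pref{claim:path-pairs} uniformly over the whole hopcount window $W$. Unlike the $\bbG(n,q)$ case where the edge-probability is a single scalar $q$, here one must control an integral over a continuous density, and the combinatorial factor from $M_{k,m-k,m'-k}$ interacts with $(m+m'-2k)!$ in a nontrivial way. Once this estimate is clean, the overall union-bound structure and the handoff between the two regimes of $t$ follow the same template as in the $\bbG(n,q)$ proof.
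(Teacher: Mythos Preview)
Your main first-moment argument for ruling out intermediate overlaps is essentially the paper's: conditioning on the hopcount window from \pref{lem:path-range}, integrating the joint Gamma density over $\{X+Y\le L,\,X+Y'\le L\}$, multiplying by the pair count from \pref{claim:path-pairs}, and then invoking \pref{lem:correlation} to pass from $t=0$ to general $t\ge 0$ (the paper does the $t=0$ computation first and then notes that \pref{lem:correlation} can only shrink the probability, but this is the same content). Your handling of the full-overlap case and the resulting threshold $t>(1+C_2\sqrt{\eps})\log n$ also matches.

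The gap is in your existence argument. Splitting $[n]\setminus\{1,2\}$ into halves $A,B$ works in $\bbG(n,q)$ because the shortest-path length scales like $\frac{\log n}{\log nq}$, which barely changes when $n$ is replaced by $n/2$. In first passage percolation the shortest $1$--$2$ path in $K_m^{\bell}$ has length $\sim \frac{\log m}{m}$, so restricting to $\{1,2\}\cup A$ with $|A|=n/2$ roughly doubles the optimal length to $\sim \frac{2\log n}{n}$, and for small $\eps$ there will be no path confined to a half with cost $\le (1+\eps)\frac{\log n}{n}$. Also, \pref{lem:path-range} is a pure first-moment upper bound on hopcount; there is no second-moment lower bound ``underlying'' it to cite.

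The paper instead runs a direct second-moment computation (inside the proof of this very lemma) showing that the number $\bN_{h,L}$ of hoplength-$\log n$ paths with cost $\le L$ satisfies $\Var[\bN_{h,L}]=O(\frac{1}{\log n})\,\E[\bN_{h,L}]^2$, hence $|\calP_\eps(\bell^0)|\gg 1$ with probability $1-O(\frac{1}{\log n})$. Then the already-established dichotomy (overlap either $1$ or at most $C_1\sqrt{\eps}$) plus pigeonhole forces two distinct paths of low overlap. This avoids any partition and is what you should do instead.
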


\begin{proof}
    To establish the overlap-gap property, we bound the expected number of overlapping paths.
	We begin by giving a bound for non-full overlaps in the case when $t = 0$; for simplicity's sake we call $\bell = \bell^0$. 
By \pref{lem:correlation}, the bound will transfer to any $t \ge 0$.
We will then get a sharper bound on the full-overlap case when $t > 0$.

    Suppose paths $p_1,p_2$ (both with endpoints $1$,$2$) overlap on $k$ edges, and that $\hop(p_1) = k + m_1$ and $\hop(p_2) = k + m_2$, with $m_1,m_2 \neq 0$.
    We will bound the expected number of pairs of such paths with $\len_{\bell}(p_1),\len_{\bell}(p_2) \le (1+\eps)\frac{\log n}{n}$.

    First, we compute the probability that $\len_{\bell}(p_1),\len_{\bell}(p_2) \le (1+\eps)\frac{\log n}{n}$.
    The length of path $p_1$ is the sum of the lengths of the $k$ shared edges, plus the sum of the lengths of the $m_1$ edges unique to $p_1$ (and the same for $p_2$).
    So letting $L = (1+\eps)\frac{\log n}{n}$ and defining the independent random variables $\bA \sim \mathrm{Gamma}(k,1)$, $\bB \sim \mathrm{Gamma}(m_1,1)$, and $\bC \sim \mathrm{Gamma}(m_2,1)$, we have by direct integration of the density functions that
    \begin{align}
       \Pr[\len_{\bell}(p_1),\len_{\bell}(p_2) \le L] 
       &= \Pr[\bA + \bB \le L, \, \bA + \bC \le L]\nonumber\\
       &= \int_0^{L} \frac{a^{k-1}e^{-a}}{(k-1)!}\left(\int_0^{L -a} \frac{b^{m_1-1} e^{-b}}{(m_1-1)!}db \right)\left(\int_0^{L-a}\frac{c^{m_2-1}e^{-c}}{(m_2-1)!} dc\right) da \nonumber \\
       &\le \frac{1}{(k-1)!(m_1-1)!(m_2-1)!}\int_0^L a^{k-1} \int_0^{L-a} b^{m_1-1} db \int_0^{L-a} c^{m_2-1}dc\, da \nonumber \\
       &= \frac{1}{(k-1)!m_1!m_2!}\int_0^L a^{k-1} (L-a)^{m_1+m_2} da\nonumber \\
       &= \frac{1}{(k-1)!m_1!m_2!}\frac{(k-1)!(m_1+m_2)!}{(m_1+m_2+k)!} L^{k + m_1 + m_2} \label{eq:pairbar2}\\
       &\le \binom{m_1+m_2}{m_1} \cdot \left(\frac{e L}{k+m_1+m_2}\right)^{k+m_1+m_2},\nonumber
    \end{align}
    where in the last line we have applied Stirling's approximation.

Applying \pref{claim:path-pairs}, so long as $m_1,m_2 \neq 0$, the expected number of pairs of approximate shortest paths, in this case, is at most
    \begin{align*}
    \E[\#\{p_1,p_2\}] 
    &= M_{k,m_1-k,m_2-k} \cdot \Pr[\len_{\bell}(p_1),\len_{\bell}(p_2) \le (1+\eps)\tfrac{\log n}{n}]\nonumber\\
    &\le \left(\frac{k+1}{n^2} + \frac{O(\log^{15} n)}{n^3}\right) \cdot n^{k+m_1+m_2} \cdot\binom{m_1+m_2}{m_1} \cdot \left(\frac{e(1+\eps)\log n}{(k+m_1+m_2) n}\right)^{k+m_1+m_2}.
    \end{align*}
    
    Without loss of generality, from \pref{lem:path-range}, we may assume that the hoplength of $p_1$ and $p_2$ is $(1\pm 2\sqrt{\eps})\log n$, and we will do so from now on to simplify the computation. 
    Suppose $k = (1-\lambda) \log n$, $m_1 = \left(\lambda + \delta_1\right)\log n$, and $m_2 = \left(\lambda + \delta_2\right)\log n$ for $|\delta_1|,|\delta_2| \le 2\sqrt{\eps}$.
    Bounding the display above, for $\eps$ sufficiently small,
    \begin{align*}
    \E[\#\{p_1,p_2\}] &\le
    \left(\frac{O(\log n)}{n^2}\right) \left(\frac{e\left(1+\eps\right)}{1+\lambda -4\sqrt{\eps} }\right)^{(1+\lambda + 4\sqrt{\eps})\log n} \cdot 2^{(2\lambda +4\sqrt{\eps})\log n}.
    \end{align*}
    Taking logarithms,
    \begin{align*}
    \frac{\log \E[\#\{p_1,p_2\}]}{\log n} 
    &\le - 2 + (1+\lambda+4\sqrt{\eps})\log\left(\frac{e(1+\eps)}{1+\lambda-4\sqrt{\eps}}\right) + (2\lambda +4\sqrt{\eps})\log 2 + o_n(1)\\
    &\le -2 + C \sqrt{\eps} + (1+\lambda)\log\frac{e}{(1+\lambda)}+ 2\lambda\log 2  + o_n(1),
    \end{align*}
    for $C$ a universal constant, whenever $\eps$ is sufficiently small. 
    Taking a first derivative in $\lambda$, one can see that the above expression is increasing in $\lambda$ when $\lambda \in (-1,3)$.
    Further, evaluating the expression at $\lambda = 1-\beta$ for $\beta < \frac{1}{2}$, we obtain an upper bound:
    \begin{align*}
    &-2 + C \sqrt{\eps} + (2-\beta)\left(1 - \log(2-\beta)\right) + 2(1-\beta)\log 2\\
    &= C \sqrt{\eps} - \beta + 2 \log \frac{2}{2-\beta} + \beta \log(2-\beta) - 2\beta \log 2\\
    \intertext{Applying the identity $\log \frac{1}{1-x} \le x + x^2$ valid for all $0 \le x \le \frac{1}{2}$,}
    &\le C \sqrt{\eps} - \beta + (\beta + \frac{1}{2}\beta^2) + \left(\beta \log 2 +\beta \log(1-\beta/2) \right)- 2\beta\log 2\\
    &\le C \sqrt{\eps} + \frac{1}{2}\beta^2 - \beta \log 2 +\beta \log(1-\beta/2) \\
    \intertext{And finally applying the identity $\log(1-x) \le -x$,}
    &\le C \sqrt{\eps} - \beta \log 2.
    \end{align*}
    Hence the quantity is negative when $\beta \ge C\sqrt{\eps}/2\log 2$.
    Combined with the discussion above, when $-1 < \lambda < 1 - C \frac{\sqrt{\eps}}{\log 2}$ and $(1-\lambda)\log n < |p_1|,|p_2|$, $\E[\#\{p_1,p_2\}] \le n^{-\gamma}$ for $\gamma > 0$ a universal constant, for all $n$ sufficiently large.
    Applying Markov's inequality and taking a union bound over the $O(\log^4 n)$ possible values of $c,m_1,m_2,k$, we have that with high probability, there do not exist $p_1,p_2 \in \calP_\eps(\bell)$ of overlap $\alpha = \frac{1-\lambda}{\sqrt{(1+\delta_1)(1+\delta_2)}} \in \left(C_1 \sqrt{\eps}, 1\right)$.
The same is true if $t > 0$, only the probability term changes, and by \pref{lem:correlation} it can only decrease.
\smallskip

Now, in the case when $m_1=m_2=0$, we consider $t > 0$.
Using again our bound on the hoplength from \pref{lem:path-range}, $k = (1\pm 2\sqrt{\eps})\log n$, and we have that
\[
\E[\#{p_1,p_2}] 
\le n^k \left( e^{-t} \left(\frac{e(1+\eps)\log n}{nk}\right)^{2k}+ \left(\frac{e(1+\eps)\log n}{nk}\right)^{2k}\right)
\le e^{-t} \left(\frac{e(1+\eps)}{1-2\sqrt{\eps}}\right)^{(1+2\sqrt{\eps})\log n} +\frac{1}{\poly(n)},
\]
which is $o_n(1)$ for all $\eps$ sufficiently small when $t \ge (1+C_2 \sqrt{\eps})\log n$ for some constant $C_2$. 
The conclusion follows by Markov's inequality.

\medskip
To obtain the guarantee for the existence of near-disjoint paths, we argue that $|\calP_\eps(\bell^0)|\gg 1$ with high probability via a second moment argument.
Given that paths must overlap either fully or by at most $C_1\sqrt{\eps}$, this implies there must exist two paths of low overlap, by the pigeonhole principle.

Let $\bN_{h,L}$ be the number of paths of hoplength $h = \log n$  and cost at most $L = (1+\eps)\frac{\log n}{n}$.
Notice that $|\calP_\eps(\bell)| \ge \bN_{h,L}$.
We will show that $\bN_{h,L}$ concentrates around its expectation, which is at least 
\[
\E[\bN_{h,L}] 
= n^{\ul{h-1}}\int_0^L \frac{1}{(h-1)!}a^{h-1}e^{-a} da 
\ge n^{h-1} e^{-L} \frac{L^h}{h!}(1-o(1)) 
\ge \frac{1}{n}(e(1+\eps))^{\log n}(1-o(1)) \gg 1.
\]
The line \pref{eq:pairbar2} above, implies that
\begin{align*}
\E[\bN_{h,L}^2]-\E[\bN_{h,L}]^2 
&\le \E[\bN_{h,L}] + \sum_{k=1}^{h-1} O\left(\frac{k}{n^2}\right)n^{2h - k} \binom{2h-2k}{h-k} \left( \frac{(1+\eps)\log n}{n}\right)^{2h-k}\frac{1}{(2h-k)!}\\
&\le \E[\bN_{h,L}] + \sum_{k=1}^{h-1} O\left(\frac{k}{n^2}\right) \binom{2h-2k}{h-k} \frac{1}{(2h-k)!}\left( (1+\eps)\log n \right)^{2h-k}.
\end{align*}
If we define $A_k = k \binom{2h-2k}{h-k}/(2h-k)!$, then we have that $A_{k+1}/A_{k} = \frac{(k+1)(h-k)}{2k(2h-2k-1)(2h-k)}\le 3/4$ for any $1\leq k \leq h-1$, since $(k+1)/(2k) \le 3/4$ for any $k \geq 2$ and the inequality holds trivially when $k=1$. Therefore, $\sum_{k=1}^{h-1} A_k \leq 4 A_1$ and thus
\begin{align*}
\E[\bN_{h,L}^2]-\E[\bN_{h,L}]^2 
&\le \E[\bN_{h,L}] + O\left(\frac{1}{n^2}\right) \binom{2h-2}{h-1} \frac{1}{(2h-1)!}\left( (1+\eps)\log n \right)^{2h-1}\\
&\le \E[\bN_{k,L}] + O\left(\frac{1}{h n^2}\right) \frac{1}{(h-1)!(h-1)!}\left( (1+\eps)\log n \right)^{2h-1}\\
&\le \E[\bN_{h,L}] + O\left(\frac{h}{n^2}\right) \frac{1}{h!h!}\frac{1}{\log n}\left( (1+\eps)\log n \right)^{2h}\\
&\le \E[\bN_{h,L}] + O\left(\frac{h}{n^2}\right) \frac{1}{\log n}\frac{1}{\log n}\left( e(1+\eps) \right)^{2h}\\
&\le \E[\bN_{h,L}] + O\left(\frac{1}{n^2\log n}\right) \left( e(1+\eps) \right)^{2\log n} = O\left( \frac{1}{\log n}\right) \E[\bN_{h,L}]^2 .
\end{align*}
Hence, $\Var[N_{h,L}] = O(\frac{1}{\log n})\E[N_{h,L}]^2$, so $\bN_{h,L} \ge \frac{1}{2}\E[\bN_{h,L}] \gg 1$ with probability $1-O(\frac{1}{\log n})$, as desired.
\end{proof}
We can leverage the above OGP to prove a lower bound for stable algorithms.

\begin{lemma}\label{lem:fpp-nostable}
    Let $t>0$, $\eps > 0$ be sufficiently small.
    Then for all $n$ sufficiently large, there can be no algorithm for $\eps$-approximate shortest path in $K_n(\bell)$ which simultaneously (i) has success probability $\ge 1-o(\frac{1}{\log n})$, and (ii) is stable, in the sense that if $\bell^0,\bell^t$ are sampled as described above (and hence $\|\bell^0 - \bell^t\| \le \sqrt{(1-e^{-t}) + o_n(1)} \|\bell^0\|$ with high probability), then when $\calA$ succeeds on its inputs, $\frac{|\calA(\bell^0) \cap \calA(\bell^t)|}{\sqrt{|\calA(\bell^0)|\cdot|\calA(\bell^t)|}} > C\sqrt{\eps}$.
\end{lemma}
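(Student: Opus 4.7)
The plan is to mirror the hybrid argument in the proof of \pref{cor:gnp-no-stable}, with the Langevin diffusion now playing the role of the discrete resampling coupling used there. Fix a small constant $\tau > 0$ and let $T$ be the smallest integer with $T\tau > (1+C_2\sqrt{\eps})\log n$, so $T = \Theta(\log n)$. Starting from $\bell^0 \sim \Exp(1)^{\otimes\binom{[n]}{2}}$, run the Langevin process on each edge weight independently to obtain $\bell^{\tau}, \bell^{2\tau}, \ldots, \bell^{T\tau}$. By construction, each adjacent pair $(\bell^{i\tau}, \bell^{(i+1)\tau})$ has the same joint law as $(\bell^0, \bell^\tau)$, so the stability hypothesis of the lemma applies directly to it, while the endpoint pair $(\bell^0, \bell^{T\tau})$ has time gap exceeding $(1+C_2\sqrt{\eps})\log n$ and thus qualifies for the \emph{strong} (low-overlap only) branch of \pref{lem:ogp-fpp-single}.

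Let $\bp_i = \calA(\bell^{i\tau})$. Since the algorithm fails with probability $o(1/\log n)$, a union bound over the $T+1 = O(\log n)$ interpolants shows that with probability $1-o(1)$ the algorithm succeeds on every $\bell^{i\tau}$; on this event, the stability hypothesis gives $\frac{|\bp_i \cap \bp_{i+1}|}{\sqrt{|\bp_i|\cdot|\bp_{i+1}|}} > C\sqrt{\eps}$ for every $i \in \{0,\ldots,T-1\}$. Simultaneously, applying \pref{lem:ogp-fpp-single} to each adjacent pair and to the endpoint pair and union bounding over the $T+1$ applications (each failing with only $1/\poly(n)$ probability) yields with probability $1-1/\poly(n)$ that every adjacent overlap lies in $[0,C_1\sqrt{\eps}) \cup \{1\}$, while the endpoint overlap lies in $[0,C_1\sqrt{\eps})$. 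Taking $C$ to be (at least) the constant $C_1$ from \pref{lem:ogp-fpp-single}, the stability bound rules out the low-overlap branch at every adjacent step; so every adjacent overlap must equal $1$. Since paths are identified with their $0/1$ indicator vectors, cosine similarity $1$ between two such vectors forces path equality, so $\bp_0 = \bp_1 = \cdots = \bp_T$. But then the endpoint overlap is $1$, contradicting the strong OGP branch. As both high-probability events coexist with probability bounded away from $0$, we obtain the desired contradiction.

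The main obstacle is the quantitative bookkeeping rather than any new combinatorics. The OGP of \pref{lem:ogp-fpp-single} only rules out the full-overlap branch at Langevin time scales $t \gtrsim \log n$, so any interpolation with constant-size steps must contain $\Theta(\log n)$ of them; this in turn forces the permissible algorithmic failure probability to scale as $o(1/\log n)$ so that the union bound over the interpolants survives, which is exactly the hypothesis of the lemma. A weaker constant-success hypothesis (as in \pref{cor:gnp-no-stable}) does not suffice in this setting, because the FPP overlap-gap threshold grows with $n$ rather than kicking in at any bounded correlation scale. One minor verification is that the adjacent-pair stability condition really does apply: \pref{claim:smooth-langevin} ensures that each $(\bell^{i\tau},\bell^{(i+1)\tau})$ satisfies the $\ell_2$-proximity clause with probability $1-1/\poly(n)$, and a further union bound over the $T$ pairs absorbs cleanly into the $o(1)$ slack already allocated above.
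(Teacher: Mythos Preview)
Your proof is correct and mirrors the paper's Langevin-interpolation argument; the paper uses step size $t$ (the parameter from the hypothesis) directly and derives the contradiction by tracking overlaps with $\bp_0$ and locating a first ``jump,'' whereas you chain adjacent overlaps to $1$ and conclude all $\bp_i$ coincide, which is an equivalent formulation. One small point: your step size $\tau$ must equal the $t$ from the lemma statement for the stability hypothesis to apply to adjacent pairs, so you should set $\tau = t$ rather than introduce it as an independent constant.
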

\begin{proof}
Sample a sequence $(\bell_s)_{s\ge 0}$ by sampling $\bell_0 \sim \Exp(1)^{\otimes \binom{[n]}{2}}$ and then running the Langevin dynamics described above.
Let $\bp_s = \calA(\bell_s)$.
Choose $K = \lceil \frac{2\log n}{t}\rceil$.
From \pref{thm:ogp-gnp}, we have that with probability at least $1-O(\frac{K}{\poly n})$, for each $k \in [K]$, the path $\bp_{tk}$ must have overlap with $\bp_0$ which is either equal to $1$, or at most $C\sqrt{\eps}$.
Furthermore, since $tK \ge 2\log n$, the overlap of $\bp_0$ and $\bp_T$  must be at most $C\sqrt{\eps}$.
Hence, with high probability there must exist some $k \in [K]$ with 
$\frac{|\bp_0 \cap \bp_{tk}|}{\sqrt{|\bp_0||\bp_{tk}|}} = 1$ but
$\frac{|\bp_0 \cap \bp_{t(k+1)}|}{\sqrt{|\bp_0||\bp_{t(k+1)}|}}\le C\sqrt{\eps}$, 
implying that $\frac{|\bp_{tk} \cap \bp_{t(k+1)}|}{\sqrt{|\bp_{tk}||\bp_{t(k+1)}|}} \le C \sqrt{\eps} $.
This is a contradiction, unless $\calA$ did not succeed on some input.
By a union bound $\calA$ must have been successful on all $K+1$ inputs with probability at least $1-\delta(K+1)$, so it must be the case that $\delta > \frac{1}{K+1} = \Theta(\frac{1}{\log n})$.
\end{proof}

\subsection{Low-degree polynomial estimators and sampling}

\paragraph{Bounded-degree polynomial estimators.}
It is not clear whether a degree-$O(\log n)$ polynomial estimator of the first passage percolation path exists.
A first attempt may be as follows: to estimate whether $(i,j)$ participates in the path, sum over all $1$-$2$ paths of hoplength $(1+2\sqrt{\eps})\log n$ containing $(i,j)$, a low-degree approximation for the indicator that the total length of the same path is at most $\frac{(1+\eps)\log n}{n}$.
However, the degree of the polynomial approximator would have to be $\Omega(n/\log n)$ to have the decision boundary on the order of $\log n / n$.

It is perhaps possible to have an estimator of low \emph{coordinate degree} \cite{hopkins2018statistical,brennan2021statistical,kunisky2024low}, meaning that the estimator is a linear combination of functions each depending on $O(\log n)$ coordinates of $\bell$.
For example, if we know the law $W$ of the edge weights that participate in the first passage percolation path, one can choose some parameter $k$, obtain $k$ even-measure quantile values $Q_1,\ldots,Q_k$ of $W$, and attempt the estimate 
\[
f_{ij}(\bell) = \sum_{\|p\|^2 \le (1+2\sqrt{\eps})\log n} \sum_{t \vdash_k E(p)} \prod_{(i,j) \in p} \Ind[\bell_{ij} \le Q_{t(i,j)}],
\]
where the notation $t \vdash_k E(p)$ means that we partition $E(p)$ into $k$ parts of equal size.

The estimator is of coordinate degree $O(\log n)$, and so long as $k=O(1)$, the number of terms is $2^{O(\log^2 n)}$.
We find it plausible that this estimator may work, but we will not undertake its analysis here.

\paragraph{Sampling.}
We remark that the sampling algorithm described in \pref{lem:sampling} will also work for sampling short paths in $K_n^{\bell}$, given the following observation: all edges participating in paths of length at most $(1+\eps)\frac{\log n}{n}$ must have weight at most $(1+\eps)\frac{\log n}{n}$. 
Therefore, we may sparsify $K_n$ by first deleting all edges of weight $> (1+\eps)\frac{\log n}{n}$.
The support of the resulting graph is the same as the support of $\bbG(n,q)$ for $q = \Theta(\frac{\log n}{n})$, therefore the algorithm described in \pref{lem:sampling} can be used to produce a list of all potential length $\le (1+\eps)\frac{\log n}{n}$ paths from $1$ to $2$, which will have polynomial size with high probability.
The list can further be filtered to remove paths that are too long.

We believe that a Wasserstein lower bound in the style of \pref{eq:wass} will rule out smooth sampling algorithms in this case too; however we did not undertake the (at this point, somewhat tedious) verification of this fact.

\section{Concentration of symmetric low-degree polynomials of sparse random variables}\label{sec:concentration}

In this section, we show the concentration of symmetric low-degree polynomials of sparse random variables, which is needed for the proof of \pref{thm:smooth}. 
We will appeal to a Gaussian invariance principle for correlated random variables proven in \cite{caravenna2023critical} via an interpolation argument, along with the result for Gaussian polynomials in \cite{GJW24} (which is a straightforward argument based on hypercontractivity).
Our main goal will be to establish bounds on the expected fourth moment of the gradient of a function associated with our polynomial.

Suppose that $f:\{0,1\}^{\binom{[n]}{2}} \to \R^{\binom{[n]}{2}}$ is a degree-$D$ polynomial approximation to the shortest path, symmetric under the action of $S_{[n]\setminus\{1,2\}}$.
Suppose $\bG,\bG'$ are $\rho$-correlated samples from $\bbG(n,q)$ with $\rho \ge 0$.
With probability $\ge (1-q)^2$, the edge $(1,2)$ is in neither of $\bG,\bG'$.
Henceforth, we condition on the event that $(1,2)$ is in neither $\bG,\bG'$.

Let $\calH$ be the set of all labeled graphs $H = (V(H),E(H))$, with up to $2$ special vertices labeled $s(H), t(H)$, and $|E(H)|\le D$.
Let $\hookrightarrow$ denote an injection, and for $H \in \calH$ let $\calL(H)$ denote the set of all labelings $\ell:V(H) \hookrightarrow [n]$ such that $\ell(s(H)) = 1$ and $\ell(t(H)) = 2$.

We say a graph $H$ is of uniform edge density at most $\alpha$ if all subgraphs $J \subseteq H$ have edge density at most $\alpha$.
    Note that if $H$ is of uniform edge density at most $1$, each of its connected components contains at most one cycle.

\begin{definition}
    Let the \emph{active graph shapes} be $\calH$, the set of all graphs $H$ satisfying the following conditions: (1) $|E(H)| \le D$, (2) the vertices of $H$ are unlabeled save for at most four ``special'' vertices labeled $s,t,u,v$, (3) $H$ is of uniform edge density at most $1$, and (4) any component of $H$ containing the special vertices $s$ or $t$ can contain at most one of them, and is a tree.
\end{definition}

    \begin{lemma}
       If $f: \{0,1\}^{\binom{[n]}{2}} \to \R^N$ is a vector-valued polynomial of degree $D = O(1)$, then there exists a constant $\delta > 0$ depending only on $D$ such that so long as $q_n \le \frac{1}{n^{1-\delta}}$, with probability $1-o_n(1)$ over $\bG \sim \bbG(n,q_n)$, 
       \[
       f(\bG) = f^{\calH}(\bG),
       \]
       where $f^{\calH}$ is the restriction of $f$ to monomials of the form $\prod_{e \in E(H)} \Ind[\bG_e =1]$ for subgraphs $H$ which are isomorphic to some active graph shape $H^* \in \calH$, when the special label $s$ is identified with $1$ and the special label $t$ is identified with $2$.
    \end{lemma}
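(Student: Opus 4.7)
The plan is to expand $f$ in the monomial basis of edge indicators and argue that, under mild structural events on $\bG$, every monomial whose underlying labeled edge set fails to correspond to an active shape must vanish on $\bG$.

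Since $\bG_e \in \{0,1\}$ (so $\bG_e^2=\bG_e$), we may write
\[
f(\bG) = \sum_{S \subseteq \binom{[n]}{2},\, |S|\le D} \hat f_S \prod_{e \in S} \Ind[\bG_e = 1],
\]
where each monomial equals $1$ iff $S \subseteq E(\bG)$. The restriction $f^\calH$ is exactly the sub-sum over those $S$ whose labeled subgraph (viewed with $1$ and $2$ playing the roles of the special vertices $s,t$) is isomorphic to an active shape in $\calH$. Thus $f(\bG)-f^\calH(\bG)$ is supported on ``bad'' $S$, and vanishes whenever no such bad $S$ is contained in $E(\bG)$.

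I would next classify the ways a labeled edge set $S$ with $|S|\le D$ can fail to correspond to an active shape: either (a) some sub-edge-set $J\subseteq S$ has $|E(J)|>|V(J)|$, violating condition (3) on uniform edge density $\le 1$; or (b) a connected component of $S$ contains both special vertices $1$ and $2$, equivalently $\bG$ contains a $(1,2)$-path of length $\le D$, violating condition (4); or (c) a connected component of $S$ contains vertex $1$ or vertex $2$ together with a cycle, equivalently $\bG$ contains a cycle of length $\le D$ through $1$ or $2$, again violating condition (4) since such a component is not a tree. If $\bG$ simultaneously avoids all three of these structures, then every bad $S$ satisfies $S\not\subseteq E(\bG)$, and so $f(\bG)=f^\calH(\bG)$.

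Finally, I would bound each bad event by a direct first-moment computation. For (a), the expected number of copies of a shape $J$ with $v$ vertices, $e\ge v+1$ edges, and $e\le D$ is at most $n^v q^e \le n^{v-e(1-\delta)} \le n^{-1+\delta D}$. For (b), the expected number of $(1,2)$-paths of length $\ell\le D$ is at most $n^{\ell-1}q^\ell \le n^{-1+\delta \ell}$, and for (c) the expected number of $\ell$-cycles through $1$ or $2$ is at most $2n^{\ell-1}q^\ell$, of the same order. Each of these is $o_n(1)$ once we take $\delta < 1/(D+1)$; summing over the $O_D(1)$ possible shape classes in (a) and over $\ell\le D$ in (b) and (c), then applying Markov's inequality and a union bound, gives the claim. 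The only mildly subtle point is the structural classification in the third paragraph, ensuring that (a)--(c) exhaust all failure modes; the arithmetic at the end is a routine first-moment exercise.
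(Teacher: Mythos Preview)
Your approach is the same as the paper's: expand in the edge-indicator monomial basis, classify the ``bad'' monomials into the three failure modes (a)--(c), and kill each by a first-moment bound. The classification and the arithmetic in cases (a) and (b) are fine.

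There is, however, a genuine slip in case (c). You claim that a connected component of $S$ containing vertex $1$ (or $2$) together with a cycle is ``equivalently'' a cycle of length $\le D$ through $1$ (or $2$). This is false once $D\ge 4$: take $S=\{\{1,3\},\{3,4\},\{4,5\},\{5,3\}\}$, which is connected, contains vertex $1$, contains a triangle, yet has no cycle through $1$. Ruling out short cycles through $1$ or $2$ does \emph{not} rule out such $S$ appearing in $\bG$, so your first-moment bound in (c) does not cover all the bad configurations you need. The fix is immediate and is what the paper does: instead of counting cycles through $1$ or $2$, directly count connected labeled subgraphs $J$ with $k$ vertices, $m\ge k$ edges, $m\le D$, and one vertex pinned to $1$ or $2$; the expected number is at most $O_D(1)\cdot n^{k-1}q^{m}\le n^{-1+\delta D}$, which is $o_n(1)$ for $\delta<1/D$. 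With that correction your argument matches the paper's.
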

    \begin{proof}
        For any $H$, $\prod_{e \in E(H)} \Ind[\bG_e = 1]$ is nonzero only when, for all $J \subseteq H$, $\prod_{e \in E(J)} \Ind[\bG_e=1]$ is nonzero.
        So the statement holds if we can show that with probability $1-o_n(1)$, $\bG$ contains no subgraphs $J$ with at most $D$ edges which either (a) have density $> 1$, (b) is connected and contains both vertex $1$ and $2$, or (c) contains one of $1$ or $2$, is connected, and is denser than a tree.
        The argument will be via the first moment method in all three cases.
        \medskip 

        \noindent Case (a): if $J$ has $k$ vertices and $m > k$ edges for $m \le D$, then the expected number of appearances of $J$ is $\le n^{k} q_n^{m} \le n^{k}q_{n}^{k+1} \le \frac{1}{n^{1 - \delta(k+1)}} = o_n(1)$ if $\delta$ is small enough as a function of $D$.
        \medskip
        
        \noindent Case (b): if $J$ has $k$ vertices and $m \ge k-1$ edges (the minimum necessary for it to be connected) for $m \le D$, and two of the vertices are required to map to $1$ and $2$, then the expected number of appearances of $J$ is $\le n^{k-2} q_n^m \le n^{k-2} q_n^{k-1} \le \frac{1}{n^{1-\delta(k-1)}} = o_n(1)$ if $\delta$ is small enough as a function of $D$.

        \medskip
        \noindent Case (c): if $J$ has $k$ vertices and $m \ge k$ edges (the minimum necessary for it to be connected and have a cycle), and at least one of the vertices must map to $1$ or $2$, then the expected number of appearances of $J$ is $\le n^{k-1} \cdot 2 \cdot q_n^{k} \le 2 \frac{1}{n^{1-\delta(k)}} = o_n(1)$ (again when $\delta$ is small enough as a function of $D$.
        
        Taking a union bound over the at most $\exp(D^2)$ graphs on $D$ vertices completes the proof.
    \end{proof}

Henceforth, we shall assume $f$ is supported on active graphs, and we write $f = f^{\calH}$ (we will drop the superscript $\calH$ to keep the notation clean).

We let $s_H = |V(H)\cap \{s,t\}|$, $u_H = |V(H) \cap \{u,v\}|$, $v_H = |V(H)\setminus\{s,t,u,v\}|$, and $e_H = |E(H)|$. 
For each $H \in \calH$ and $i\neq j \in [n]$, let $\calL_{H,i,j}$ be the set of one-to-one maps $\ell:V(H) \hookrightarrow [n]$ which satisfy that $\ell(s) = 1,\ell(t) = 2,\ell(u) = i,\ell(v) = j$; each map appears with multiplicity one, meaning that if $\ell(H)$ and $\ell'(H)$ are isomorphic then $\ell = \ell'$.

When $f$ is symmetric under the action of $S_{[n]\setminus\{1,2\}}$, we can always write 
\[
f_{ij}(\bG) = \sum_{H \in \calH} \hat f_{H} \sum_{\ell \in \calL_{H,i,j}} \chi_{\ell(H)}(\bG),
\]
where the functions $\chi_\alpha$ are the orthonormal basis of Walsh-Hadamard characters, with $\chi_\alpha(G) = \prod_{(i,j) \in \alpha} \frac{G_{ij}-q}{\sqrt{q(1-q)}}$.

We assume $f$ is normalized so that $\E[\|f(\bG)\|^2] =1$, which means that
\begin{align*}
    1 &= \sum_{ij} \E \left[\left(\sum_{H \in \calH} \hat f_H \sum_{\ell \in \calL_{H,i,j}} \chi_{\ell(H)}(\bG)\right)^2\right]\\
    &= \sum_{ij} \sum_{H \in \calH} \hat f_H^2 \cdot \frac{n^{\ul{v_H}}}{\aut(H)}\\
    &= \sum_{H \in \calH} \hat f_H^2 \cdot \frac{n^{\ul{v_H} + \ul{u_H}}}{\aut(H)}.
\end{align*}
where $\aut(H)$ is the number of automorphisms of $H$.
Thus $|\hat f_H| \lesssim \left(\frac{1}{n}\right)^{\frac{1}{2}(v_H + u_{H})}$ when $D=O(1)$.

We now compare the polynomial of $\bG$ with that of correlated Gaussians. Recall that $\bG,\bG'$ are $\rho$-correlated samples from $\bbG(n,q)$. The inner product $c(\bG,\bG') = \langle f(\bG),f(\bG') \rangle$ can be written as 
\[
c(\bG,\bG') =  \sum_{i,j} \sum_{H_1,H_2 \in \calH} \hat f_{H_1} \hat f_{H_2}\sum_{\ell_1 \in \calL_{H_1,i,j}, \ell_2 \in \calL_{H_2,i,j}} \chi_{\ell_1(H_1)}(\bG) \chi_{\ell_2(H_2)}(\bG').
\]
Note that for each edge $a$, if we denote $\bchi_a^{(1)} = \chi_a(\bG)$ and $\bchi_a^{(2)} = \chi_a(\bG')$, then $c(\bG,\bG') = \Phi(\bchi^{(1)},\bchi^{(2)})$, where $\Phi(\cdot)$ is a multi-linear polynomial on $2\binom{n}{2}$ variables $\bchi^{(1)}$ and $\bchi^{(2)}$. Further,
\begin{align*}
    \|f(\bG)-f(\bG')\|_2^2 = c(\bG,\bG)+c(\bG',\bG')-2c(\bG,\bG').
\end{align*}
Therefore similarly, if we denote $\bchi_a^{(1)} = \bchi_a^{(3)} = \chi_a(\bG)$ and $\bchi_a^{(2)} = \bchi_a^{(4)} = \chi_a(\bG')$, then define $\Psi$ as
\begin{equation}\label{eq:fpoly}
    \|f(\bG)-f(\bG')\|_2^2 =: \Psi(\bchi^{(1)},\bchi^{(2)},\bchi^{(3)},\bchi^{(4)}) = \Phi(\bchi^{(1)},\bchi^{(3)})+\Phi(\bchi^{(2)},\bchi^{(4)})-\Phi(\bchi^{(1)},\bchi^{(4)})-\Phi(\bchi^{(2)},\bchi^{(3)}),
\end{equation}
where $\Psi(\cdot)$ is a multi-linear polynomial on $4\binom{n}{2}$ variables $\bchi^{(1)}$, $\bchi^{(2)}$, $\bchi^{(3)}$, and $\bchi^{(4)}$. Let $\bZ^{(1)}$, $\bZ^{(2)}$, $\bZ^{(3)}$, and $\bZ^{(4)}$ be length $\binom{n}{2}$ mean-$0$ jointly Gaussian vectors such that they are independent of the $\bchi$'s and has the same covariance structure. For simplicity, write $\vec\bchi = (\bchi^{(1)}$, $\bchi^{(2)}$, $\bchi^{(3)}$, $\bchi^{(4)})$ and $\vec \bZ = (\bZ^{(1)}$, $\bZ^{(2)}$, $\bZ^{(3)}$, $\bZ^{(4)})$.

\begin{lemma}\label{lem:hinfluence}
    Let $h:\mathbb{R} \to \mathbb{R}$ be a bounded function with bounded first three derivatives. Then
    \begin{align*}
        |\E h(\Psi(\vec \bchi))- \E h(\Psi(\vec \bZ))|\leq O\left( \frac{\|h'''(x)\|_{\infty}}{n\sqrt{q}} \right).
    \end{align*}
\end{lemma}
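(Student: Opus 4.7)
The plan is to run a Lindeberg-style interpolation argument, swapping the Bernoulli-Fourier vector $\vec \bchi$ with the Gaussian vector $\vec\bZ$ one edge at a time. Since the $\bchi$'s at distinct edges are independent (and the same for the $\bZ$'s), and since the 4-tuple $(\bchi_a^{(1)}, \bchi_a^{(2)}, \bchi_a^{(3)}, \bchi_a^{(4)})$ at each edge $a$ shares its mean and covariance with $(\bZ_a^{(1)}, \dots, \bZ_a^{(4)})$ by construction, the telescoping sum over edge-swaps will produce only a third-order Taylor remainder at each step. This is precisely the setting of the invariance principle in \cite{caravenna2023critical}, which we intend to invoke as a black box once the relevant influences are controlled.

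Concretely, fix an ordering of the edges and for $k = 0, 1, \ldots, \binom{n}{2}$ let $\vec W^{(k)}$ be the hybrid vector using Gaussian values on the first $k$ edges and Bernoulli values on the rest, so that
\[
\E h(\Psi(\vec\bchi)) - \E h(\Psi(\vec\bZ)) = \sum_{k} \Bigl[\E h(\Psi(\vec W^{(k-1)})) - \E h(\Psi(\vec W^{(k)}))\Bigr].
\]
At a single swap at edge $a$, multilinearity of $\Psi$ in each coordinate $x_a^{(i)}$ lets us write $\Psi(\vec W) = \sum_{S \subseteq \{1,2,3,4\}} \bigl(\prod_{i \in S} x_a^{(i)}\bigr) \Psi^{(S)}$ with $\Psi^{(S)}$ independent of the edge-$a$ block. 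A Taylor expansion of $h$ at $\Psi^{(\emptyset)}$ to second order, combined with the fact that first- and second-order product moments of the block match between $\bchi$ and $\bZ$, makes the degree-$\le 2$ Taylor terms cancel in expectation across the swap. What remains at each edge is a third-order remainder bounded by $\tfrac{1}{6}\|h'''\|_\infty \bigl(|\E_{\bchi}[(\Psi - \Psi^{(\emptyset)})^3]| + |\E_{\bZ}[(\Psi - \Psi^{(\emptyset)})^3]|\bigr)$.

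Expanding the cube, using independence of the edge-$a$ block from $\Psi^{(S)}$, and applying the direct computation $\E[|\bchi_a^{(i)}|^3] = O(1/\sqrt{q})$ (with $O(1)$ for the Gaussian analogue), each swap contributes at most $O(\|h'''\|_\infty \cdot q^{-1/2} \cdot \mathsf{Inf}_a(\Psi)^{3/2})$, where $\mathsf{Inf}_a(\Psi) := \sum_{\alpha \ni a} \hat \Psi_\alpha^2$ is the edge-$a$ influence (summed across the 4-block). Summing over edges, it therefore suffices to show $\sum_a \mathsf{Inf}_a(\Psi)^{3/2} = O(1/n)$. Since $\Psi$ is multilinear of total degree at most $2D$, the normalization $\E\|f(\bG)\|_2^2 = 1$ yields $\sum_a \mathsf{Inf}_a(\Psi) \le (2D)\Var(\Psi) = O(1)$. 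The $S_{[n]\setminus\{1,2\}}$-symmetry of $f$ transfers to $\Psi$ and forces $\mathsf{Inf}_a(\Psi)$ to take a common value across all edges $a$ not incident to $\{1,2\}$, so each such influence is of order $1/\binom{n}{2} = O(1/n^2)$. Then
\[
\sum_a \mathsf{Inf}_a(\Psi)^{3/2} \le \max_a \mathsf{Inf}_a(\Psi)^{1/2} \cdot \sum_a \mathsf{Inf}_a(\Psi) = O(1/n),
\]
and multiplication by the $O(1/\sqrt{q})$ prefactor yields the claimed bound.

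The main obstacle is the final step: verifying that edges touching the distinguished vertices $\{1,2\}$ do not secretly dominate the influence sum, since they form a separate orbit under $S_{[n]\setminus\{1,2\}}$. Here the active graph shape decomposition of $f$ is what saves us---the restriction that components of $H$ containing $s$ or $t$ must be trees limits how much Fourier weight can concentrate near the special vertices---and the bookkeeping of $|\hat f_H|$ in terms of $n^{-(v_H + u_H)/2}$ from the normalization identity will be needed to turn this into the required uniform influence bound on all $O(n)$ special edges.
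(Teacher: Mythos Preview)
Your Lindeberg/interpolation framework is the right skeleton, and indeed matches the black-box invocation of \cite{caravenna2023critical} in the paper. The gap is in the step where you pass from the third-order Taylor remainder at edge $a$ to the bound $O(q^{-1/2}\,\mathsf{Inf}_a(\Psi)^{3/2})$. After separating the edge-$a$ block from the rest, you still need to control $\E\bigl[|\Psi^{(S_1)}\Psi^{(S_2)}\Psi^{(S_3)}|\bigr]$, i.e.\ third moments of the partial derivatives $\Psi^{(S)}$, and you are implicitly replacing these by $(\E[(\Psi^{(S)})^2])^{3/2}$. That replacement is a hypercontractivity statement for degree-$O(D)$ polynomials in the hybrid ensemble, and it fails here: the centred sparse Bernoulli characters satisfy $\E[\bchi_a^4]\asymp q^{-1}$, so $\|\bchi_a\|_4/\|\bchi_a\|_2\asymp q^{-1/4}\to\infty$, and there is no $n$-free hypercontractive constant for polynomials in these variables. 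The same issue undermines the claim $\Var(\Psi)=O(1)$: this is $\E[\|f(\bG)-f(\bG')\|^4]-(\cdots)^2$, a fourth moment of a degree-$D$ polynomial in sparse Bernoullis, and is not $O(1)$ by normalization alone.

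This is exactly why the paper does not work with $L^2$ influences at all. Instead it bounds $\E\bigl[|\partial_{a,k}\Psi|^4\bigr]$ \emph{directly}, by writing the fourth moment as a sum over $8$-tuples of shapes $H_1,\dots,H_8$ and labelings, tracking how many vertices survive in the ``combination graph'' via a purely combinatorial lemma (the one relating $\nu_{\combo}$ to $\nu_{\sep}$, $\mu_{\sep}$, $\mu_{\combo}$), and absorbing each excess edge-multiplicity into a factor of $q^{-1/2}$ from \pref{lem:momentbound}. That direct count is what produces the $O(n^{-4})$, $O(n^{-3})$, $O(n^{-2})$ bounds on the fourth moment for the three edge-types (generic, incident to one of $\{1,2\}$, the edge $(1,2)$ itself), and it is not recoverable from the influence-squared bookkeeping you propose. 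Your symmetry argument for equalising influences on generic edges is correct as far as it goes, but it is operating on the wrong quantity.
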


\begin{proof}[Proof of \pref{lem:hinfluence}]
For $\mathsf{s},\mathsf{t} \in [0,1]$, $k \in \{1,2,3,4\}$, and any edge $a$, we define $\bW^{(k),a}_{\mathsf{s},\mathsf{t}}$ to be a length $\binom{n}{2}$ vector such that for any edge $e$,
\begin{align*}
    \left(
    \bW^{(k),a}_{\mathsf{s},\mathsf{t}}\right)_{e} = \mathsf{s}\sqrt{\mathsf{t}}\bchi^{(k)}_e \Ind(e=a)+\sqrt{\mathsf{t}}\bchi^{(k)}_e\Ind(e\neq a) + \sqrt{1-\mathsf{t}}\bZ_e.
\end{align*}
Write $\vec \bW^{a}_{\mathsf{s},\mathsf{t}} = (\bW^{(1),a}_{\mathsf{s},\mathsf{t}},\bW^{(2),a}_{\mathsf{s},\mathsf{t}},\bW^{(3),a}_{\mathsf{s},\mathsf{t}},\bW^{(4),a}_{\mathsf{s},\mathsf{t}})$.
By a result of \cite{caravenna2023critical}, it suffices to show that the expected fourth moment of gradients of functions in $\bW$ are bounded.
\begin{lemma}[Lemma A.4 in \cite{caravenna2023critical}]\label{lem:influence}
    Let $h:\mathbb{R} \to \mathbb{R}$ be a bounded function with bounded first three derivatives. Then there exists an absolute constant $C$ such that 
    \begin{align*}
        |\E h(\Psi(\vec \bchi))- \E h(\Psi(\vec \bZ)))|\leq C\|h'''(x)\|_{\infty} \sup_a\E[|\chi_a(\bG)|^3] \sum_{a} \sum_{k=1}^4 \sup_{\mathsf{s},\mathsf{t} \in [0,1]} \E \left[ |\partial_{a,k} \Psi(\vec \bW^{a}_{\mathsf{s},\mathsf{t}})|^3 \right],
    \end{align*}
here if we write $\Psi(\cdot)$ as a function on $x^{(1)}, x^{(2)}, x^{(3)}, x^{(4)}$ variables, then $\partial_{a,k}$ denotes the partial derivative with respect to $x^{(k)}_a$.
\end{lemma}
The term $\E[|\chi_a(\bG)|^3] = \frac{1-2q}{\sqrt{q(1-q)}}$ by direct computation. It then remains to figure out the last expectation term. By H\"older's inequality and  Equation \eqref{eq:fpoly}, we know that for some absolute constant $C$,
\begin{equation}\label{eq:holder}
    \sup_{\mathsf{s},\mathsf{t} \in [0,1]} \E \left[ |\partial_{a,k} \Psi(\vec \bW^{a}_{\mathsf{s},\mathsf{t}})|^3 \right] \leq  \sup_{\mathsf{s},\mathsf{t} \in [0,1]} \E \left[ |\partial_{a,k} \Psi(\vec \bW^{a}_{\mathsf{s},\mathsf{t}})|^4 \right]^{3/4} \leq C\sum_{i\in \{1,2\}} \sum_{j\in \{3,4\}} \sup_{\mathsf{s},\mathsf{t} \in [0,1]} \E \left[ |\partial_{a,k} \Phi( \bW^{(i),a}_{\mathsf{s},\mathsf{t}}, \bW^{(j),a}_{\mathsf{s},\mathsf{t}})|^4 \right]^{3/4}.
\end{equation}
Now we bound the fourth moment. Note that the partial derivative $\partial_{a,k} \Phi( \bW^{(i),a}_{\mathsf{s},\mathsf{t}}, \bW^{(j),a}_{\mathsf{s},\mathsf{t}})$ is only non-zero when $k = i$ or $j$ and if so, $\partial_{a,i}\Phi( \bW^{(i),a}_{\mathsf{s},\mathsf{t}}, \bW^{(j),a}_{\mathsf{s},\mathsf{t}}) = \partial_{a,j}\Phi( \bW^{(i),a}_{\mathsf{s},\mathsf{t}}, \bW^{(j),a}_{\mathsf{s},\mathsf{t}})$, so we will drop the $k$ in the subscript and only use $\partial_a$. 
To simplify notation, we use $\bchi$ and $\tilde\bchi$ to denote $\bchi^{(i)}$ and $\bchi^{(j)}$, and similarly for $\bZ$ and $\bW$. So when $(i,j) = (1,3)$ or $(2,4)$, the correlation between $\bchi$ and $\tilde\bchi$ is $1$, and when $(i,j) = (1,4)$ or $(2,3)$, the correlation may be less than one (but non-negative). 
We will give a bound that applies to any correlation value. 
Recall that by definition,
\begin{align*}
\Phi(\bW^{a}_{\mathsf{s},\mathsf{t}},\tilde\bW^{a}_{\mathsf{s},\mathsf{t}}) & = \sum_{i,j} \sum_{H_1,H_2 \in \calH} \hat f_{H_1} \hat f_{H_2}\sum_{\ell_1 \in \calL_{H_1,i,j}, \ell_2 \in \calL_{H_2,i,j}} (\bW^{a}_{\mathsf{s},\mathsf{t}})_{\ell_1(H_1)} (\tilde\bW^{a}_{\mathsf{s},\mathsf{t}})_{\ell_2(H_2)}.
\end{align*}
Thus 
\begin{align*}
\partial_a\Phi(\bW^{a}_{\mathsf{s},\mathsf{t}},\tilde\bW^{a}_{\mathsf{s},\mathsf{t}})
     = \sum_{i,j} \sum_{H_1,H_2 \in \calH} \hat f_{H_1} \hat f_{H_2}\sum_{\substack{\ell_1 \in \calL_{H_1,i,j}, \ell_2 \in \calL_{H_2,i,j}\\ a\in \ell_1(H_1)}} (\bW^{a}_{\mathsf{s},\mathsf{t}})_{\ell_1(H_1)\backslash a} (\tilde\bW^{a}_{\mathsf{s},\mathsf{t}})_{\ell_2(H_2)},
\end{align*}
where each term $(\bW^{a}_{\mathsf{s},\mathsf{t}})_{\ell_1(H_1)\backslash a} (\tilde\bW^{a}_{\mathsf{s},\mathsf{t}})_{\ell_2(H_2)}$ equals
\begin{equation}\label{eq:partial}
    \prod_{b_1 \in \ell_1(H_1)\backslash a}(\sqrt{\mathsf{t}}\bchi_{b_1}+\sqrt{1-\mathsf{t}}\bZ_{b_1})  \prod_{b_2 \in \ell_2(H_2)}(\mathsf{s}\sqrt{\mathsf{t}}\tilde\bchi_{b_2} \Ind({b_2}=a)+\sqrt{\mathsf{t}}\tilde\bchi_{b_2}\Ind({b_2}\neq a) + \sqrt{1-\mathsf{t}}\tilde\bZ_{b_2}).
\end{equation}
We compute
\begin{align*}
    \E\partial_a \Phi(\bW^{a}_{\mathsf{s},\mathsf{t}},\tilde\bW^{a}_{\mathsf{s},\mathsf{t}})^4
    &= \E\left(\sum_{i,j} \sum_{H_1,H_2 \in \calH} \hat f_{H_1} \hat f_{H_2}\sum_{\substack{\ell_1 \in \calL_{H_1,i,j}, \ell_2 \in \calL_{H_2,i,j}\\ a\in \ell_1(H_1)}} (\bW^{a}_{\mathsf{s},\mathsf{t}})_{\ell_1(H_1)\backslash a} (\tilde\bW^{a}_{\mathsf{s},\mathsf{t}})_{\ell_2(H_2)} \right)^4\\
    & = \sum_{H_1, \cdots, H_8 \in \calH} \prod_{k=1}^8 \hat f_{H_k} \sum_{\substack{i_1, i_2, i_3, i_4\\j_1, j_2, j_3, j_4}} \sum_{\substack{\ell_1 \in \calL_{H_1,i_1,j_1}, \ell_2 \in \calL_{H_2,i_1,j_1}, a\in \ell_1(H_1)\\ \ell_3 \in \calL_{H_3,i_2,j_2}, \ell_4 \in \calL_{H_4,i_2,j_2} , a\in \ell_3(H_3)\\ \ell_5 \in \calL_{H_5,i_3,j_3}, \ell_6 \in \calL_{H_6,i_3,j_3} , a\in \ell_5(H_5)\\ \ell_7 \in \calL_{H_7,i_4,j_4}, \ell_8 \in \calL_{H_8,i_4,j_4} , a\in \ell_7(H_7) } } \E \left[ \prod_{k=1}^4 (\bW^{a}_{\mathsf{s},\mathsf{t}})_{\ell_{2k-1}(H_{2k-1})\backslash a} (\tilde\bW^{a}_{\mathsf{s},\mathsf{t}})_{\ell_{2k}(H_{2k})} \right].
\end{align*}
To further simplify, we use equation \eqref{eq:partial} to expand the product. 
Note that since we have assumed the correlation is non-negative, all the terms have non-negative expectation, therefore we can bound $\mathsf{s}\sqrt{\mathsf{t}}$ by $\sqrt{\mathsf{t}}$ and have that
\begin{align*}
    \E \left[ \prod_{k=1}^4 (\bW^{a}_{\mathsf{s},\mathsf{t}})_{\ell_{2k-1}(H_{2k-1})\backslash a} (\tilde\bW^{a}_{\mathsf{s},\mathsf{t}})_{\ell_{2k}(H_{2k})} \right] \leq \E \left[ \prod_{k=1}^4 (\sqrt{\mathsf{t}}\bchi+\sqrt{1-\mathsf{t}}\bZ)_{\ell_{2k-1}(H_{2k-1})\backslash a} (\sqrt{\mathsf{t}}\tilde\bchi+\sqrt{1-\mathsf{t}}\tilde\bZ)_{\ell_{2k}(H_{2k})} \right].
\end{align*}
Next we control the right hand side.
\begin{lemma}\label{lem:momentbound}
    For any $k_1,k_2\geq 0$ such that $2\leq k= k_1+k_2 \leq 8$, and any edge $b$,
    \begin{align*}
        \E\left[ (\sqrt{\sft}\bchi_b + \sqrt{1-\sft}Z_b)^{k_1} (\sqrt{\sft}\tilde\bchi_b + \sqrt{1-\sft}\tilde\bZ_b)^{k_2} \right] \leq 2 (q/2)^{-(k-2)/2}.
    \end{align*}
\end{lemma}
\begin{proof}[Proof of \pref{lem:momentbound}]
    By direct computation, $\E\left[ \bchi_b \right] = 0$ and for any $2 \leq k \leq 8$,
    \begin{align*}
        \E\left[ \bchi_b^k \right] = \left(\frac{1-q}{\sqrt{q(1-q)}} \right)^k q + \left(\frac{-q}{\sqrt{q(1-q)}} \right)^k (1-q) \leq \frac{(1-q)^{k-1}+q^{k-1}}{(q(1-q))^{(k-2)/2}} \leq q^{-(k-2)/2},
    \end{align*}
    where the last inequality is equality for $k=2$ and when $k\geq 3$, since $q = o(1)$, $(1-q)^{k-1}+q^{k-1} = 1-q(k-1) + O(q^2) < 1-q (k-2)/2 \leq (1-q)^{(k-2)/2}$. For correlated terms, we let $\bchi$ and $\tilde\bchi$ be $\rho$ correlated. Then we write $p_1 = \rho q(1-q)+q^2$, $p_2 = \rho q(1-q)+(1-q^2)$, and $p_3 = (1-\rho)q(1-q)$, and denote $q_1 = (1-q)/\sqrt{q(1-q)}$ and $q_2 = -q/\sqrt{q(1-q)}$. For $k_1,k_2\geq 0$ such that $2\leq k= k_1+k_2 \leq 8$,
    \begin{align*}
        &\E\left[ \bchi_b^{k_1} \tilde\bchi_b^{k_2} \right] = q_1^k p_1 + q_2^k p_2 + q_1^{k_1} q_2^{k_2} p_3 + q_2^{k_1}  q_1^{k_2} p_3 \leq q_1^k p_1 + |q_2|^k p_2 + (q_1^k+|q_2|^k) p_3 \\
        &= \frac{(1-q)^{k-1}+q^{k-1}}{(q(1-q))^{(k-2)/2}} \leq q^{-(k-2)/2}.
    \end{align*}
    For the Gaussian term $\bZ_b$, $\E[\bZ_b] = 0$, $\E[\bZ_b^2] = 1$ and for any $3\leq k \leq 8$, $\E[\bZ_b^k] \leq 105 \leq q^{-(k-2)/2}$. And by Wick's theorem, this is also true for correlated terms that $\E[\bZ_b^{k_1} \tilde\bZ_b^{k_2}] \leq \E[\bZ_b^{k_1+k_2}] \leq q^{-(k-2)/2}$ for $2\leq k= k_1+k_2 \leq 8$. Now we claim that for any $0\leq c_1\leq k_1$ and $0\leq c_2\leq k_2$, with $2\leq k = k_1+k_2 \leq 8$, we have
    \begin{align*}
        \E \left[ \bchi_b^{c_1} \tilde\bchi_b^{c_2} \right] \E \left[ \bZ_b^{k_1-c_1} \tilde\bZ_b^{k_2-c_2} \right] \leq q^{-(k-2)/2}.
    \end{align*}
    If $c_1+c_2 \leq 2$ or $k_1+k_2-c_1-c_2 \leq 2$, then the above holds by considering different cases and combining the previous estimates for $\E \left[ \bchi_b^{c_1} \tilde\bchi_b^{c_2} \right]$ and $\E \left[ \bZ_b^{k_1-c_1} \tilde\bZ_b^{k_2-c_2} \right]$. Else, 
    \begin{align*}
        \E \left[ \bchi_b^{c_1} \tilde\bchi_b^{c_2} \right] \E \left[ \bZ_b^{k_1-c_1} \tilde\bZ_b^{k_2-c_2} \right] \leq q^{-(c_1+c_2-2)/2} q^{-(k_1+k_2-c_1-c_2-2)/2} \leq q^{-(k-4)/2} \leq q^{-(k-2)/2}.
    \end{align*}
    Therefore,
    \begin{align*}
        &\E\left[ (\sqrt{\sft}\bchi_b + \sqrt{1-\sft}\bZ_b)^{k_1} (\sqrt{\sft}\tilde\bchi_b + \sqrt{1-\sft}\tilde\bZ_b)^{k_2} \right] \\
        &= \sum_{c_1 = 0}^{k_1}\binom{k_1}{c_1} \sqrt{\sft}^{c_1} \sqrt{1-\sft}^{k_1-c_1} \sum_{c_2 = 0}^{k_2}\binom{k_2}{c_2} \sqrt{\sft}^{c_2} \sqrt{1-\sft}^{k_2-c_2} \E \left[ \bchi_b^{c_1} \tilde\bchi_b^{c_2} \right] \E \left[ \bZ_b^{k_1-c_1} \tilde\bZ_b^{k_2-c_2} \right]\\
        & \leq \sum_{c_1 = 0}^{k_1}\binom{k_1}{c_1} \sqrt{\sft}^{c_1} \sqrt{1-\sft}^{k_1-c_1} \sum_{c_2 = 0}^{k_2}\binom{k_2}{c_2} \sqrt{\sft}^{c_2} \sqrt{1-\sft}^{k_2-c_2} q^{-(k-2)/2} \\
        & \leq (\sqrt{\sft}+\sqrt{1-\sft})^k q^{-(k-2)/2} \leq 2 (q/2)^{-(k-2)/2},
    \end{align*}
    which completes the proof of \pref{lem:momentbound}.
\end{proof}
Finally, we will require the following combinatorial lemma for bounding the number of summands for each collection of subgraphs.
\begin{lemma}\label{lem:combo-graph}
   Suppose $H_1,\ldots,H_m \in \calH$.
   Let $H_\sep=\bigcup_{i=1}^m H_i$ have $\nu_{\sep}$ vertices and $\mu_\sep$ edges (the special vertices $s,t$, if present in several of the $H_i$, are counted only with multiplicity one).
   For each $i \in [m]$, let $\ell_i:V(H_i) \hookrightarrow \N$ be a one-to-one labeling of the vertices, satisfying $\ell_i(s) = 1$ and $\ell_i(t) = 2$. 
   Call $H_{\combo}$ the labeled simple graph $\bigcup_{i=1}^m \ell_i(H_i)$, with $\nu_{\combo}$ vertices, and $\mu_\combo$ edges. 
   Then if every edge in $H_{\combo}$ is covered at least twice,
   \[
        \nu_\combo \le \frac{1}{2}\left(\nu_\sep  +  \Ind[s \in V(H_\sep)] + \Ind[t \in V(H_\sep)] - (\mu_\sep - 2\mu_\combo)\right).
   \]
\end{lemma}
\begin{proof}
    If a graph has $\nu$ vertices, $\mu$ edges, $\kappa$ components, then
    \[
    \nu = \kappa + \mu - \gamma,
    \]
    where $\gamma$ is the number of edges one must remove to obtain the spanning forest of the graph.
    We will apply this identity to $H_\combo$, then use the fact that each $H_i$ is of uniform density at most $1$ to relate $\kappa_\combo,\mu_\combo,\gamma_\combo$ to the related quantities of the $H_i$.
    
    First, some observations about edges. 
    Say that $H_\combo$ has $\tilde{\mu}_\ell$ edges covered by $\ell$ edges in $H_\sep$ for each $\ell \in \{2,\ldots,m\}$.
    Since each edge in $H_\sep$ maps to an edge in $H_\combo$, and each edge in $H_\combo$ is covered twice, $\tilde \mu_1 = 0$, so
    \[
    \mu_\sep = \sum_{\ell=2}^m \ell\tilde{\mu}_\ell \quad \text{ and }\quad \mu_\combo = \sum_{\ell=2}^m \tilde{\mu}_\ell.
    \]

    Now, some observations about components.
    In $H_\sep$, every component comes from some distinct $H_i$, except for up to two components $C_s,C_t$ which contain the special vertices $s$ and $t$; these may be composed of several $H_i$.
    Since the $H_i \in \calH$, and since all graphs in $\calH$ have the property that the components including $s$ or $t$ must be disjoint and trees, the components $C_s,C_t$ of $H_\sep$ are also trees (or possibly are just the empty graph).
    We will say that in $H_\combo$, for each $\ell \in \N$, there are $\tilde{\kappa}_\ell$ components which result from combining $\ell$ components from the $H_\sep$.
    Since by assumption each edge in $H_\combo$ is covered at least twice, and each $H_i$ has all-distinct edges, $\tilde{\kappa}_1 \le \Ind[C_s\neq \emptyset] + \Ind[C_t \neq \emptyset]$, as any other component in $H_\combo$ has to combine at least two components to be double-covered.
    Thus,
    \[
    \kappa_\combo = \sum_{\ell=1}^\infty \tilde{\kappa}_\ell, \quad \tilde\kappa_1 \le \Ind[C_s \neq \emptyset] + \Ind[C_t \neq \emptyset], \quad \text{ and }\quad 
        \kappa_\sep = \sum_{\ell=2}^\infty \ell \tilde\kappa_\ell.
    \]
    
    Finally, we make some observations about cycles.
    Let $\xi_\ell$ be the number of components in $H_\combo$ which contain the image of exactly $\ell$ cycles from $H_\sep$.
    Then because each connected component in $H_\sep$ contains at most once cycle, $\gamma_\sep = \sum_{\ell=1}^m \ell \xi_\ell$.
    We also have that 
    \[
    \gamma_\combo \ge \sum_{\ell=1}^m \xi_\ell,
    \]
    as one must remove at least one edge from each cycle in $H_\sep$ to get a forest, and if there are only $\ell$ cycles which map together to a component of $H_\combo$, then at least one edge must be removed.

    Now, we have that
    \begin{align*}
    \nu_\combo 
    &= \kappa_\combo + \mu_\combo - \gamma_\combo\\
    &\le \sum_{\ell=1}^\infty \tilde{\kappa}_\ell + \sum_{\ell=2}^m \tilde\mu_\ell - \sum_{\ell=1}^m \xi_\ell\\
    &= \sum_{\ell=1}^\infty \tilde{\kappa}_\ell + \sum_{\ell=2}^m \tilde\mu_\ell - \sum_{\ell=1}^m \xi_\ell + \frac{1}{2}\left(\kappa_\sep - \sum_{\ell=2}^\infty \ell \tilde \kappa_\ell \right) + \frac{1}{2}\left(\mu_\sep - \sum_{\ell=2}^m \ell \tilde{\mu}_\ell \right) + \frac{1}{2}\left(\sum_{\ell=1}^m \ell \xi_\ell - \gamma_\sep \right) \\
    &= \frac{1}{2}(\kappa_\sep + \mu_\sep - \gamma_\sep) + \frac{1}{2}\tilde\kappa_1 - \sum_{\ell=3}^\infty \frac{\ell-2}{2} \tilde{\kappa}_\ell - \sum_{\ell=3}^m \frac{\ell-2}{2} \tilde{\mu}_\ell + \sum_{\ell=1}^m \frac{\ell-2}{2} \xi_\ell\\
    &= \frac{1}{2}\nu_\sep + \frac{1}{2}\tilde{\kappa}_1 - \sum_{\ell=3}^m \frac{\ell-2}{2} \tilde{\mu}_\ell + \sum_{\ell=3}^\infty \frac{\ell-2}{2}\left(\xi_\ell - \tilde{\kappa}_\ell\right)\\
    &\le \frac{1}{2}\nu_\sep + \frac{1}{2}(\Ind[C_s \neq \emptyset] + \Ind[C_t \neq \emptyset]) - \frac{1}{2}(\mu_\sep - 2\mu_\combo) + \sum_{\ell=3}^\infty \frac{\ell-2}{2}\left(\xi_\ell - \tilde{\kappa}_\ell\right),
    \end{align*}
    We'll argue that the rightmost sum is at most zero, which will conclude the proof.
    Indeed, note that in $H_\sep$, each connected component contained at most one cycle.
    Hence, any connected component of $H_\combo$ with $\ell$ cycles must be in the image of $\ell$ or more components from $H_\sep$.
    Hence the positive contribution of each component to $\xi_\ell$ is negated by its negative contribution to $\tilde\kappa_{\ell'}$ for some $\ell' \ge \ell$, completing the proof.
\end{proof}

Now back to the computation of the fourth moment. 
Call $U = \cup_{i=1}^8 H_i$ to be the union graph of the unlabeled shapes, in which only the $s,t$ vertices are identified if present. 
Let $\calC(H_1,\cdots,H_8)$ be the set of all possible combination graphs of the $H_i$, such that the edge $a$ must be covered in all $H_1, H_3, H_5, H_7$ and all edges are covered at least twice in $U_a = \left(\bigcup_{i\in \{1,3,5,7\} } H_i\backslash a \right) \cup \left(\bigcup_{i\in \{2,4,6,8\} } H_i \right)$. 
Let $C$ be any graph in $\calC(H_1,\ldots,H_8)$, and write $C_a$ as the corresponding combination graph of $U_a$. 
The special vertices $\{i_{2k-1},i_{2k}\}$ of $H_{2k-1},H_{2k}$ are identified if present, and the special vertices $\{s,t\}$ of all eight graphs are identified if present. 
Let $v_C$ be the number of vertices in $C$ excluding the special vertices $\{s,t\}$, and $s_C$ be the number of $\{s,t\}$ vertices in $C$, and let $e_C$ be the number of edges in $C$, and define $v,s,e$ similarly for $C_a$ and the union graphs $U$ and $U_a$. We will show that 
\begin{equation}\label{eq:edgevertexrelation}
    v_C\leq \frac{1}{2}v_U - \frac{1}{2}e_U + e_{C_a}.
\end{equation}
Indeed, for $C \in \calC(H_1,\cdots,H_8)$, we consider eight graphs $H_1\backslash a, H_2, H_3 \backslash a, H_4, H_5, H_6, H_7, H_8$, and define $U'$ as their union and $C'$ as their combination. Note that for $H_1\backslash a$ and $H_3\backslash a$, we only delete the edge $a$ but not vertices. Then $C'$ and $U'$ satisfy the conditions of \pref{lem:combo-graph}, and we have that
\[
v_{C'} + s_{C'} \le \frac{1}{2}(v_{U'} + s_{U'}) + \frac{1}{2} s_{U'} -\frac{1}{2}(e_{U'} - 2e_{C'}) \,\implies\, v_{C'} \le \frac{1}{2}v_{U'} - \frac{1}{2}e_{U'}+e_{C'}.
\]
Therefore equation \eqref{eq:edgevertexrelation} follows by noticing that $v_C = v_{C'}$, $v_U = v_{U'}$, $e_U = e_{U'}+2$, $e_{C_a} \geq e_{C'}-1$. Let $a_1$ and $a_2$ be the two vertices of the edge $a$. Then by \pref{lem:momentbound} and equation \eqref{eq:edgevertexrelation}, if $\{a_1,a_2\} \cap \{s,t\} = \emptyset$, then
\begin{align*}
    \E\partial_a \Phi(\bW^{a}_{s,t},\tilde\bW^{a}_{s,t})^4  &= \sum_{H_1, \cdots, H_8 \in \calH} \prod_{k=1}^8 \hat f_{H_k} \sum_{\substack{i_1, i_2, i_3, i_4\\j_1, j_2, j_3, j_4}} \sum_{\substack{\ell_1 \in \calL_{H_1,i_1,j_1}, \ell_2 \in \calL_{H_2,i_1,j_1}, a\in \ell_1(H_1)\\ \ell_3 \in \calL_{H_3,i_2,j_2}, \ell_4 \in \calL_{H_4,i_2,j_2} , a\in \ell_3(H_3)\\ \ell_5 \in \calL_{H_5,i_3,j_3}, \ell_6 \in \calL_{H_6,i_3,j_3} , a\in \ell_5(H_5)\\ \ell_7 \in \calL_{H_7,i_4,j_4}, \ell_8 \in \calL_{H_8,i_4,j_4} , a\in \ell_7(H_7) } } \E \left[ \prod_{k=1}^4 (\bW^{a}_{s,t})_{\ell_{2k-1}(H_{2k-1})\backslash a} (\tilde\bW^{a}_{s,t})_{\ell_{2k}(H_{2k})} \right]\\
    &\leq \sum_{H_1, \cdots, H_8 \in \calH} \prod_{k=1}^8 \hat f_{H_k} \sum_{C\in \calC(H_1, \cdots, H_8)} n^{v_{C}-2} \cdot 2(q/2)^{-\frac{1}{2}(e_{U_a} - 2e_{C_a})}\\
    &= \sum_{H_1, \cdots, H_8 \in \calH} \prod_{k=1}^8 O \left( n^{-\frac{1}{2}(v_{H_k}+u_{H_k})}\right) \sum_{C\in \calC(H_1, \cdots, H_8)} n^{\frac{1}{2}(e_{U_a} - 2e_{C_a}) + v_{C}-2 } (nq/2)^{-\frac{1}{2}(e_{U_a} - 2e_{C_a})}\\
    &\leq O \left( n^{-\frac{1}{2}\sum_{k=1}^8 (v_{H_k}+u_{H_k}) +\frac{1}{2}(e_{U_a} - 2e_{C_a}) + v_{C}-2} \right) \leq O(n^{-4}),
\end{align*}
where in line $3$ we have used our bound on the $\hat f_H$, and the final line uses that $-\frac{1}{2}\sum_{k=1}^8 (v_{H_k}+u_{H_k}) +\frac{1}{2}(e_{U_a} - 2e_{C_a}) + v_{C} = -\frac{1}{2} v_U + \frac{1}{2} (e_U - 4) - e_{C_a} + v_{C} \leq -2 $. 

If $|\{a_1,a_2\} \cap \{s,t\}| = 1$, then we can replace in the second line $n^{v_C-2}$ by $n^{v_C-1}$, and get $O(n^{-3})$. Similarly, if $\{a_1,a_2\} = \{s,t\}$, then we can replace in the second line $n^{v_C-2}$ by $n^{v_C}$, and get $O(n^{-2})$. Combining with \pref{lem:influence} and Equation \eqref{eq:holder}, we have that
\begin{align*}
    |\E h(\Psi(\vec \bchi))- \E h(\Psi(\vec \bZ))| &\leq \frac{\|h'''(x)\|_{\infty}}{\sqrt{q}} \left(  n^2 O(n^{-4 \cdot \frac{3}{4}}) + nO(n^{-3 \cdot \frac{3}{4}}) + 1 O(n^{-2 \cdot \frac{3}{4}})\right)
    = O\left( \frac{\|h'''(x)\|_{\infty}}{n\sqrt{q}} \right),
\end{align*}
which completes the proof of \pref{lem:hinfluence}.
\end{proof}

We next prove \pref{thm:smooth}.
\begin{proof}[Proof of \pref{thm:smooth}]
Recall that 
\[
f_{ij}(\bG) = \sum_{H \in \calH} \hat f_{H} \sum_{\ell \in \calL_{H,i,j}} \bchi_{\ell(H)}(\bG).
\]
We do a change of variables and define $g_{ij}(\bchi) = f_{ij}(\bG)$. Then $g$ is also a vector-valued polynomial of degree $D$. And
$\Psi(\vec \bchi) = \|g(\bchi)-g(\bchi')\|_2^2$ and $\Psi(\vec \bZ) = \|g(\bZ)-g(\bZ')\|_2^2$, where $\bchi$ and $\bchi'$ are $\rho$-correlated and the same for $\bZ$ and $\bZ'$. By \pref{lem:hinfluence}, we know that
\begin{equation}\label{eq:smoothed-diff}
    |\E h(\|g(\bchi)-g(\bchi')\|_2^2)- \E h(\|g(\bZ)-g(\bZ')\|_2^2)|\leq O\left( \frac{\|h'''(x)\|_{\infty}}{n\sqrt{q}} \right),
\end{equation}
for bounded $h$ with bounded first three derivatives. We define a smooth function $\phi$
    \begin{equation*}
        \phi(x) = \begin{cases}
            \exp(-\frac{1}{x}), \quad &\text{if } x>0,\\
            0, \quad &\text{if } x\leq 0.
        \end{cases}
    \end{equation*}
    For any $\delta>0$, we define a smooth transition function
    \begin{equation*}
        h_\delta(x) = \frac{\phi(x/\delta)}{\phi(1-x/\delta) + \phi(x/\delta)}.
    \end{equation*}
    Note that $h_\delta$ is smooth, increasing, $h_\delta(x) = 0$ when $x \leq 0$, and $h_\delta(x) = 1$ when $x \geq \delta$. Moreover, there is a universal constant $K$ such that $|h'''_\delta| \leq K/\delta^3$. By Equation \eqref{eq:smoothed-diff}, this implies that
    \begin{align*}
        &\Pr\left[ \|f(\bG)-f(\bG')\|_2^2 \geq \gamma \right] = \Pr\left[ \|g(\bchi)-g(\bchi')\|_2^2 \geq \gamma \right]
        \leq \E \left[ h_\delta \left( \|g(\bchi)-g(\bchi')\|_2^2 -\gamma + \delta \right) \right]\\
        &\leq O\left( \frac{1}{n\sqrt{q} \delta^3} \right) + \E \left[ h_\delta \left( \|g(\bZ)-g(\bZ')\|_2^2 -\gamma + \delta \right) \right]
        \leq O\left( \frac{1}{n\sqrt{q} \delta^3} \right) + \Pr \left[ \|g(\bZ)-g(\bZ')\|_2^2 \geq \gamma-\delta \right].
    \end{align*}
    To control the last probability term, we cite the following result
    \begin{lemma}[Theorem 3.1 in \cite{GJW24}]
        Let $0 \leq \rho \leq 1$. Let $\bZ, \bZ'$ be a pair of standard Gaussian random vectors on $\mathbb{R}^{d}$ that are $\rho$-correlated. Let $g: \mathbb{R}^d \rightarrow \mathbb{R}^k$ be a (deterministic) polynomial of degree at most $D$ with $\E\|g(\bZ)\|_2^2=1$. For any $t\geq(6 e)^D$,
        \[
        \Pr\left(\|g(\bZ)-g(\bZ')\|_2^2 \geq 2 t\left(1-\rho^D\right)\right) \leq \exp \left(-\frac{D}{3 e} t^{1 / D}\right).
        \]
    \end{lemma}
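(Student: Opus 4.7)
The plan is to prove this concentration inequality via a moment method: compute the mean of $\|g(\bZ)-g(\bZ')\|_2^2$ via Hermite analysis, bound its higher moments via Gaussian hypercontractivity, and conclude via Markov's inequality with an optimized choice of moment order.

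First I would expand $g$ in the orthonormal tensor Hermite basis $g(z) = \sum_{|\alpha|\le D} \hat g_\alpha H_\alpha(z)$, where each $\hat g_\alpha \in \R^k$. Using the identity $\E[H_\alpha(\bZ) H_\beta(\bZ')] = \rho^{|\alpha|}\Ind[\alpha=\beta]$ for $\rho$-correlated standard Gaussians, one obtains
\[
\E\|g(\bZ)-g(\bZ')\|_2^2 = 2\sum_{\alpha} \|\hat g_\alpha\|^2 \bigl(1-\rho^{|\alpha|}\bigr) \le 2(1-\rho^D),
\]
using the normalization $\sum_\alpha \|\hat g_\alpha\|^2 = \E\|g(\bZ)\|_2^2 = 1$ together with the monotonicity of $k \mapsto 1-\rho^k$ and $|\alpha|\le D$.

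Next I would control the $L^q$ norm of $T := \|g(\bZ)-g(\bZ')\|_2^2 = \sum_{i=1}^k Y_i^2$, where $Y_i$ is the $i$-th coordinate of $g(\bZ)-g(\bZ')$, a degree-$D$ polynomial in the joint Gaussian vector $(\bZ,\bZ') \in \R^{2d}$. Gaussian hypercontractivity gives $\|Y_i\|_{2q} \le (2q-1)^{D/2}\|Y_i\|_2$ for all $q\ge 1$. Applying Minkowski's inequality in $L^q$ coordinatewise,
\[
\|T\|_q \le \sum_i \|Y_i^2\|_q = \sum_i \|Y_i\|_{2q}^2 \le (2q-1)^D \sum_i \|Y_i\|_2^2 = (2q-1)^D\, \E T.
\]

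Finally, by Markov and the mean bound from the first step,
\[
\Pr\bigl[T \ge 2t(1-\rho^D)\bigr] \le \Pr[T \ge t\,\E T] \le \left(\frac{(2q-1)^D}{t}\right)^{q},
\]
which I would optimize over $q$. Choosing $q = t^{1/D}/(3e)$, which is at least $2$ when $t \ge (6e)^D$, one checks $(2q-1)^D/t \le (2/(3e))^D$, so the right-hand side is bounded by $\exp(-Dq \log(3e/2))$; since $\log(3e/2) > 1$, this yields $\exp(-\tfrac{D}{3e}\,t^{1/D})$.

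The main obstacle is the moment bound in the second step: since $T$ is a sum of squares of polynomials rather than a single scalar polynomial, one cannot just apply hypercontractivity to $T$ and divide by $\E T$. The combination of Minkowski with coordinatewise hypercontractivity is the crucial move that avoids losing a factor depending on $k$. Given that tool, the remaining optimization is routine; the hypothesis $t \ge (6e)^D$ is precisely what ensures the optimizing $q$ is large enough to make the Markov estimate meaningful.
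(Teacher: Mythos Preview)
Your proof is correct and follows precisely the hypercontractivity-based approach the paper alludes to (the lemma is quoted without proof from \cite{GJW24}, which the paper describes as ``a straightforward argument based on hypercontractivity''). Your three steps---the Hermite-expansion mean bound, the Minkowski-plus-coordinatewise-hypercontractivity moment bound $\|T\|_q \le (2q-1)^D\,\E T$, and the Markov/optimization step with $q = t^{1/D}/(3e)$---are exactly the standard route, and the arithmetic checks out.
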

    For $\rho \geq 1-\frac{1}{T}$, we take $t = \frac{\gamma}{3(1-\rho^D)}$ and $\delta = \gamma/3$. Then 
    \begin{align*}
        \Pr\left[ \|f(\bG)-f(\bG')\|_2^2 \geq \gamma \right] &\leq O\left( \frac{1}{n\sqrt{q} \gamma^3} \right) + \exp\left(-\frac{D}{3 e} \left( \frac{\gamma}{3(1-\rho^D)} \right)^{1 / D}\right)\\
        & \leq O\left( \frac{1}{n\sqrt{q} \gamma^3} \right) + \exp\left(-\frac{D}{3 e} \left( \frac{\gamma T}{3D} \right)^{1 / D}\right).
    \end{align*}
This, in combination with our normalization of $f$, completes the proof of \pref{thm:smooth}.
\end{proof}

\ifnum\anon=0
\subsection*{Acknowledgements}
We thank David Gamarnik, Andrea Montanari, Guy Bresler, and Sam Hopkins for helpful conversations.
We also thank anonymous reviewers for helpful comments on the manuscript.
\fi
\bibliographystyle{alpha}
\bibliography{refs}

\end{document}